\newcommand\reallywidehat[1]{\arraycolsep=0pt\relax%
\begin{array}{c}
\stretchto{
  \scaleto{
    \scalerel*[\widthof{\ensuremath{#1}}]{\kern-.5pt\bigwedge\kern-.5pt}
    {\rule[-\textheight/2]{1ex}{\textheight}} %WIDTH-LIMITED BIG WEDGE
  }{\textheight} %
}{0.5ex}\\           % THIS SQUEEZES THE WEDGE TO 0.5ex HEIGHT
#1\\                 % THIS STACKS THE WEDGE ATOP THE ARGUMENT
\rule{-1ex}{0ex}
\end{array}
}
\newtheorem{theorem}{Theorem}[section]
\newtheorem{lemma}[theorem]{Lemma}%[section]
\newtheorem{cor}[theorem]{Corollary}%[section]
\newtheorem{proposition}[theorem]{Proposition}%[section]
\newtheorem{defi}{Definition}[section]
\newtheorem*{rem}{Remark}
\theoremstyle{remark}
\newtheorem{example}{Example}[section]
\def\={\hspace{-3mm}&=&\hspace{-3mm}}
\title{The %(Fourier-)
Borel transform and linear nonlocal equations: applications to zeta-nonlocal field models}
\author{A. Ch\'avez$^1$\footnote{E-mail: alancallayuc@gmail.com}~, H. Prado$^2$\footnote{E-mail: humberto.prado@usach.cl}~,
and E.G. Reyes$^3$\footnote{E-mail: e\_g\_reyes@yahoo.ca ; enrique.reyes@usach.cl} \\
%\small{$^{1}$Departamento de Matem\'aticas, Facultad de Ciencias,} \\
%\small{ Universidad de Chile, Casilla 653, Santiago, Chile}\\
%\small{and}\\
%\small{Santiago, Chile} \\
%\smallskip
\small{$^{1}$Departamento de Matem\'aticas,} \\
\small{ Universidad Nacional de Trujillo,}\\
\small{Av. Juan Pablo II s/n. Trujillo-Per\'u}\\
\small{$^{2,3}$ Departamento de Matem\'atica y Ciencia de la
Computaci\'on,} \\
\small{ Universidad de Santiago de Chile }\\
\small{Casilla 307 Correo 2, Santiago, Chile }}
\begin{document}

\maketitle

\begin{abstract}
We define rigorously operators of the form $f(\partial_t)$, in which $f$ is an analytic function
on a simply connected domain. Our formalism is based on the Borel transform on entire functions of exponential type.
We study existence and regularity of real-valued solutions for the nonlocal in time equation
\begin{equation*}
f(\partial_t) \phi = J(t) \; \; , \quad t\in \mathbb{R}\; ,
\end{equation*} %\; \; \; t \geq 0 \; .
and we find its more general solution as a restriction to $\mathbb{R}$ of an entire function of exponential type.
As an important special case, we solve explicitly the linear nonlocal zeta field equation
\begin{equation*}
\zeta(\partial_t^2+h)\phi = J(t)\; ,
\end{equation*}%\; \; t\geq 0\; ,
in which $h$ is a real parameter, $\zeta$ is the Riemann zeta function, and $J$ is an entire function of exponential type.
We also analyse the case in which $J$ is a more general analytic function (subject to some weak technical assumptions).
This case turns out to be rather delicate: we need to re-interpret the symbol $\zeta(\partial_t^2+h)$ and to leave
the class of functions of exponential type. We prove that in this case the zeta-nonlocal equation above admits an
analytic solution on a Runge domain determined by $J$. This

The linear zeta field equation is a linear version of a field model depending on the Riemann zeta function arising
from $p$-adic string theory.
%Moreover, it is known that the Riemann zeta function itself satisfies (formally) a nonlocal
%equation, so that the study of these equations is a natural endeavor.
\end{abstract}

\maketitle

\section{Introduction}\label{sec:Introduction}
\setcounter{equation}{0}
In this paper a {\em nonlocal operator} is an expression of the form $f(\partial_t)$, in which $f$ is an analytic function,
and a {\em nonlocal equation} is an equation in which a nonlocal operator appears. It is of course well-known how to
define the action of $f(\partial_t)$ on a given class of functions
if the ``the symbol" $f$ is a polynomial. It is not obvious how to extend this
definition to more general symbols $f$: for instance, $f$ may be
beyond the reach of classical tools used in the study of
pseudo-differential operators ({\em e.g.}, the derivatives of $f$ may not satisfy appropriated bounds, see \cite{H,DubinskyBook,U}).
We provide a rigorous definition of $f(\partial_t)$ for
a large class of analytic functions $f$ ---which includes the Riemann
zeta function $\zeta$--- in the main body of this work. The theory presented herein includes
symbols which cannot be considered via Laplace transform (as in our previous paper
\cite{CPR_Laplace}). Our main example of such a symbol is $\zeta(\partial_t^2+h)$, see
\cite[Section 6]{CPR_Laplace} and Section 2 below.

\medskip

We frame our discussion within classical analytic function theory. Interestingly, nonlocal
operators appear naturally in this abstract mathematical framework, for example, in the study of
the distribution of zeroes of entire functions, see \cite{CardonDA02,CardonGas05,LevinBook1} and
references therein. We present one result as an illustration:

The Laguerre-P\'olya class, denoted by $\mathcal{L}\mathcal{P}$, is defined as the collection of
entire functions $f$ having only real zeros, and such that $f$ has the following factorization
\cite[sections 2.6 and 2.7]{BoasBook}:
$$
f(z)=cz^m e^{\alpha z- \beta z^2} \prod _{k}\left(1-\dfrac{z}{\alpha_k} \right)e^{z/\alpha_k}\; ,
$$
where $c,\alpha, \beta, \alpha_k $ are real numbers, $\beta \geq 0$, $\alpha_k \not = 0$, $m$ is
a non-negative integer, and $\sum_{k=1}^{\infty}\alpha_k^{-2}< \infty$.

Let $D$ be the differentiation operator and $\phi \in \mathcal{L}\mathcal{P}$; the following
lemma presents one important instance in which the nonlocal operator $\phi(D)$ (formally defined
via power series, see \cite{LevinBook1} and Subsection 3.2 below) is well defined, see
\cite[Theorem 8, p. 360]{LevinBook1}.

\begin{lemma}\label{lemma0}
Let $\phi, f \in \mathcal{L}\mathcal{P}$ such that
$$\phi(z)= e^{-\alpha z^2}\phi_1(z) {\mbox \; and \; } f(z)= e^{-\beta z^2}f_1(z)\; ,$$
where $\phi_1,f_1$ have genus $0$ or $1$ and
$\alpha, \beta \geq 0$. If $\alpha \beta <1/4$, then $\phi(D) f \in \mathcal{L}\mathcal{P}$.
\end{lemma}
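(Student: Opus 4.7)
The plan is the classical two-step strategy of \emph{polynomial approximation followed by Hurwitz's theorem}. The polynomial building block is the Hermite--Poulain theorem: if $P, Q$ are real polynomials both having only real zeros, then $P(D)Q = \sum_{k\ge 0} \frac{P^{(k)}(0)}{k!}\, Q^{(k)}$ is again a real polynomial with only real zeros. This is proved by iterating Rolle's theorem on the factorization $P(D) = c\prod_j (D - a_j)$ with $a_j \in \mathbb{R}$, noting that each operator $D - a_j$ preserves the property of having only real zeros for polynomials in $\mathcal{L}\mathcal{P}$.

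Second, I would approximate $\phi$ and $f$ locally uniformly by real polynomials $P_n$ and $Q_n$ with only real zeros. The canonical products defining $\phi_1$ and $f_1$ (of genus at most one) can be truncated to finitely many factors, yielding polynomials in $\mathcal{L}\mathcal{P}$; the Gaussian factors are absorbed by limits such as $(1 - \alpha z^2/n)^n \to e^{-\alpha z^2}$, whose zeros lie at $\pm\sqrt{n/\alpha}\in \mathbb{R}$. A careful book-keeping produces sequences $P_n \to \phi$ and $Q_n \to f$ uniformly on compact sets, with each $P_n, Q_n \in \mathcal{L}\mathcal{P}$. By the first step, $P_n(D) Q_n \in \mathcal{L}\mathcal{P}$ for every $n$.

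The crucial analytic step, and where the assumption $\alpha\beta < 1/4$ is used decisively, is the locally uniform convergence $P_n(D) Q_n \to \phi(D) f$. The guiding Gaussian identity is
\[
e^{-\alpha D^2}\, e^{-\beta z^2} \;=\; (1 - 4\alpha\beta)^{-1/2} \exp\!\left(-\frac{\beta z^2}{1 - 4\alpha\beta}\right),
\]
which holds precisely when $4\alpha\beta < 1$ (otherwise the underlying Gaussian integral diverges). Combined with the order-one bounds on $\phi_1$ and $f_1$ inherited from the genus hypothesis, this produces a uniform majorant for the Taylor series $\sum_k P_n^{(k)}(0) Q_n^{(k)}(z)/k!$ on any compact disk, and permits passage to the limit. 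Once locally uniform convergence is established, Hurwitz's theorem forces $\phi(D) f$ either to vanish identically or to possess only real zeros; the explicit Gaussian form of the limit then supplies the order-two bound and Hadamard factorization that place it inside $\mathcal{L}\mathcal{P}$.

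The main obstacle I foresee is precisely this third step: producing quantitative Cauchy-type estimates on the coefficients $P_n^{(k)}(0)$ and the derivatives $Q_n^{(k)}(z)$ that are uniform in $n$ and compatible with the Gaussian identity above. The threshold $\alpha\beta = 1/4$ is genuinely sharp, and the delicate part of the argument is showing that the dominating bound survives the approximation and does not blow up at any intermediate stage. Every other ingredient (Hermite--Poulain, Hurwitz, Weierstrass truncation) is structural; the convergence estimate is where the analysis actually happens.
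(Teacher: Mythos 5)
The paper itself offers no proof of this lemma: it is quoted as background from Levin's monograph (\cite[Theorem 8, p.~360]{LevinBook1}), and the authors state they will not use it directly. So the only meaningful comparison is with the classical argument, and your outline is essentially that argument: Hermite--Poulain for polynomials, approximation of $\mathcal{L}\mathcal{P}$ functions by real polynomials with only real zeros, a dominated-convergence step in which $\alpha\beta<1/4$ enters, and Hurwitz at the end. Your Gaussian identity is stated correctly and does pinpoint where the threshold $4\alpha\beta<1$ comes from (already at $z=0$ the pure Gaussian case gives the series $\sum_m \binom{2m}{m}(\alpha\beta)^m$, which converges exactly when $4\alpha\beta<1$).

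Two corrections/completions are needed to turn the sketch into a proof. First, ``order-one bounds'' is not what the genus hypothesis provides: a genus $\le 1$ canonical product with $\sum_k |a_k|^{-2}<\infty$ has order $\le 2$ and \emph{minimal type} with respect to order $2$, i.e. $|\phi_1(z)|\le C_\epsilon e^{\epsilon |z|^2}$ for every $\epsilon>0$, and similarly for $f_1$. With these bounds, Cauchy estimates give, on any compact set $K$,
\begin{equation*}
\frac{|\phi^{(k)}(0)|}{k!}\,\bigl|f^{(k)}(z)\bigr| \;\le\; C_{K,\epsilon}\,\Bigl(2\sqrt{(\alpha+\epsilon)(\beta+\epsilon)}\Bigr)^{k},
\end{equation*}
a geometric majorant precisely because $4\alpha\beta<1$; this is the estimate you defer, and it must be carried out, with the approximants $P_n,Q_n$ (truncated canonical products times $(1-\alpha z^2/n)^n$, etc.) constructed so that the same bounds hold uniformly in $n$. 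Second, the last step is cleaner than you make it: you do not need to extract a Hadamard factorization from the Gaussian formula, since by the Laguerre--P\'olya theorem the class $\mathcal{L}\mathcal{P}$ is exactly the set of non-trivial locally uniform limits of real polynomials having only real zeros; once $P_n(D)Q_n\to\phi(D)f$ locally uniformly and the limit is shown not to vanish identically, membership in $\mathcal{L}\mathcal{P}$ is immediate. With those two points supplied, your plan is the standard proof.
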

\noindent The notion of the genus of a function is explained in \cite[p. 22]{BoasBook}. We have,
see \cite[Theorem 1]{CardonGas05}.
\begin{theorem} \label{ef}
Let $\phi, f \in \mathcal{L}\mathcal{P}$ such that
$$\phi(z)= e^{-\alpha z^2}\phi_1(z) {\mbox \; and \; } f(z)= e^{-\beta z^2}f_1(z) \; ,$$
where  $\phi_1,f_1$ have genus $0$ or $1$
and $\alpha, \beta \geq 0$. If $\alpha \beta <1/4$ and $\phi$ has infinitely many zeros, then $\phi(D) f$ has only simple
and real zeros.
\end{theorem}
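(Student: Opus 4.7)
By Lemma \ref{lemma0}, we already have $\phi(D)f \in \mathcal{L}\mathcal{P}$, so all zeros of $\phi(D)f$ are automatically real; what remains to prove is that every such zero is simple. My plan is to approximate $\phi$ by the truncations of its Hadamard product and to exploit the strictly zero-separating (Rolle-type) nature of each first-order factor in order to rule out multiple zeros in the limit.

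Concretely, I would introduce the truncations
\[
\phi^{(N)}(z) \;=\; e^{-\alpha z^2}\, C z^m \prod_{k=1}^{N}(1-z/\alpha_k)\,e^{z/\alpha_k},
\]
which lie in $\mathcal{L}\mathcal{P}$ and converge to $\phi$ locally uniformly. Under the hypothesis $\alpha\beta<1/4$, the functional calculus developed earlier in the paper yields $\phi^{(N)}(D)f \to \phi(D)f$ locally uniformly on $\mathbb{C}$, and each $\phi^{(N)}(D)f$ lies in $\mathcal{L}\mathcal{P}$ by Lemma \ref{lemma0}. The key structural observation is the similarity identity
\[
(I - D/\alpha)\,g(t) \;=\; -\alpha^{-1}\, e^{\alpha t}\,\frac{d}{dt}\bigl(e^{-\alpha t}g(t)\bigr),
\]
which identifies each first-order factor with an ordinary differentiation, conjugated by the nowhere-vanishing weight $e^{-\alpha t}$. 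Applied to a function in $\mathcal{L}\mathcal{P}$ it therefore reduces the multiplicity at any real point by exactly one and, by Rolle's theorem, interposes a new simple real zero strictly between any two consecutive distinct real zeros.

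Now suppose, for contradiction, that $\phi(D)f$ has a zero of multiplicity $m\geq 2$ at some $t_0 \in \mathbb{R}$. Hurwitz's theorem gives, for every sufficiently large $N$, exactly $m$ real zeros of $\phi^{(N)}(D)f$ (counted with multiplicity) in any fixed small neighbourhood of $t_0$. Iterating the Rolle-type step along the product shows that these zeros must be pairwise distinct once $N$ is large, hence $m$ distinct simple real zeros clustering at $t_0$; adjoining one further factor then inserts yet another real zero strictly between two of them, which violates the Hurwitz count of exactly $m$ zeros near $t_0$, producing the desired contradiction.

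The main obstacle in executing this plan is quantitative. One must control the separation produced by each Rolle step uniformly in $N$, so that the newly inserted zeros genuinely remain inside the fixed neighbourhood of $t_0$ rather than drifting away under the shift factors $e^{D/\alpha_k}$, and so that the cluster cannot ``escape'' the Hurwitz count. This quantitative control should ultimately follow from the convergence $\sum_k \alpha_k^{-2}<\infty$ built into the definition of $\mathcal{L}\mathcal{P}$, together with the condition $\alpha\beta<1/4$ that governs the heat-type factor $e^{-\alpha D^2}$ and secures the local-uniform operator convergence $\phi^{(N)}(D)f\to \phi(D)f$.
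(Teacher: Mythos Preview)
The paper does not contain a proof of this theorem. Theorem~\ref{ef} is quoted from the literature with the attribution ``see \cite[Theorem~1]{CardonGas05}'' as a motivational illustration, and the authors state immediately afterwards that they will not use the result explicitly in the paper. There is therefore no proof in the paper against which to compare your proposal.

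On the proposal itself: the ingredients you assemble --- truncated Hadamard products, the conjugation identity $(I-D/\alpha)g=-\alpha^{-1}e^{\alpha t}\frac{d}{dt}(e^{-\alpha t}g)$, Rolle, and Hurwitz --- are the natural ones, and are indeed close in spirit to how such results are proved in the cited reference. But the contradiction you draw at the end is not well-formed. A single Rolle step applied to a function with $m$ simple real zeros in an interval yields at least $m-1$ zeros strictly between consecutive pairs; it does \emph{not} add a zero on top of the existing $m$, because the original $m$ zeros are in general not retained after differentiation. Hence ``adjoining one further factor'' to $\phi^{(N)}(D)f$ does not force more than $m$ zeros of $\phi^{(N+1)}(D)f$ near $t_0$, and no clash with the Hurwitz count of exactly $m$ arises from this step alone. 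To make the argument go through one needs a sharper, global interlacing statement (each first-order factor strictly interlaces the zero set of the input with that of the output), together with a genuine use of the hypothesis that $\phi$ has \emph{infinitely} many zeros --- that hypothesis is what ultimately forces an unbounded number of approximate zeros to accumulate near $t_0$ as $N\to\infty$, contradicting Hurwitz. Your sketch gestures at the mechanism but does not yet supply it, and the ``main obstacle'' you flag (uniform separation control) is secondary to this structural gap.
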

%
%\noindent Further information on the connection between nonlocal operators and the distribution of zeros of
%entire functions appears in \cite{CardonGas05,LevinBook1} and references therein.

%\smallskip

We will not use this result explicitly in this paper, but zeroes of entire functions {\em are}
crucial in our theory, see for instance Theorem 3.17 and Example 4.1 below, and so we expect that
results such as Theorem \ref{ef} will play a role in future developments. Now, as we stated in
\cite{CPR_Laplace}, our main motivation for the study of nonlocal equations comes from Physics.
Nonlocal operators and equations appear in contemporary physical theories such as string theory,
cosmology and non-local theories of gravity. We refer the reader to
\cite{BK1,BK2,BiswasTir15,GianlucaBook1,LeiFen17} for information on the last two topics. With
respect to string theory, let us mention two important equations:
\begin{equation} \label{padic-1}
p^{a\, \partial^2_t}\phi = \phi^p \; , \; \; \; \; \; a > 0 \; ,
\end{equation}
and
\begin{equation} \label{padic-101}
e^{\, \partial^2_t/4}\phi = \phi^{p^2}\; .
\end{equation}
Equation (\ref{padic-1}) describes the dynamics of the open $p$-adic string for the scalar
tachyon field, while equation (\ref{padic-101}) describes the dynamics of the closed $p$-adic
string for the scalar tachyon field, see {\it e.g.} \cite{AV,BK2,D,Moeller,V,VV,Vla3}.
These equations have been studied formally via integral equations and a ``heat equation" method,
see {\em e.g.} \cite{AV,Moeller,MN,V,VV,Vla4}.

In the intriguing paper \cite{D}, B. Dragovich constructs  a field theory starting from
(\ref{padic-1}) whose equation of motion (in $1+0$ dimension) is
\begin{equation} \label{zeta}
\zeta \left(- \frac{1}{m^2} \, \partial_t^2 + h \right) \phi = U(\phi) \; .
\end{equation}
Here $\zeta$ is the Riemann zeta function as we noted before; $m,h$ are real parameters and
$U$ is some nonlinear function. We understand
the Riemann zeta function as the analytic extension of the infinite series
$$\zeta(s)=\sum_{n=1}^{\infty}\frac{1}{n^s},\; \; Re(s)>1\; ,$$
to the whole complex plane, except $s=1$, where it has a pole of order $1.$
The main aim of this work is the development of a general theory (see section 3) to solve
linear nonlocal in time equations of the form
\begin{equation}\label{Gen_Eq}
f(\partial_t)\phi(t)=g(t)\; , \; \; \; \; \; t \in  \mathbb{R}\; ,
\end{equation}
in which the {\em symbol} $f$ is an arbitrary analytic function on a simply connected domain.
We have presented a rigorous setting for (a restricted class of) equations (\ref{Gen_Eq}) in
\cite{CPR_Laplace}, using Laplace transform. However, in \cite{CPR_Laplace} we also remark that
the important equation
\begin{equation} \label{zeta1}
\zeta \left(- \frac{1}{m^2} \, \partial_t^2 + h \right) \phi = J(t) \; ,
\end{equation}
naturally motivated by (\ref{zeta}), is beyond the reach of our Laplace transform method.
One way to move forward is to use Borel transform, as in \cite{CPR}. However, the analysis of
\cite{CPR} is restricted to symbols $f$ (see Equation (\ref{Gen_Eq})) which are entire functions,
and to right hand side terms $J$  which are functions of exponential type. Equation (\ref{zeta1})
motivates us to remove these two restrictions. This is precisely what we do in this work.

\smallskip

Let $\Omega \subseteq \mathbb{C}$ be a simply connected domain and,   denote by $Exp(\Omega)$ the space of entire functions
of finite exponential type such that its elements have Borel transform with singularities in $\Omega$ (see section 3). Let
$f$ be an holomorphic function in $\Omega$; we use the Borel transform to define rigorously the operator $f(\partial_t)$ on
the space $Exp(\Omega)$, in a way that evokes the definition of classical
pseudo-differential operators via Fourier transform. Then, using
this theory, we
find the most general solution to Equation (\ref{Gen_Eq}) as a restriction to $\mathbb{R}$ of a
function in $Exp(\Omega)$, and we apply our theory to the following (normalized version of (\ref{zeta1}) with our signature
conventions) linear zeta-nonlocal field equation
\begin{equation}\label{eqzeta2}
\zeta(\partial_t^2+h)\phi = J\; .
\end{equation}

Now we expose the organization of this work. In Section 2 we give some preliminary comments which motivate the present
work. In Section 3 we consider a general analytic symbol $f$ and we
define the action of the operator
$f(\partial_t)$ on the space of entire functions of exponential type using Borel transform. We also solve explicitly
Equation (\ref{Gen_Eq}). In Section 4 we apply the theory developed in Section 3 to the linear zeta-nonlocal scalar field
equation (\ref{eqzeta2}): the zeroes of the Riemann zeta function play an important role in representing its solution. Also
in this section, we introduce the space $\mathcal{L}_{>}(\mathbb{R}_+)$ of all real
functions $g$ with domain $[0,+\infty)$ such that there exist their Laplace transform $\mathcal{L}(g)$, and
$\mathcal{L}(g)$ has an analytic extension to an angular contour, and we study and solve equation (\ref{eqzeta2}) for right
hand side
in $\mathcal{L}_{>}(\mathbb{R}_+)$. This study involves some delicate limit procedures which take us outside the
class of functions of exponential type. Finally, in an appendix we formally derive some equations of motion of interest
from a mathematical point of view, including the zeta-nonlocal scalar field proposed by B. Dragovich.

\section{A preliminary discussion}

As it is discussed in the work of Dragovich \cite{D,D1}, see also the appendix of this work, the following nonlocal
equation
\begin{equation}\label{Zeq_04}
\zeta\left( \dfrac{\square}{2m^2}+h \right )\psi= g(\psi)
\end{equation}
in which $g(z)$ is an analytic non-linear function of $z$,
%$g(z) = G(z)'$ and $G(z)$ is the analytic continuation of the series
%$\sum_{n=1}^{\infty}\frac{n^{-h}}{n+1}\psi^{n+1}(z)$,
appears naturally as an interesting mathematical modification of $p$-adic string theory \cite{D,D1,D2}.

Motivated by Equation (\ref{Zeq_04}), we study linear equations in $1+0$ dimensions of the form
\begin{equation} \label{Zeq_044}
\zeta(\partial_t^2+h) \phi = J\; .% \; , \quad t \geq 0 \ ,
\end{equation}
in which we are using a signature so that $\square = \partial_t^2$ simply for comparison purposes with our previous
articles.

In our previous work \cite{CPR_Laplace}, we have studied linear nonlocal equations (and its associated Cauchy problem)
using an approach based on Laplace transforms and the Doetsch representation theorem, see \cite{Doetsch2}:
If $L^p([0,+\infty))$ is the standard $L^p$-Lebesgue space and $H^q(\mathbb{C}_+)$ is the Hardy space, there exists a
correspondence between these spaces determined by the Laplace transform
$\mathcal{L}:L^p([0,+\infty))\to H^q(\mathbb{C}_+)$ for appropriated Lebesgue exponents $p,q$. In this situation,
we obtained exact formulas for the representation of the solution for equations such as (\ref{Gen_Eq}). (The approach
of \cite{CPR_Laplace} supersedes previous work \cite{GPR_JMP,GPR_CQG}). One of our results is the following theorem
(The function $r$ appearing therein in a ``generalized initial condition", see \cite[Section 3]{CPR_Laplace}):

\begin{theorem} \label{main_thm_1}
Let us fix a function $f$ which is analytic in a region $D$
which contains the half-plane $\{s \in \mathbb{C} : Re(s) > 0\}$. We also fix $p$
and $p'$ such that $1<p\leq 2$ and $\frac{1}{p}+\frac{1}{p'}=1$, and we consider a
function $J \in L^{p'}(\mathbb{R}_+)$ such that
$\mathcal{L}(J) \in H^p(\mathbb{C}_+)$.
We assume that the function $(\mathcal{L}( J ) + r)/f$ is in the space
$H^p(\mathbb{C}_+)$. Then, the linear equation
\begin{equation} \label{lin_gen_1}
f(\partial_t)\phi = J
\end{equation}
can be uniquely solved on $L^{p'}(0,\infty)$.
Moreover, the solution is given by the explicit formula
\begin{equation}\label{lin_gen_1111}
\phi = \mathcal{L}^{-1} \left( \frac{\mathcal{L}( J ) + r}{f} \, \right) \; .
\end{equation}
\end{theorem}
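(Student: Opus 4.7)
The plan is to convert the nonlocal equation into an algebraic equation in a Hardy space by applying the Laplace transform, and then invert. The Laplace transform is well adapted to this because, under the hypothesis that $D$ contains the right half-plane, the symbol $f$ becomes a bona fide multiplier on $\mathbb{C}_+$, and the Doetsch representation theorem identifies $L^{p'}(\mathbb{R}_+)$ with $H^p(\mathbb{C}_+)$ (for $1<p\leq 2$, $1/p+1/p'=1$) via $\mathcal{L}$, so the problem collapses to pointwise division by $f$ on the transform side.

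First I would apply $\mathcal{L}$ to both sides of $f(\partial_t)\phi=J$. The key identity, carried over from \cite{CPR_Laplace}, is
\begin{equation*}
\mathcal{L}\bigl(f(\partial_t)\phi\bigr)(s) \;=\; f(s)\,\mathcal{L}(\phi)(s)\;-\;r(s),
\end{equation*}
in which the term $r$ is precisely the ``generalized initial condition'' that collects the boundary contributions produced by the nonlocal action of $f(\partial_t)$. Taking this identity for granted (it is the content of the earlier sections of \cite{CPR_Laplace}), the equation becomes
\begin{equation*}
f(s)\,\mathcal{L}(\phi)(s) \;=\; \mathcal{L}(J)(s) + r(s),
\end{equation*}
and hence, where $f$ has no zeros in $\mathbb{C}_+$, $\mathcal{L}(\phi)=(\mathcal{L}(J)+r)/f$.

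Next I would use the hypothesis that $(\mathcal{L}(J)+r)/f$ belongs to $H^p(\mathbb{C}_+)$. By the Paley--Wiener/Doetsch theorem in the form needed here, $\mathcal{L}:L^{p'}(\mathbb{R}_+)\to H^p(\mathbb{C}_+)$ is a topological isomorphism onto its image for $1<p\leq 2$, so there exists a unique $\phi\in L^{p'}(\mathbb{R}_+)$ with
\begin{equation*}
\phi \;=\; \mathcal{L}^{-1}\!\left(\frac{\mathcal{L}(J)+r}{f}\right).
\end{equation*}
Uniqueness is immediate from injectivity of $\mathcal{L}$ on $L^{p'}(\mathbb{R}_+)$. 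Existence is verified by running the computation backwards: applying $\mathcal{L}$ to this candidate $\phi$, multiplying by $f$, and subtracting $r$, one recovers $\mathcal{L}(J)$, hence $f(\partial_t)\phi=J$ in $L^{p'}(\mathbb{R}_+)$.

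The main obstacle, and the point on which the proof genuinely depends, is the derivation of the transform rule $\mathcal{L}(f(\partial_t)\phi)=f\,\mathcal{L}(\phi)-r$ together with a precise description of $r$. This requires the rigorous definition of $f(\partial_t)$ on $L^{p'}(\mathbb{R}_+)$ for a general analytic symbol $f$ on the half-plane (not merely a polynomial), and a careful justification that the formal series manipulations producing the initial-data contributions converge in $H^p(\mathbb{C}_+)$. Once that machinery from \cite{CPR_Laplace} is in place, the remainder of the argument reduces to the invertibility of Laplace transform on the relevant Hardy/Lebesgue pair and the multiplier property of $f$, both of which are standard.
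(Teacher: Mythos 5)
Your proposal follows essentially the same route as the paper: Theorem \ref{main_thm_1} is imported from \cite{CPR_Laplace}, where $f(\partial_t)$ is defined via Laplace transform precisely so that the equation becomes $f\,\mathcal{L}(\phi)=\mathcal{L}(J)+r$, and the Doetsch representation theorem for $H^p(\mathbb{C}_+)$, $1<p\leq 2$, then yields the unique $\phi\in L^{p'}(\mathbb{R}_+)$ given by formula (\ref{lin_gen_1111}). Your sketch reproduces exactly this argument (with the transform identity and the description of $r$ deferred, as the paper itself does, to \cite{CPR_Laplace}), so it is correct in approach and substance.
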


\noindent Moreover, using some technical assumptions (see \cite[corollary 2.10]{CPR_Laplace}), the representation
formula (\ref{lin_gen_1111}) for the solution can be reduced to
\begin{equation}\label{sol_lor_11}
\phi = \mathcal{L}^{-1} \left( \frac{\mathcal{L}( J )}{f}\right) + \mathcal{L}^{-1} \left(\frac{r}{f} \, \right) \; .
\end{equation}
The theory can be applied to various field models; in particular, it can be applied to zeta-nonlocal field models of
the form
\begin{equation*} \label{Zeq_0444}
\zeta(\partial_t+h) \phi = J\; .% \; , \quad t \geq 0 \; ,
\end{equation*}
for appropriate functions $J$, see  \cite[Section 4]{CPR_Laplace}.

Now, let us denote by $\mathcal{A}(\mathbb{C}_+)$ the class of functions which are analytic in a region $D$ which contains
the half-plane $\{s \in \mathbb{C} : Re(s) > 0\}$. We can see that for $h>1$ the symbol $\zeta(s+h)$ is in the class
$\mathcal{A}(\mathbb{C}_+)$ while, if $p(s):=s^2$, the symbol $\zeta_h\circ p(s):=\zeta(s^2+h)$ is not, as we explain
presently. It follows from this observation that for some basic forces $J$ (e.g. piecewise smooth functions with
exponential decay) we have $ \frac{\mathcal{L}(J)}{\zeta_h\circ p} \not \in H^p(\mathbb{C}_+)\,$, and therefore the
representation formula  (\ref{sol_lor_11}) of the solution breaks down. We conclude that the study of Equation
(\ref{Zeq_044}) requires a generalization of the theory developed in \cite{CPR_Laplace}.

First of all, we observe that the properties of the Riemann zeta function (see for instance \cite[Section 4]{CPR_Laplace})
imply that the symbol
\begin{equation} \label{nszf}
\zeta(s^2+h)=\sum_{n=0}^{\infty}\dfrac{1}{n^{s^2+h}}
\end{equation}
is analytic in the region $\Gamma:= \{ s \in \mathbb{C} : Re(s)^2-Im(s)^2>1-h \}$, which is not a half-plane; its analytic
extension
$\zeta \circ p$ has poles at the vertices of the hyperbolas $Re(s)^2-Im(s)^2=1-h $, and its critical region is the set
$\{ s \in \mathbb{C} : -h< Re(s)^2-Im(s)^2<1-h \}$. In fact, we recall from \cite[Section 6]{CPR_Laplace} that according to
the value of $h$ we have:
\begin{itemize}
\item [i)]For $h>1$, $\Gamma$ is the region limited by the interior of the dark hyperbola $Re(s)^2-Im(s)^2=1-h $
containing the real axis:
\begin{center}
\includegraphics[width=4cm, height=4cm]{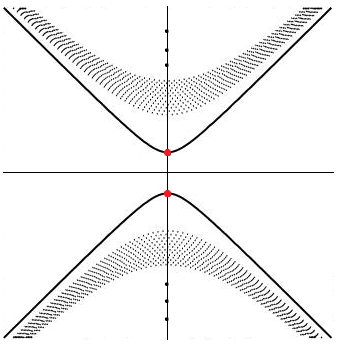}
\end{center}
%\end{figure}
{\footnotesize
The poles of $\zeta(s^2+h)$ are the vertices of dark hyperbola, indicated by two thick dots. The trivial zeroes of
$\zeta(s^2+h)$ are indicated by thin dots on the imaginary axis; and the non-trivial zeroes are located on the darker
painted region (critical region).}

\item [ii)]For $h<1$, $\Gamma$ is the interior of the dark hyperbola $Re(s)^2-Im(s)^2=1-h $ containing the imaginary axis:
\begin{center}
\includegraphics[width=4cm, height=4cm]{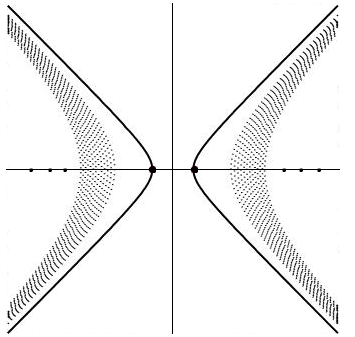}
\end{center}
{\footnotesize
The poles of $\zeta(s^2+h)$ are the vertices of dark hyperbola, indicated by two thick dots. The trivial zeroes of
$\zeta(s^2+h)$ are indicated by thin dots on the real axis; the non-trivial zeroes are located on the darker painted
region (critical region).}

\item [iii)]For $h=1$, $\Gamma$ is the interior of the cones
limited by the curves $y=|x|, y=-|x|$.
\begin{center}
\includegraphics[width=4cm, height=4cm]{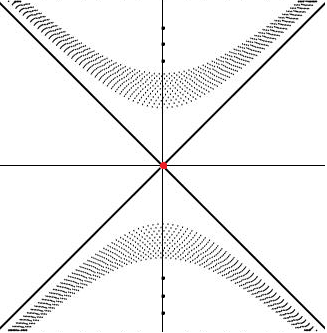}
\end{center}
{\footnotesize The pole of $\zeta(s^2+1)$ is the origin (vertex of
dark curves $y=|x|, y=-|x|$). The trivial zeroes of $\zeta(s^2+h)$
are indicated by thin dots on the imaginary axis; the non-trivial
zeroes are located on the darker painted region (critical
region).}
\end{itemize}

On the other hand, since the Riemann zeta function has an infinite number of nontrivial zeroes in the critical strip (as
famously proven by Hadamard and Hardy, see \cite{KaVo} for original references), then the function
$\zeta_{h}\circ p(\cdot)$  has also an infinite number of nontrivial zeroes on its critical region. We denote by
$\mathcal{Z}$ the set of all such zeroes. By i), ii) and  iii) we have that $\mathcal{Z}$ is contained in the
corresponding dark dotted region. Moreover
$$ \sup_{z\in \mathcal{Z}}|Re(z)|=+\infty.$$
This analysis implies that
the function $ \mathcal{L}(J)/(\zeta_h\circ p)$ does not necessarily belongs to $H^p(\mathbb{C}_+)\,$, and therefore the
expression $\mathcal{L}^{-1}\left( \frac{\mathcal{L}(J)}{\zeta_h\circ p} \right)$ in the representation of the solution
(\ref{sol_lor_11}) does not always make sense.

Thus, a new approach for the study of Equation (\ref{Zeq_04}) is necessary. As stated in Section 1, the method that we use
is based on the Borel transform, see \cite{CPR,U,BoasBook,DubinskyBook} and references therein.

\section{The general theory for nonlocal equations}

\subsection{Entire functions of exponential type}

\begin{defi}
An entire function $\phi:\mathbb{C}\to \mathbb{C}$ is said to be of finite exponential type $\tau_{\phi}$ and finite order
$\rho_{\phi}$ if $\tau_{\phi}$ and $\rho_{\phi}$ are the infimum of the positive numbers $\tau, \rho$ such that the following
inequiality holds:
$$|\phi(z)|\leq Ce^{\tau|z|^{\rho}}, \quad \forall z\in \mathbb{C}\; ,\; {\; and\; some\; } C>0.$$
\end{defi}

\noindent When $\rho_{\phi}=1$, the function ${\phi}$ is said to be of {\bf exponential type}, or of {\bf  exponential type
$\tau_{\phi}$}, if we need to specify its type. If we know the representation of a entire function $\phi$ as a power series,
then a standard way to calculate its order, see \cite[Theorem 2.2.2]{BoasBook}, is by using the formula
\begin{equation}\label{Order}
\rho = \left( 1-\lim_{n\to \infty}\sup\dfrac{\ln{|\phi^{(n)}(0)|}}{n\ln(n)} \right)^{-1}\; ,
\end{equation}
while its type is calculated as follows (see formula 2.2.12, page 11 in \cite{BoasBook}):
\begin{equation}\label{Type}
\sigma=\lim_{n\to \infty}\sup |\phi^{(n)}(0)|^{1/n}\; .
\end{equation}
The space of functions of exponential type will be denoted by $Exp(\mathbb{C})$.

\begin{defi}
Let  $\phi$ be an entire function of exponential type $\tau_{\phi}$. If
%$\in Exp_{R}(\mathbb{C})$
 $\phi(z)=\sum_{n=0}^{\infty}a_nz^n$; then, the Borel transform  of $\phi$ is defined by
$$B(\phi)(z):=\sum_{n=0}^{\infty}\dfrac{a_nn!}{z^{n+1}}\; .$$
\end{defi}

It can be checked that $B(\phi)(z)$ converges uniformly for $|z|>\tau_{\phi}$, see \cite[p. 106]{U}, and therefore
it defines an analytic function on $\{z\in \mathbb{C} : |z|>\tau_{\phi}\}$.

An alternative way to calculate the Borel transform of an entire function $\phi$ of exponential type $\tau_{\phi}$  is
the use of the complex Laplace transform, see \cite{BoasBook}: if $z = |z| \exp(i \theta)$ is such that
$|z|=r>\tau_{\phi}$, then
\begin{equation}\label{EqCBT}
B(\phi)(re^{i\theta})=e^{i\theta}\int_0^{\infty}\phi(te^{i\theta})e^{-rt}dt\; .
\end{equation}
\noindent
In particular, if $z\in \mathbb{R}$ is such that $z>\tau_{\phi}$, then $B(\phi)$ can be obtained as the analytic
continuation of its real Laplace transform:
\begin{equation}\label{EqRBT}
\mathcal{L}(\phi)(z)=\int_0^{\infty}\phi(t)e^{-zt}dt\; .
\end{equation}

For $\phi \in Exp(\mathbb{C})$, we let $s(B(\phi))$ denote the set of singularities of the Borel transform of $\phi$,
and we also denote by $S(\phi)$ the {\em conjugate diagram} of $B(\phi)$, this is, the closed convex hull of the set of
singularities $s(B(\phi))$. The set $S(\phi)$ is a convex compact subset of $\mathbb{C}$, and we can check that $B(\phi)$
is an analytic function in $\mathbb{S}\setminus S(\phi)$, where $\mathbb{S}$ is the extended
complex plane $\mathbb{C}\cup \{\infty\}$ and we have set $B(\phi)(\infty)=0$.

%\begin{example}
%It is well know that the entire function $\psi(z)=e^{z^2}$ is not of exponential type, but is the limit in the topology of uniform convergence on compacts, of its partial Taylor polynomials (which are entire functions of exponential type zero). Therefore $Exp(\mathbb{C})$ is not a closed subspace of the space of entire functions under the topology of uniform convergence on campacts.
%% particularly it is not a Fr\'echet subspace of the space of entire functions with topology of uniform convergence on compact sets.
%\end{example}

\begin{rem} Hereafter we will use the following notation: if $\Omega \subset \mathbb{C}$ is a domain, then $\Omega^c$
denotes the complement of $\Omega$ in the extended complex plane $\mathbb{S}$.
\end{rem}

%\begin{example}
%Let us consider three distints complex numbers on the unit circle $\lambda_1, \lambda_2, \lambda_3$ and let $\phi(z)=e^{\lambda_1 z}+ e^{\lambda_2 z}+e^{\lambda_3 z}$, then we have that $\phi$ is an entire function of exponential type 1, and
%$$B(\phi)(z)=\dfrac{1}{z-\lambda_1}+ \dfrac{1}{z-\lambda_2}+\dfrac{1}{z-\lambda_3},$$
%which is analytic for $|z|>1$; we also note that, there is a simple connected region $\Omega$ such that, $B(\phi)$ has an analytic continuation to $\Omega^{c}$ containing some parts of the unit disc and such that $\lambda_1, \lambda_2, \lambda_3  \in \Omega$. Note that, such a $\Omega$ is not unique, but what is important here is that the singularities of $B(\phi)$ are in $\Omega$.
%\end{example}
%This example motivates the following definition:

\begin{defi} Let $\Omega$ be a simple connected domain; we define the space $Exp(\Omega)$ as the set of all entire
functions $\phi$ of exponential type such that its Borel transform $B(\phi)$ has all its singularities in $\Omega$ and
such that $B(\phi)$ admits an analytic continuation to $\Omega^c$. This continuation will continue being denoted by
$\mathcal{B}(\phi)$.
\end{defi}

\begin{rem}
Since $\Omega^c$ is closed, the fact that $\mathcal{B}(\phi)$ is analytic in $\Omega^c$ means that there exists an open
set $U\subset \mathbb{S}$ such that $\mathcal{B}(\phi)$ is analytic in $U$ and $\Omega^c\subset U$. Therefore, using the
alternative definition of Borel transform {\rm (\ref{EqCBT})}, we understand $\mathcal{B}(\phi)$ as the  analytic
continuation of its real Laplace transform {\rm (\ref{EqRBT})}.
\end{rem}

\begin{rem}
In what follows, {\bf the Borel transform of} $\phi \in Exp(\Omega)$ always refers to the complex function $\mathcal{B}
(\phi)$ together with an open set $U$ in the extended plane $\mathbb{S}$ such that $\mathcal{B}(\phi)$ is analytic in
$U$ and $\Omega^c\subset U$.
%, we will refer to its analytic continuation $\mathcal{B}(\phi)$ .
%transform,  we refer to $\mathcal{B}(\phi)$ as the Borel transform of $\phi$ instead of $B(\phi)$.
\end{rem}

%\smallskip

\begin{defi}
For a function $\phi \in Exp(\Omega)$, we define the set  $H_1(\phi)$ to be the class of closed rectifiable and simple
curves in $\mathbb{C}$ which are pairwise homologous %to the circle  $\{z\in \mathbb{C} : |z|=R\}$ for $R>\tau_{\phi}$
and contain the set $s(B(\phi))$ in their bounded regions.
\end{defi}

%We note that they are also homologous in $\mathbb{S}\setminus S(f)$.
The following theorem is a classical result about the representation of entire functions of exponential type.

\begin{theorem}\label{Pol_Rep}(Polya's Representation Theorem). Let $\phi$ be a function of exponential type and let $\gamma\in H_1(\phi)$. Then,
$$\phi(z)=\dfrac{1}{2\pi i}\int_{\gamma}e^{sz}\mathcal{B}(\phi)(s)ds.$$
In particular, if $\phi$ is of type  $\tau$  and $R> \tau$, then
$$\phi(z)=\dfrac{1}{2\pi i}\int_{|s|=R}e^{sz}B(\phi)(s)ds.$$
\end{theorem}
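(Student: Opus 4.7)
The plan is to prove the ``in particular'' statement first (integration along a large circle), since the general case follows from it by a homology argument using that $\mathcal{B}(\phi)$ is analytic off the singular set $s(B(\phi))$.

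For the circle case, I would start from the power series $\phi(z)=\sum_{n\ge 0}a_n z^n$, so that by definition $B(\phi)(s)=\sum_{n\ge 0}\frac{a_n n!}{s^{n+1}}$, and recall (as stated after Definition~3.2) that this Laurent series converges uniformly on $\{|s|=R\}$ for any $R>\tau_\phi$. Uniform convergence lets me interchange sum and integral:
\begin{equation*}
\frac{1}{2\pi i}\int_{|s|=R}e^{sz}B(\phi)(s)\,ds
=\sum_{n=0}^{\infty}a_n n!\cdot\frac{1}{2\pi i}\int_{|s|=R}\frac{e^{sz}}{s^{n+1}}\,ds.
\end{equation*}
By Cauchy's integral formula for derivatives applied to the entire function $s\mapsto e^{sz}$, the inner integral equals $\frac{1}{n!}\,\frac{d^n}{ds^n}e^{sz}\big|_{s=0}=\frac{z^n}{n!}$, so the right-hand side collapses to $\sum_{n\ge 0}a_n z^n=\phi(z)$, which is the desired identity.

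For the general statement, I would fix any $\gamma\in H_1(\phi)$ and choose $R>\tau_\phi$ large enough that the circle $C_R=\{|s|=R\}$ encloses both $\gamma$ and the singular set $s(B(\phi))$. By the definition of $H_1(\phi)$, the curve $\gamma$ and $C_R$ are homologous in $\mathbb{C}\setminus s(B(\phi))$; since the integrand $e^{sz}\mathcal{B}(\phi)(s)$ is holomorphic there (in $s$, for each fixed $z$), Cauchy's theorem gives $\int_\gamma = \int_{C_R}$, and the previous step identifies both with $\phi(z)$.

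The only subtle point is the justification of swapping sum and integral, which hinges on the uniform convergence of the Borel series on compact subsets of $\{|s|>\tau_\phi\}$; this is standard from the type estimate $|a_n n!|\le C(\tau_\phi+\epsilon)^n n!$ combined with $n!/R^{n+1}$ decay, and is the only analytic input beyond Cauchy's theorem. Everything else is bookkeeping with homology classes and the definition of $\mathcal{B}(\phi)$ as the analytic extension of $B(\phi)$ to $\Omega^c$.
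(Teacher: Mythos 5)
Your proof is correct, and it is essentially the classical argument: the paper itself does not prove Theorem \ref{Pol_Rep} but refers to \cite[Theorem 5.3.5]{BoasBook}, whose proof follows your route exactly --- first the circle $|s|=R$ with $R>\tau_\phi$, by expanding $B(\phi)$ in its series, integrating term by term and using Cauchy's formula to get $\frac{1}{2\pi i}\int_{|s|=R}e^{sz}s^{-n-1}\,ds=\frac{z^n}{n!}$, and then deforming the contour to an arbitrary $\gamma\in H_1(\phi)$ by Cauchy's theorem, since $e^{sz}\mathcal{B}(\phi)(s)$ is analytic in the region between $\gamma$ and the circle (within the paper's conventions $\mathcal{B}(\phi)$ is analytic on a neighborhood of the exterior of $\gamma$, all singularities being enclosed by $\gamma$). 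So there is no methodological difference to report.

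One slip to fix in your last paragraph: the estimate you invoke, $|a_n n!|\le C(\tau_\phi+\epsilon)^n n!$, is far too weak, and $n!/R^{n+1}$ does not decay --- it blows up. The correct input, which is exactly the paper's type formula (\ref{Type}), is $|a_n n!|=|\phi^{(n)}(0)|\le C(\tau_\phi+\epsilon)^n$ for every $\epsilon>0$; hence on $|s|=R$ the terms of the Borel series are dominated by $C\bigl((\tau_\phi+\epsilon)/R\bigr)^n/R$, a convergent geometric series once $\epsilon<R-\tau_\phi$, and this (equivalently, the uniform convergence for $|s|>\tau_\phi$ stated after Definition 3.2, which you also cite) is what justifies the interchange of sum and integral. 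With that sentence repaired, the argument is complete.
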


This theorem is discussed for instance in \cite[p. 107]{U} and \cite{CPR}. A proof appears in
\cite[Theorem 5.3.5]{BoasBook}.

\begin{defi}
If $d$ is a distribution with compact support in $\mathbb{C}$, we define the $\mathcal{P}$-transform of $d$ by:
$$\mathcal{P}(d)(z):= \, <e^{sz},d>\; , \quad z\in \mathbb{C}.$$
\end{defi}

The $\mathcal{P}$-transform is called the Fourier-Laplace transform in
\cite{U} and the Fourier-Borel transform in Martineau's classical paper \cite{Mar}.
For the particular case in which $d=\mu$ is a complex measure with compact support, the $\mathcal{P}$-transform is
$$\mathcal{P}(\mu)(z) = \underset{\mathbb{C}}\int e^{sz}d\mu (s)\; , z\in \mathbb{C}.$$

\begin{proposition}\label{ProChar}
Let $\mathcal{O} \subset \mathbb{C}$ be a simply connected domain; if $\mu$ is a complex measure with compact support
contained in  $\mathcal{O}$, then $\mathcal{P}(\mu)\in Exp(\mathcal{O})$. Conversely, given any function
$\phi\in Exp(\mathcal{O})$, there exists a complex measure $\mu_{\phi}$
with compact support in $\mathcal{O}$ and such that $\mathcal{P}(\mu_{\phi})(z)=\phi(z)$. The measure $\mu_{\phi}$ is not
unique: it can be chosen to have support on any given curve $\gamma\in H_1(\phi)$.
%Particularly, if $f$ is of exponential type $\tau <R$ and $R'<R$, $\mu_f$ can be chosen to have supmathcport on the circle $\{z:|z|=R'\}$.
\end{proposition}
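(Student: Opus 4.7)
The plan is to treat the two implications separately, with Polya's representation theorem (Theorem 3.9) doing the heavy lifting for the converse.

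For the forward direction, fix a complex measure $\mu$ with compact support $K \subset \mathcal{O}$. Standard differentiation under the integral sign shows $\mathcal{P}(\mu)(z) = \int_{\mathbb{C}} e^{sz}\, d\mu(s)$ is entire, and the bound $|\mathcal{P}(\mu)(z)| \leq |\mu|(K)\, e^{M|z|}$, with $M = \max_{s\in K} |s|$, shows it is of exponential type at most $M$. To locate the singularities of the Borel transform, I would expand $e^{sz} = \sum_n s^n z^n/n!$ and read off $a_n = \frac{1}{n!}\int s^n\, d\mu(s)$; then for $|w| > M$,
$$
B(\mathcal{P}(\mu))(w) \;=\; \sum_{n=0}^\infty \frac{a_n n!}{w^{n+1}} \;=\; \int_{\mathbb{C}} \frac{d\mu(s)}{w-s},
$$
which is the Cauchy transform of $\mu$. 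Since $\mu$ is supported on $K$, this integral defines an analytic function on $\mathbb{S}\setminus K$ (vanishing at $\infty$), and it is the analytic continuation of $B(\mathcal{P}(\mu))$. As $K \subset \mathcal{O}$, we have $\mathcal{O}^c \subset \mathbb{S}\setminus K$, so $\mathcal{B}(\mathcal{P}(\mu))$ is analytic on $\mathcal{O}^c$, proving $\mathcal{P}(\mu) \in Exp(\mathcal{O})$.

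For the converse, given $\phi \in Exp(\mathcal{O})$, the point is that Polya's representation theorem already writes $\phi$ as an integral of the desired shape. Concretely, pick any $\gamma \in H_1(\phi)$ that also lies in $\mathcal{O}$; then Theorem 3.9 gives
$$
\phi(z) \;=\; \frac{1}{2\pi i}\int_{\gamma} e^{sz}\, \mathcal{B}(\phi)(s)\, ds,
$$
and defining the complex measure $\mu_\phi$ on $\gamma$ by $d\mu_\phi(s) = \tfrac{1}{2\pi i}\mathcal{B}(\phi)(s)\, ds$, this becomes $\phi(z) = \int e^{sz}\, d\mu_\phi(s) = \mathcal{P}(\mu_\phi)(z)$. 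Non-uniqueness is then immediate: any $\gamma' \in H_1(\phi) \cap \mathcal{O}$ yields the same $\phi$ by Cauchy's theorem applied to $e^{sz}\mathcal{B}(\phi)(s)$ on the annular region between $\gamma$ and $\gamma'$, where the integrand is holomorphic.

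The main technical obstacle is justifying that $\gamma$ can be chosen inside $\mathcal{O}$ so that $\mu_\phi$ indeed has support in $\mathcal{O}$. This is where the hypotheses are used: the conjugate diagram $S(\phi)$ is compact and contained in $\mathcal{O}$, while $\mathcal{B}(\phi)$ is analytic on the open set $U \supset \mathcal{O}^c$, and hence analytic on the open set $\mathcal{O}\setminus S(\phi)$ as well. Since $\mathcal{O}$ is simply connected and $S(\phi)$ is a compact subset of $\mathcal{O}$, one can inflate $S(\phi)$ slightly to obtain a simple closed rectifiable curve $\gamma \subset \mathcal{O}\setminus S(\phi)$ enclosing $S(\phi)$ in its bounded region; on such a $\gamma$ the integrand $\mathcal{B}(\phi)$ is analytic, so the measure $\mu_\phi$ is well defined and supported in $\mathcal{O}$, completing the proof.
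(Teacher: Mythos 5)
Your proof is correct and follows essentially the same route as the paper: the forward direction identifies the Borel transform of $\mathcal{P}(\mu)$ with the Cauchy transform $\int (w-s)^{-1}\,d\mu(s)$ (you do this via the power-series definition, the paper via the Laplace-transform representation and Fubini, but the result and its use are identical), and the converse reads the measure $d\mu_\phi = \mathcal{B}(\phi)(s)\,\tfrac{ds}{2\pi i}$ off Polya's representation theorem. Your extra remarks on choosing $\gamma$ inside $\mathcal{O}$ and on non-uniqueness via Cauchy's theorem only make explicit what the paper leaves implicit.
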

\begin{proof}
Let $K$ be the support of the complex measure $\mu$ (which is of finite variation). The $\mathcal{P}$-transform of
$\mu$ is
$$\mathcal{P}(\mu)(z)=\underset{\mathbb{C}}\int e^{sz}d\mu (s)\; ,$$
which is an entire function. Now, if $R=sup_{s\in K}|s|$, we have
$$|\mathcal{P}(\mu)(z)|\leq \underset{\mathbb{C}}\int e^{R|z|}|d\mu (s)|\leq e^{R|z|}||\mu||\; ,$$
that is, $\mathcal{P}(\mu)$ is an entire function of exponential type.

It remains to show that $s(B(\mathcal{P}(\mu)))\subset \mathcal{O}$. To do that, we compute the Borel transform of
$\mathcal{P}(\mu)$ as the analytic
continuation of its real Laplace transform. Let $z$ be a real number such that $z> R$. Then, we have
\begin{eqnarray*}
B(\mathcal{P}(\mu))(z)&=&\int_0^{+\infty}e^{-zt}\mathcal{P}(\mu)(t)dt\\
&=&\int_0^{+\infty}e^{-zt}\underset{K}\int e^{st}d\mu (s)dt\\
&=&\underset{K}\int \int_0^{+\infty} e^{(s-z)t}dt d\mu (s)\\
&=&\underset{K}\int \dfrac{1}{z-s}d\mu (s)\; ,
\end{eqnarray*}
in which we have used Fubini's theorem. From these computations we have that $\mathcal{P}(\mu) \in Exp(\mathcal{O})$.
In fact, the last integral is the analytic continuation $\mathcal{B}(\mathcal{P}(\mu))$.

To prove the converse implication, let $\gamma\in H_1(\phi)$. Then, Polya's representation theorem (Theorem \ref{Pol_Rep}) means that
$\phi$ can be represented as $\phi(z)=\mathcal{P}(\mu_{\phi})(z)$ for the complex measure $\mu_{\phi}$ defined by

\begin{equation}\label{defMea}
d\mu_{\phi}(s):=\mathcal{B}(\phi)(s)\dfrac{ds}{2\pi i}\; , \quad s\in \gamma\; .
\end{equation}
\end{proof}

\begin{rem}
The analogous of this proposition for general distributions can be found in {\rm \cite{CPR}}. We prefer our version with complex measures
because in this work we do not use the machinary of distributions.
\end{rem}

\subsection{Functions of $\partial_t$ via Borel transform}

Let $\Omega$  %be a nonempty open subset of $\mathbb{C}$. We assume that
 be a simply connected domain (equivalently, let $\Omega$ be a Runge domain, see \cite[Prop. 17.2]{TrevesBook}). In what follows we denote by $Hol(\Omega)$ the set of holomorphic functions on $\Omega$.
%, and by $Exp(\Omega)$ the collection of functions $\phi \in Exp(\mathbb{C})$ with $s(\mathcal{B}(\phi))\subset \Omega$.

\begin{defi}\label{Def_Oper}
Let $f \in Hol(\Omega)$, $\phi \in Exp(\Omega)$, and assume that $\mu_{\phi}$ is the complex measure defined in {\rm \ref{defMea}} with
compact support on a curve $\gamma \in H_1(\phi)$ so that $\mathcal{P}(\mu_{\phi})=\phi$. We define the operator $f(\partial_t)\phi$
as
$$f(\partial_t)\phi:=\mathcal{P}(f \mu_{\phi})\; .$$
\end{defi}
In Definition \ref{Def_Oper} we assume that the curve $\gamma\in H_1(\phi)$, which defines the measure $\mu_\phi$, is contained in the region $\Omega$. By Cauchy's Theorem, the operator $f(\partial_t)$ is independent of such a $\gamma$ and therefore is well defined.

In this way, using the new measure $f\mu_{\phi}$ and the definition of the $\mathcal{P}$-transform, we see that Equation (\ref{Gen_Eq}) is
understood as the following integral equation

\begin{equation}\label{EqInt}
\int_{\gamma}e^{st}f(s)\mathcal{B}(\phi)(s)\frac{ds}{2\pi i}=g(t),\; \; \gamma \in H_1(\phi).
\end{equation}
We may wonder whether it is necessary to restrict ourselves to Runge domains. Indeed, the counterexample below is proposed in
\cite[pag. 27]{DubinskyBook} in order to show that $f(\partial_t)\phi$ can be multivalued if $\Omega$ is not a Runge domain.
We reproduce it here for the sake of completeness.
%  assuming $\Omega$ simply connected is essential for our foregoing definition to make sense:

\begin{example} \label{runge}
Set $\Omega = \mathbb{C}\setminus \{0\}$. We consider the symbol
$f(s)=\dfrac{1}{s}$ which is holomorphic in $\Omega$ and the function $\phi_{\lambda}(z)=e^{\lambda z}, \lambda \in \Omega$.
Then the set $s(\mathcal{B}(\phi_{\lambda}))=\{\lambda\}$ and we have, for a closed curve  $\gamma$ in $\Omega$ containing $\lambda$,
two possible values for $f(\partial_t)\phi_{\lambda}(z)$: either
$$f(\partial_t)\phi_{\lambda}(z)=\dfrac{1}{\lambda}(e^{\lambda z}-1) \; ,$$
or
$$f(\partial_t)\phi_{\lambda}(z)=\dfrac{1}{\lambda}e^{\lambda z} \; ,$$
depending on whether $\gamma$ encloses the point $\{0\}$ or not.
\end{example}

\smallskip

The following proposition justifies some of the formal computations appearing in physical papers (see \cite{BK1,BK2} and references therein).
It says that the integral operator $f(\partial_t)$ is {\em locally} ({\em i.e.}, whenever $f$ can be expanded as a power series in an adequated
ball contained in $\Omega$) a differential operator of infinite order.

\begin{proposition}\label{IODEq} %Let $\Omega_R$ be a simply connected domain.
Let $R>0$ and assume that $B_R(0)\subset \Omega$. Suppose that $\phi\in Exp(\Omega)$ is such that $s(\mathcal{B}(\phi))\subset B_R(0)$ and take
$f \in Hol(\Omega)$ with $f(z)=\sum_{k=0}^{\infty}a_kz^k, \, \, |z| < R$. Then, there exist a measure $\mu_{\phi}$ supported on a curve
$\gamma \in H_1(\phi)$ contained in $B_R(0)$ such that $\phi=\mathcal{P}(\mu_{\phi})$, and moreover
$$f(\partial_t)\phi(t)=\mathcal{P}(f \mu_{\phi})(t)=\lim_{l\to \infty}\sum_{k=0}^{l}a_k(\partial_t^k\phi)(t),$$
uniformly on compact sets.
\end{proposition}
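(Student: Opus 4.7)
The plan is to exhibit a concrete admissible curve $\gamma$, use Polya's representation together with the uniform convergence of the power series on $\gamma$, and then commute the sum with the contour integral.

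First I would construct $\gamma$. By hypothesis, $s(\mathcal{B}(\phi))$ is contained in the open ball $B_R(0)$. Since the Borel transform of an exponential type function is holomorphic outside a compact set, $s(\mathcal{B}(\phi))$ (and even its convex hull $S(\phi)$) is a compact subset of $B_R(0)$. Hence there exists $r<R$ with $S(\phi)\subset B_r(0)$, and I would take $\gamma:=\{|s|=r\}\in H_1(\phi)$. Defining $\mu_{\phi}$ by $d\mu_{\phi}(s)=\mathcal{B}(\phi)(s)\,ds/(2\pi i)$ as in (\ref{defMea}), Proposition \ref{ProChar} together with Theorem \ref{Pol_Rep} gives $\phi=\mathcal{P}(\mu_{\phi})$. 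By Definition \ref{Def_Oper},
\begin{equation*}
f(\partial_t)\phi(t)=\mathcal{P}(f\mu_{\phi})(t)=\frac{1}{2\pi i}\int_{\gamma}e^{st}f(s)\mathcal{B}(\phi)(s)\,ds.
\end{equation*}

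Next I would identify each term of the proposed series as a classical derivative. Since $\gamma\subset B_r(0)\subset B_R(0)\subset\Omega$ is compact and $\mathcal{B}(\phi)$ is continuous on $\gamma$, one may differentiate Polya's formula $\phi(t)=\frac{1}{2\pi i}\int_{\gamma}e^{st}\mathcal{B}(\phi)(s)\,ds$ under the integral sign arbitrarily many times, yielding
\begin{equation*}
\partial_t^{k}\phi(t)=\frac{1}{2\pi i}\int_{\gamma}s^{k}e^{st}\mathcal{B}(\phi)(s)\,ds,\qquad k\geq 0.
\end{equation*}
Therefore, for every $l\geq 0$,
\begin{equation*}
\sum_{k=0}^{l}a_k(\partial_t^{k}\phi)(t)=\frac{1}{2\pi i}\int_{\gamma}e^{st}\Bigl(\sum_{k=0}^{l}a_ks^{k}\Bigr)\mathcal{B}(\phi)(s)\,ds.
\end{equation*}

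The conclusion follows from passing to the limit $l\to\infty$ under the integral. On $\gamma$ we have $|s|=r<R$, which lies strictly inside the disc of convergence of $f$; hence the partial sums $\sum_{k=0}^{l}a_ks^{k}$ converge to $f(s)$ uniformly on $\gamma$. For $t$ in a fixed compact set $K\subset\mathbb{C}$, the quantity $|e^{st}\mathcal{B}(\phi)(s)|$ is bounded uniformly in $(s,t)\in\gamma\times K$ by some constant $M_K$, because $\gamma$ and $K$ are compact and $\mathcal{B}(\phi)$ is continuous on $\gamma$. Consequently,
\begin{equation*}
\Bigl|\,f(\partial_t)\phi(t)-\sum_{k=0}^{l}a_k(\partial_t^{k}\phi)(t)\Bigr|\leq\frac{M_K\,\ell(\gamma)}{2\pi}\,\sup_{s\in\gamma}\Bigl|f(s)-\sum_{k=0}^{l}a_ks^{k}\Bigr|\xrightarrow[l\to\infty]{}0,
\end{equation*}
uniformly in $t\in K$, which is the claim.

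The only non-routine step is the very first one: producing a curve $\gamma\in H_1(\phi)$ that simultaneously encloses the singularity set $s(\mathcal{B}(\phi))$ and lies inside the disc of convergence $B_R(0)$ of the Taylor expansion of $f$. Once $\gamma$ is chosen with $r<R$, everything reduces to uniform convergence on compact sets and interchange of sum and integral, which is standard.
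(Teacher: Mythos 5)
Your proof is correct and follows essentially the same route as the paper: choose a circle inside $B_R(0)$ enclosing $s(\mathcal{B}(\phi))$, represent $\phi=\mathcal{P}(\mu_\phi)$, differentiate under the integral to identify $\partial_t^k\phi=\mathcal{P}(s^k\mu_\phi)$, and pass to the limit in the contour integral. The only cosmetic difference is that you justify the limit by uniform convergence of the Taylor partial sums of $f$ on the compact curve (with an explicit $M_K\,\ell(\gamma)$ bound), whereas the paper invokes dominated convergence; both yield the stated uniform convergence on compact sets.
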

\begin{proof}
We note that, since $s(\mathcal{B}(\phi))$ is a discrete set, there exists a real number $\delta > 0$ such that
$dist(s(\mathcal{B}(\phi)),\{s: |s|=R\})< \delta$. From \cite[Theorem 5.3.12]{BoasBook} we have that
$\tau_\phi= \sup_{\omega \in s(\mathcal{B}(\phi)) } |\omega|$ is the type of $\phi$, then there exist a curve
$\gamma \subset (B_{\tau_\phi}(0))^{c}\cap B_R(0)$ such that $\gamma \in H_1(f)$.
Let $\mu_\phi$ be the measure described by Theorem \ref{ProChar} supported on $\gamma $, then $\phi=\mathcal{P}(\mu_{\phi})$.

Moreover, using the measure $\mu_\phi$, we compute:
\begin{eqnarray*}
\dfrac{d}{dz}\phi(z)&=&\dfrac{d}{dz}\mathcal{P}(\mu_{\phi})(z) = \dfrac{d}{dz} \underset{\gamma}\int e^{sz}d\mu_{\phi}(s)
=\underset{\gamma}\int se^{sz}d\mu_{\phi}(s)
=\mathcal{P}(s\mu_{\phi}).
\end{eqnarray*}
From this, we obtain
$$\sum_{k=0}^{l}a_k\dfrac{d^k}{dz^k}\phi(z)-\mathcal{P}(f \mu_{\phi})(z)=
\mathcal{P}\left(\left\{\sum_{k=0}^{l}a_ks^k-f(s)\right\}\mu_{\phi}\right)(z)\; ,$$
and therefore
\begin{eqnarray*}
\left| \sum_{k=0}^{l}a_k\dfrac{d^k}{dz^k}\phi(z)-\mathcal{P}(f \mu_{\phi,R})(z)\right| & = &
                                              \left| \underset{\gamma}\int e^{sz}\{\sum_{k=0}^{l}a_ks^k-f(s)\}d\mu_{\phi}(s) \right| \\
&\leq& \underset{\gamma}\int e^{|z||s|}\, \left|\sum_{k=0}^{l}a_ks^k-f(s)\right| \, |d \mu_{\phi}|(s) \; .
\end{eqnarray*}
Now we take limits as $l\to \infty$. The result follows from the Lebesgue dominated convergence theorem. We note that the convergence
is uniform over compact subsets of $\mathbb{C}$.
\end{proof}
Thus, under the hypothesis of this proposition, we have seen that Equation (\ref{Gen_Eq}) becomes ``locally" the following infinite order
differential equation:
\begin{equation}\label{EqIOD}
\sum_{k=0}^{\infty}a_k\dfrac{d^k}{dt^k}\phi(t)=g(t).
\end{equation}

Interestingly, Proposition \ref{IODEq} also shows that $f(\partial_t)$ is linear on the space of functions $\phi$ satisfying the hypothesis
appearing therein. We now show that linearity is true in general:

\begin{lemma}
The operator $f(\partial_t): Exp(\Omega) \to Exp(\Omega)$ is linear.
\end{lemma}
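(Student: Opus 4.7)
The plan is to reduce everything to two facts already in hand: the $\mathcal{P}$-transform is $\mathbb{C}$-linear in its measure argument (immediate from $\mathcal{P}(\mu)(z)=\int e^{sz}\,d\mu(s)$), and the operator $f(\partial_t)\phi$ is independent of the particular curve $\gamma\in H_1(\phi)$ used to represent $\phi$ as a $\mathcal{P}$-transform, as noted right after Definition \ref{Def_Oper} via Cauchy's Theorem. The strategy is then to pick a \emph{common} curve that simultaneously represents $\phi$, $\psi$ and $\alpha\phi+\beta\psi$, and to push linearity through.

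First I would check closure: given $\phi,\psi\in Exp(\Omega)$ and $\alpha,\beta\in\mathbb{C}$, the function $\alpha\phi+\beta\psi$ is entire of exponential type (the class $Exp(\mathbb{C})$ is a vector space), and its Borel transform equals $\alpha\mathcal{B}(\phi)+\beta\mathcal{B}(\psi)$, so $s(B(\alpha\phi+\beta\psi))\subset s(B(\phi))\cup s(B(\psi))\subset\Omega$ and the sum admits an analytic continuation to $\Omega^c$. Hence $\alpha\phi+\beta\psi\in Exp(\Omega)$, and in particular $f(\partial_t)(\alpha\phi+\beta\psi)$ is defined.

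Second, since $s(B(\phi))\cup s(B(\psi))$ is a compact subset of the simply connected domain $\Omega$, I can pick a single rectifiable simple closed curve $\gamma\subset\Omega$ whose bounded region contains this union. Then $\gamma$ belongs simultaneously to $H_1(\phi)$, $H_1(\psi)$ and $H_1(\alpha\phi+\beta\psi)$. Using this common $\gamma$ to build the measures via formula \eqref{defMea}, I obtain
\[
d\mu_{\alpha\phi+\beta\psi}=\mathcal{B}(\alpha\phi+\beta\psi)\,\frac{ds}{2\pi i}=\alpha\,d\mu_\phi+\beta\,d\mu_\psi,
\]
and therefore
\[
f(\partial_t)(\alpha\phi+\beta\psi)=\mathcal{P}(f\mu_{\alpha\phi+\beta\psi})=\alpha\,\mathcal{P}(f\mu_\phi)+\beta\,\mathcal{P}(f\mu_\psi)=\alpha f(\partial_t)\phi+\beta f(\partial_t)\psi,
\]
where the first equality uses curve-independence to justify the choice of $\gamma$ for all three functions, and the second uses linearity of $\mathcal{P}$. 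Finally, $f\mu_\phi$ is a complex measure with compact support on $\gamma\subset\Omega$, so Proposition \ref{ProChar} gives $f(\partial_t)\phi\in Exp(\Omega)$, confirming the codomain.

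There is no substantive obstacle; the only delicate point is the alignment of the representing curves for $\phi$ and $\psi$ onto a common $\gamma$, which is exactly where the curve-independence established after Definition \ref{Def_Oper} is used. Everything else is bookkeeping on the measure side.
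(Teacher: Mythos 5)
Your proof is correct and follows essentially the same route as the paper: both arguments hinge on choosing a single curve $\gamma$ enclosing $s(\mathcal{B}(\phi))\cup s(\mathcal{B}(\psi))$ so that it lies simultaneously in $H_1(\phi)$, $H_1(\psi)$ and $H_1(\phi+\psi)$, and then invoking linearity of the Borel transform and of the contour integral. Your additional checks (closure of $Exp(\Omega)$ under linear combinations, scalar multiples, and the codomain via Proposition \ref{ProChar}) are welcome bookkeeping that the paper leaves implicit, but they do not change the substance of the argument.
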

\begin{proof}
 %Since $\Omega$ is convex, then
 For $\phi,\psi\in Exp(\Omega)$ we have that $s(\mathcal{B}((\phi+\psi)))\subseteq s(\mathcal{B}(\phi))\cup s(\mathcal{B}(\psi))\subset \Omega$.
% Also we have  $H_1(\phi+\psi)\subseteq H_1(\phi)$ and $H_1(\phi+\psi)\subseteq H_1(\psi)$.
  Let $\gamma\in H_1(\phi+\psi)$ such that $s(\mathcal{B}(\phi))\cup s(\mathcal{B}(\psi))$ is enclosed by $\gamma$.
  This implies that $\gamma\in H_1(\phi)\; , \gamma \in H_1(\psi)$; then
 \begin{eqnarray*}
f(\partial_t)(\phi+\psi)(t)&=&\mathcal{P}(f \mu_{\phi+\psi})(t)=\int_{\gamma}e^{st}f(s)B(\phi+\psi)(s)\dfrac{ds}{2\pi i} \\
 &=&\int_{\gamma}e^{st}f(s)B(\phi)(s)\dfrac{ds}{2\pi i}+\int_{\gamma}e^{st}f(s)B(\psi)(s)\dfrac{ds}{2\pi i}\\
 &=&f(\partial_t)(\phi)(t)+f(\partial_t)(\psi)(t).
 \end{eqnarray*}
\end{proof}

\begin{rem}
We remark that, if $\phi,\psi\in Exp(\Omega)$ then the inclusions $H_1(\phi+\psi)\subseteq H_1(\phi)$ and $H_1(\phi+\psi)\subseteq H_1(\psi)$
do not necessarily hold. This is why we had to choose an appropriated curve $\gamma$ to carry out the above proof.
For example, let $R>2$ and set $\lambda \in \mathbb{C}$ with
$|\lambda|=3/2$; also consider two functions
$g_1,g_2 \in Exp_{1/2}(\mathbb{C})$; then, the functions $f_1(z)=e^{\lambda z}+g_1(z)$ and $f_2(z)=-e^{\lambda z}+g_2(z)$ are elements
in $Exp(B_R(0))$. Furthermore, if $\gamma \in H_1(f_1+f_2)$ with $\gamma \subset B_{1/2}(0)^{c}\cap B_1(0)$, then $\gamma \not  \in H_1(f_1)$
and also $\gamma \not  \in H_1(f_2)$.
\end{rem}

\begin{rem}\label{remOp}
Let us assume that  $Exp(\Omega)$ is endowed with the topology of uniform convergence on compact sets. With this topology, the operator
$f(\partial_t)$ is not bounded: It is enough to take $\Omega=\mathbb{C}$ and as symbol $f$ the identity map. The following specific example shows
less trivially that the linear operator $f(\partial_t)$ is not necessarily continuous:

Let $\Omega=\mathbb{C}\setminus \mathbb{R}^{+}_0$ (a Runge Domain) and consider the symbol $f(s)=\dfrac{1}{s}$. Then $f\in Hol(\Omega)$;
we consider the sequence $\phi_n(z)=e^{\frac{i}{n}z}-e^{\frac{-i}{n}z}$. We have
$$s(\mathcal{B}(\phi_n))=\left\{-\frac{i}{n},\frac{i}{n} \right\}\subset \Omega\; .$$
We can see that $\phi_n\to 0$ in the topology of uniform convergence on compact sets. On the other hand we have (See Example \ref{runge})
$$f(\partial_t)(\phi_n)(z)=\frac{n}{i}\left( e^{\frac{i}{n}z}+e^{\frac{-i}{n}z} \right)\; ,$$
and considering the compact ball $\overline{B_k(0)}$ with centre the origin and radius $k$, we have
$$\sup_{z\in \overline{B_k(0)}}|f(\partial_t)(\phi_n)(z)|\geq |f(\partial_t)(\phi_n)(0)|=2n$$
which goes to infinity when $n\to \infty$.
\end{rem}

The following lemma says that nonlocal equations involving $f(\partial_t)$ can be solved in $Exp(\Omega)$:

\begin{lemma}\label{SurjLem}
The operator $f(\partial_t): Exp(\Omega) \to Exp(\Omega)$ is surjective.
\end{lemma}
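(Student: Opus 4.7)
The plan is to exhibit an explicit preimage for every $g \in Exp(\Omega)$. Assuming $f \not\equiv 0$ (otherwise the statement is vacuous), the zeros of $f$ in $\Omega$ are discrete, so I can choose a simple closed rectifiable curve $\gamma \in H_1(g)$ contained in $\Omega$ that encloses $s(\mathcal{B}(g))$ and avoids every zero of $f$. I then define
$$\phi(t) := \frac{1}{2\pi i}\int_{\gamma} e^{st}\,\frac{\mathcal{B}(g)(s)}{f(s)}\,ds.$$
Since $|e^{st}| \le e^{R|t|}$ on $\gamma$ with $R = \sup_{s\in\gamma}|s|$, this is an entire function of exponential type. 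Moreover, $\phi$ is the $\mathcal{P}$-transform of the complex measure $d\mu(s) := \frac{\mathcal{B}(g)(s)}{f(s)}\frac{ds}{2\pi i}$ supported on $\gamma \subset \Omega$, so Proposition \ref{ProChar} gives $\phi \in Exp(\Omega)$.

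Next I would verify $f(\partial_t)\phi = g$. The proof of Proposition \ref{ProChar} yields $\mathcal{B}(\phi)(z) = \int_{\gamma} \frac{1}{z-s}\,\frac{\mathcal{B}(g)(s)}{f(s)}\,\frac{ds}{2\pi i}$ for $z$ outside $\gamma$. Picking a second curve $\gamma' \subset \Omega$ enclosing $\gamma$ (so $\gamma' \in H_1(\phi)$), applying Definition \ref{Def_Oper} and interchanging the integrals by Fubini gives
$$f(\partial_t)\phi(t) = \int_{\gamma}\frac{\mathcal{B}(g)(s)}{f(s)}\left[\int_{\gamma'}\frac{e^{zt}f(z)}{z-s}\,\frac{dz}{2\pi i}\right]\frac{ds}{2\pi i}.$$
Because $\Omega$ is simply connected, the interior of $\gamma'$ lies entirely in $\Omega$, so $f$ is holomorphic there and the only singularity of the inner integrand is the simple pole at $z = s$, with residue $e^{st}f(s)$. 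The factors of $f$ cancel and Polya's representation (Theorem \ref{Pol_Rep}) closes the argument:
$$f(\partial_t)\phi(t) = \int_{\gamma} e^{st}\,\mathcal{B}(g)(s)\,\frac{ds}{2\pi i} = g(t).$$

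The point I expect to be the main obstacle is the selection of the contour $\gamma$: we need $\mathcal{B}(g)/f$ analytic in an open neighborhood of $\gamma$, so $\gamma$ must simultaneously enclose $s(\mathcal{B}(g))$ and sidestep the zeros of $f$. This is feasible because in any fixed compact neighborhood of the compact set $s(\mathcal{B}(g)) \subset \Omega$ there can be only finitely many zeros of $f$ (by the identity theorem applied to $f \not\equiv 0$ on $\Omega$), so a small deformation bypasses them all. The residue evaluation in the second step also hinges crucially on $\Omega$ being a Runge (simply connected) domain, so that $f \in Hol(\Omega)$ contributes no extra poles inside $\gamma'$; this is precisely the phenomenon warned against by Example \ref{runge}.
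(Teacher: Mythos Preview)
Your proof is correct and follows essentially the same route as the paper: define $\phi=\mathcal{P}(\mu_g/f)$ along a curve $\gamma\in H_1(g)$ avoiding the (discrete) zeros of $f$, identify its Borel transform as the Cauchy-type integral $\int_\gamma \frac{\mathcal{B}(g)(\eta)}{f(\eta)(z-\eta)}\frac{d\eta}{2\pi i}$, and conclude $\phi\in Exp(\Omega)$. The paper invokes Morera's theorem directly on this integral to see analyticity on $\Omega^c$, whereas you appeal to Proposition~\ref{ProChar}; these are the same argument, since Proposition~\ref{ProChar} is proved precisely by that computation. Your explicit verification of $f(\partial_t)\phi=g$ via a second contour $\gamma'$ and Cauchy's formula actually fills in what the paper leaves as ``not difficult to see''.
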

\begin{proof}
The surjectivity of the operator comes from the solvability of the following equation
\begin{equation}
f(\partial_t)\phi=g, \,\, \quad g\in Exp(\Omega) \; .
\end{equation}
Since the zet of zeros of $f$, $\mathcal{Z}(f)$ say, is a set of isolated points and
$g\in Exp(\Omega)$, there is a curve $\gamma\in H_1(g)$ such that $\mathcal{Z}(f)\cap \gamma= \emptyset$.
Also, there is a measure $\mu_g$ suported on $\gamma$ such that $g=\mathcal{P}(\mu_g)$. Set $\phi=\mathcal{P} (\dfrac{\mu_g}{f})$; i.e.
$$\phi(z)=\dfrac{1}{2\pi i}\int_{\gamma} \dfrac{ e^{z\eta}\mathcal{B}(g)(\eta)}{f(\eta)}d\eta\; .$$
It is evident that $\phi\in Exp(\mathbb{C})$, now we want to see that $s(\mathcal{B}(\phi))\subset \Omega$. For that,
let us calculate the Borel Transform of $\phi$ as the analytic continuation of its real Laplace transform. Let $z\in \mathbb{R}$ be
sufficiently large so that $|Re(\eta)|<z\; $ for all $\eta \in \gamma$; we have
\begin{eqnarray*}
\int_0^{+\infty}e^{-zt}\phi(t)dt &=& \int_0^{+\infty}e^{-zt}\dfrac{1}{2\pi i}\int_{\gamma} \dfrac{ e^{z\eta}\mathcal{B}(g)(\eta)}{f(\eta)}d\eta dt\\
&=&\dfrac{1}{2\pi i}\int_{\gamma} \int_0^{+\infty}\dfrac{ e^{t(\eta-z)}\mathcal{B}(g)(\eta)}{f(\eta)}d\eta\\
&=&\dfrac{1}{2\pi i}\int_{\gamma}\dfrac{\mathcal{B}(g)(\eta)}{f(\eta)(z-\eta)}d\eta\; ,
\end{eqnarray*}
in which we have used Fubini's Theorem. Therefore,
$$\mathcal{B}(\phi)(z)=\dfrac{1}{2\pi i}\int_{\gamma}\dfrac{\mathcal{B}(g)(\eta)}{f(\eta)(z-\eta)}d\eta\; ,$$
and using Morera's theorem, we can see that $\mathcal{B}(\phi)$ is analytic in $\Omega^{c}$; thus $s(\mathcal{B}(\phi))\subset \Omega$.
On the other hand, it is not difficult to see that it satisfies $f(\partial_t)\phi=g$.
\end{proof}

%\begin{rem}
%We can show, following $\cite{Du}$, that the operator $\phi(\partial_t): Exp(\Omega) \to Exp(\Omega)$ is continuous if we equip the space $Exp(\Omega)$ with an appropriate inductive limit topology. We will not
%use this fact in this paper, but a proof can be found in $\cite{Alan}$.
%\end{rem}

%Let $f$ be a function which is holomorphic in $\mathbb{C}- \mathcal{I}$, where $\mathcal{I}$ is the set of its singularities. We consider $\Omega$ to be the set $\mathbb{C}-\{\mathcal{L}_1,\mathcal{L}_2,\mathcal{L}_3,\cdots \}$ where $\mathcal{L}_i$ correrspond to a curve starting from the singularity $\omega_i$ and ending in $\infty$ and such that they avoids the zeros of $f$.
\subsection{A representation formula for solutions to $f(\partial_t)\phi = g$}

\begin{proposition}\label{FDim}

Let $f \in Hol(\Omega)$ and denote $\mathcal{Z}(f)$ for the set of its zeros. A function $\phi \in Exp(\Omega)$
 of exponential type $\tau_{\phi}$ is solution to the homogeneous equation $f(\partial_t)\phi=0$ if and only if there exist polynomials
 $p_k$ of degree less than the multiplicity of the root $s_k \in \mathcal{Z}(f) \cap B_{\tau_{\phi}}(0)$, such that
$$\phi(t)=\sum_{\substack{s_k\in \mathcal{Z}(f) \\ |s_k|<\tau_{\phi}}} p_k(t) e^{t s_k}\; .$$
\end{proposition}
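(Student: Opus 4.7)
My plan is to work with the integral representation
$$f(\partial_t)\phi(t)=\frac{1}{2\pi i}\int_\gamma e^{st}f(s)\mathcal{B}(\phi)(s)\,ds$$
(the defining formula \eqref{EqInt} of $f(\partial_t)$ via the $\mathcal{P}$-transform) and run a residue calculus around the singular set $s(\mathcal{B}(\phi))$ in both directions, combining it with Polya's representation theorem for the inversion step.

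For the ``if'' part, I would start from $\phi(t)=\sum_k p_k(t)e^{s_k t}$ with $s_k\in\mathcal{Z}(f)\cap B_{\tau_\phi}(0)$ and $\deg p_k<m_k:=\mathrm{mult}_f(s_k)$, and use the elementary formula $\mathcal{B}(t^j e^{s_k t})(s)=j!/(s-s_k)^{j+1}$ together with linearity to see that $\mathcal{B}(\phi)$ is rational with poles precisely at the $s_k$, so that $\phi\in Exp(\Omega)$ because $\mathcal{Z}(f)\subset\Omega$. The key observation is that the pole of $\mathcal{B}(\phi)$ at $s_k$ has order $\deg p_k+1\le m_k$ and is exactly absorbed by the zero of $f$ there, so $f\cdot\mathcal{B}(\phi)$ extends holomorphically to the bounded region enclosed by any $\gamma\in H_1(\phi)$ contained in $\Omega$; Cauchy's theorem then yields $f(\partial_t)\phi\equiv 0$.

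For the ``only if'' part I would reverse this argument. From $f(\partial_t)\phi\equiv 0$ we have $\int_\gamma e^{st}G(s)\,ds\equiv 0$ in $t$, with $G:=f\mathcal{B}(\phi)$. The residue theorem rewrites this as $\sum_k e^{s_k t}Q_k(t)\equiv 0$, where the sum runs over the finitely many isolated points $s_k$ of $s(\mathcal{B}(\phi))$ inside $\gamma$ and each $Q_k$ is the polynomial in $t$ obtained by pairing the expansion $e^{st}=e^{s_k t}\sum_n t^n(s-s_k)^n/n!$ with the principal part of $G$ at $s_k$. The classical linear independence of exponential polynomials with distinct exponents forces each $Q_k\equiv 0$, hence the entire principal part of $G$ at $s_k$ vanishes; equivalently, the zero of $f$ at $s_k$ absorbs the pole of $\mathcal{B}(\phi)$ there. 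This simultaneously places $s_k\in\mathcal{Z}(f)$ and bounds the pole order of $\mathcal{B}(\phi)$ at $s_k$ by $m_k$. Therefore $\mathcal{B}(\phi)=\sum_{k,j}a_{k,j}(s-s_k)^{-j}$ with $1\le j\le m_k$, and Polya's representation theorem inverts this into $\phi(t)=\sum_k p_k(t)e^{s_k t}$ with $\deg p_k<m_k$; the constraint $|s_k|<\tau_\phi$ is then read off from the identity $\tau_\phi=\sup\{|s|:s\in s(\mathcal{B}(\phi))\}$ (Theorem 5.3.12 of \cite{BoasBook}, already invoked in the excerpt).

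The delicate point is that the residue expansion presupposes each element of $s(\mathcal{B}(\phi))$ is an isolated pole of finite order, rather than an essential singularity, branch point, or accumulation of singularities. I would handle this by a bootstrap: the vanishing of the residues extracted above shows that $G=f\mathcal{B}(\phi)$ extends analytically across each $s_k$, and since the zeros of $f$ in $\Omega$ are isolated and of finite order, the identity $\mathcal{B}(\phi)=G/f$ forces any putative singularity of $\mathcal{B}(\phi)$ inside $\gamma$ to be a genuine pole of finite order sitting at a zero of $f$. This \emph{a posteriori} regularity argument is the step I would write out most carefully.
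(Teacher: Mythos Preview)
Your sufficiency argument coincides with the paper's: the identity $\mathcal{B}(t^j e^{s_k t})(s)=j!/(s-s_k)^{j+1}$ together with Cauchy's theorem is exactly how the paper dispatches the ``if'' direction.

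For necessity, however, your route diverges from the paper's and, as written, contains a genuine gap. The bootstrap you propose is circular: in order to ``extract residues'' of $G=f\,\mathcal{B}(\phi)$ and conclude that its principal parts vanish, you already need to know that the singularities of $\mathcal{B}(\phi)$ inside $\gamma$ are isolated poles. But the Borel transform of a general $\phi\in Exp(\Omega)$ can have branch-type or essential singularities (for instance $\phi(z)=(e^{z}-1)/z$ gives $\mathcal{B}(\phi)(s)=\log\bigl(s/(s-1)\bigr)$), so the residue expansion $\sum_k Q_k(t)e^{s_kt}$ is not available \emph{a priori}. Once you assume meromorphy to run the residue calculus, using the output to justify meromorphy is not a bootstrap but a loop.

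The paper sidesteps this entirely by never analysing the singularities of $\mathcal{B}(\phi)$. It treats $d\mu_\phi=\mathcal{B}(\phi)\,ds/(2\pi i)$ purely as a continuous linear functional on $A(\mathcal{R})$, observes that $\mathcal{P}(f\mu_\phi)=0$ forces $\mu_\phi$ to annihilate the closed subspace $\mathcal{M}_{f,\gamma}=\overline{\mathrm{span}\{E_z f:z\in\mathbb{C}\}}$, identifies this subspace via Mergelyan's theorem as the ideal of functions vanishing to order $\ge m_k$ at each $s_k$, and then proves that the quotient $A(\mathcal{R})/\mathcal{M}_{f,\gamma}$ is finite-dimensional with dual basis given by the evaluation-of-derivative functionals $d_{i,j}=\partial^j|_{s=s_i}$. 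Expressing $\mu_\phi$ in that basis yields $\phi(t)=\mu_\phi(e^{t\,\cdot})=\sum_{i,j}b_{i,j}t^{j}e^{ts_i}$ directly.

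Your approach can in fact be repaired, and then it becomes an attractive alternative. Differentiating $\int_\gamma e^{st}G(s)\,ds\equiv 0$ in $t$ at $t=0$ gives $\int_\gamma s^nG(s)\,ds=0$ for every $n\ge 0$; hence the Cauchy transform $z\mapsto\frac{1}{2\pi i}\int_\gamma G(s)(s-z)^{-1}\,ds$ vanishes identically outside $\gamma$, and the Plemelj jump relation (or, equivalently, Runge/Mergelyan density of polynomials in $A(\mathcal{R})$) shows that $G$ extends holomorphically to the interior of $\gamma$ \emph{without} any hypothesis on the singularities of $\mathcal{B}(\phi)$. From $\mathcal{B}(\phi)=G/f$ with $G$ holomorphic, the pole structure of $\mathcal{B}(\phi)$ and the representation $\phi=\sum_k p_k(t)e^{s_kt}$ then follow exactly as you describe. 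Note that this fix is, at bottom, the same density-of-exponentials/polynomials mechanism the paper exploits; the paper packages it as duality on a finite-dimensional quotient, while the repaired version of your argument packages it as an analytic continuation statement for $G$.
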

\begin{proof}({\sl Sufficiency}) in order to prove that the function
$$\phi(t)=\sum_{\substack{s_k\in \mathcal{Z}(f)\\ |s_k|<\tau_{\phi}}} p_k(t) e^{t s_k}\; ,$$
is solution to the homogeneous equation $f(\partial_t)\phi=0$, it is enough to see that for a given $k$ and $s_k\in \mathcal{Z}(f)$
with $|s_k|<\tau_{\phi}$, the following holds:
$$f(\partial_t)(p_k(t)e^{ts_k})=0.$$

Indeed, we first note that for a natural number $d$ and a complex number $a_d$, the Borel transform of $a_d s^de^{\lambda s}$ is
\begin{eqnarray}\label{NewIdent}
\mathcal{B}(a_d s^de^{\lambda s})= a_d\frac{d!}{(s-\lambda)^{d+1}}\; .
\end{eqnarray}
Now, let $s_k$ be a zero of $f$ of order $d_k+1$, $p_k$ a polynomial of degree $deg(p_k)\leq d_k$ and suppose that
$\gamma_k \in H_1(p_k(z)e^{zs_k})$; then, using linearity of the Borel transform and the Cauchy theorem we have
$$f(\partial_t)\left(p_k(t)e^{ts_k} \right)=\dfrac{1}{2\pi i}\int_{\gamma_k}e^{t\eta}f(\eta)\mathcal{B}(p_k(z)e^{zs_k})(\eta) d\eta =0\; .$$
From these computations, we deduce that
$$f(\partial_t)\left( \sum_{s_k\in \mathcal{Z}(f): |s_k|<\tau_{\phi}} p_k(t)e^{ts_k} \right)=0.$$
On the other hand, it is evident that
$$\phi(t)=\sum_{\substack{s_k\in \mathcal{Z}(f) \\ |s_k|<\tau_{\phi}}} p_k(t)e^{t s_k}\; ,$$
has exponential type $\tau_{\phi}$ and  from (\ref{NewIdent}) we conclude that $\phi \in  Exp(\Omega)$ .

Before proving {\sl necessity}, we must ensure finite dimensionality of a vector space to be defined below. We write this fact as a separate
result, because it is interesting in its own right; after that we will finish the proof of this proposition.
\end{proof}

We use the following notations. Let $\mathcal{R}$ be the closure of a bounded and simply connected region which does not contain any singularity
of $f$ and let $\gamma$ denotes its boundary. We also denote by $A(\mathcal{R})$ the set of continuous functions that are analytic in the interior
of $\mathcal{R}$ endowed with the supremum norm and, for $z \in \mathbb{C}$, we let $E_z:\mathcal{R}\to\mathbb{C}$ be the complex  exponential function $E_z(\xi)=e^{z\xi}$. Finally, we let
$$\mathcal{M}_{f,\gamma}:=cl(span\{E_z\cdot f: z\in \mathbb{C}\})\; ,$$
where $cl$ denotes closure in $A(\mathcal{R})$.

\begin{lemma}
Let $\{s_k\}_{k=1}^K$ be an enumeration of all the zeros of $f$ in $\mathcal{R}$ and let $m_k$ denote their corresponding multiplicities. Then
\begin{equation} \label{zeroes}
\mathcal{M}_{f,\gamma}=
\{\psi\in A(\mathcal{R}): \psi \, \, is \, \, zero \, \, at\, \, s_k\, \, with \, \, multiplicity \, \, \geq m_k, \, \, 1\leq k\leq K\}\; .
\end{equation}
\end{lemma}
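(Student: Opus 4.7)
\medskip

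\noindent\textbf{Proof sketch.} The plan is to identify $\mathcal{M}_{f,\gamma}$ with the closed ideal $f\cdot A(\mathcal{R})$ in $A(\mathcal{R})$ and then to show that this ideal coincides with the set on the right-hand side of (\ref{zeroes}).

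First I would establish the easy inclusion ($\subseteq$). Note that the right-hand side of (\ref{zeroes}) is a closed subspace of $A(\mathcal{R})$: if $\psi_n \to \psi$ uniformly on $\mathcal{R}$ with each $\psi_n$ vanishing at $s_k$ to order at least $m_k$, then by Cauchy's estimates the derivatives $\psi_n^{(j)}(s_k)$ converge to $\psi^{(j)}(s_k)$, so the same vanishing property is preserved. Each generator $E_z\cdot f$ clearly vanishes at $s_k$ with order exactly $m_k$ (since $E_z$ is nowhere zero), so every linear combination, and hence every element of the closure $\mathcal{M}_{f,\gamma}$, lies in the right-hand side.

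For the reverse inclusion, I would proceed in two steps. Step (a): any $\psi$ on the right-hand side can be written $\psi = f\cdot h$ with $h\in A(\mathcal{R})$. This is standard: the quotient $\psi/f$ has only removable singularities at the zeros $s_k$ (by the order hypothesis), so it extends to an element of $A(\mathcal{R})$. Step (b): the linear span of $\{E_z : z\in\mathbb{C}\}$ is dense in $A(\mathcal{R})$. Since $\mathcal{R}$ is the closure of a bounded simply connected domain, its complement in $\mathbb{S}$ is connected, so by Mergelyan's theorem (or Runge's) polynomials are dense in $A(\mathcal{R})$. It therefore suffices to approximate each monomial $\xi^n$ uniformly on $\mathcal{R}$ by a finite linear combination of the functions $\xi\mapsto e^{z\xi}$. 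This is obtained by divided differences at $z=0$, e.g.\
\[
\xi^n \;=\; \lim_{h\to 0}\; \frac{1}{h^n}\sum_{j=0}^{n}(-1)^{n-j}\binom{n}{j}\,e^{jh\xi},
\]
the convergence being uniform on the compact set $\mathcal{R}$ since $e^{z\xi}$ is jointly continuous and its power series in $z$ converges uniformly on $\mathcal{R}$ for $z$ in any compact neighborhood of $0$. Combining (a) and (b): given $\psi=fh$, approximate $h$ by $h_n\in\mathrm{span}\{E_z\}$; since multiplication by $f$ is continuous on $A(\mathcal{R})$ (with operator norm $\|f\|_\infty$), $fh_n \to fh=\psi$ uniformly on $\mathcal{R}$, and $fh_n\in\mathrm{span}\{E_z\cdot f\}$. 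Thus $\psi\in\mathcal{M}_{f,\gamma}$.

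The main obstacle is the density statement in Step (b), which genuinely uses the simple connectedness (equivalently, Runge/polynomial convexity) of $\mathcal{R}$; without it, polynomials need not be dense in $A(\mathcal{R})$ and the argument collapses. Everything else is either a routine closedness check or the elementary factorization $\psi=fh$.
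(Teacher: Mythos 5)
Your proposal is correct and follows essentially the same route as the paper: the easy inclusion from the vanishing of the generators, then for the converse the factorization $\psi/f\in A(\mathcal{R})$, Mergelyan's theorem to get polynomial density, and the observation that polynomials lie in the closed span of the exponentials $E_z$. The only cosmetic difference is that you approximate the monomials $\xi^n$ by higher-order divided differences of $e^{z\xi}$ at $z=0$, whereas the paper only exhibits $\xi$ as such a limit and invokes the algebra structure of the closed span (using $E_zE_w=E_{z+w}$); both arguments are fine.
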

\begin{proof}
It is not difficult to see that $\mathcal{M}_{f,\gamma}$ is a subset of the set appearing in the right hand side of (\ref{zeroes}).
Let us prove the other inclusion. If $\psi$ belongs to the right
hand side of (\ref{zeroes}), then $\dfrac{\psi}{f}\in A(\mathcal{R})$. Since $\mathcal{R}$ is compact with connected complement, by
Mergelyan's Theorem  (see \cite[Theorem 20.5]{WRudin} and \cite{SMerge1,SMerge2}) we know that the set of polynomials $Pol$ is dense
in $A(\mathcal{R})$. Therefore, given $\epsilon >0$ there is a polynomial $p\in \, Pol$ such that
$||\dfrac{\psi}{f}-p||_{A(\mathcal{R})}<\epsilon$. It follows that
$\psi \in cl(f \cdot Pol)$. Now we note that
$$Pol\subset cl(span\{E_z:z\in \mathbb{C}\}).$$
Indeed, it is sufficient to note that the right hand side is an algebra which contains $1$ and $\xi$ for any $\xi \in \mathcal{R}$,
and certainly, we have
$\xi=\lim\limits_{n\to\infty}\dfrac{e^{\xi 1/n}-1}{1/n}$. Therefore
$$\psi\in cl(f \cdot Pol)\subset cl(span\{f \cdot E_z: z\in \mathbb{C}\})\; .$$
\end{proof}

A special case of this lemma is in \cite[Lemma 5.4]{CPR}.

\begin{lemma}\label{Dimension}
Under the conditions of previous lemma, the space $A(\mathcal{R})/M_{f,\gamma}$ has dimension $M=m_1+m_2+\cdots +m_K$.
%the basis $\{1,s-s_1,(s-s_1)^2,\cdots,(s-s_1)^{m_1-1}\}\otimes\{1,s-s_2,(s-s_2)^2,\cdots,(s-s_2)^{m_2-1}\}\otimes\cdots \otimes \{1,s-s_K,(s-s_K)^2,\cdots,(s-s_K)^{m_K-1}\}$, and therefore its dimension
%is $\prod_{k=1}^K m_k$, where $\otimes$ stands for the Kronecker product.
\end{lemma}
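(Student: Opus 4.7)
The plan is to exhibit an explicit isomorphism $A(\mathcal{R})/\mathcal{M}_{f,\gamma} \cong \mathbb{C}^{M}$. The natural candidate is the ``Taylor jet'' evaluation map
\[
\Phi : A(\mathcal{R}) \longrightarrow \mathbb{C}^{M}, \qquad
\Phi(\psi) = \bigl( \psi^{(j)}(s_k) \bigr)_{1 \le k \le K,\; 0 \le j \le m_k - 1},
\]
which is plainly linear and continuous. The previous lemma characterizes $\mathcal{M}_{f,\gamma}$ as the set of $\psi \in A(\mathcal{R})$ that vanish at each $s_k$ to order at least $m_k$, which is precisely the vanishing of the first $m_k$ Taylor coefficients of $\psi$ at $s_k$. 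Hence $\ker \Phi = \mathcal{M}_{f,\gamma}$.

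The remaining step is surjectivity of $\Phi$. Given any target vector $\bigl(c_{k,j}\bigr)$ in $\mathbb{C}^{M}$, classical Hermite interpolation produces a polynomial $p$ of degree at most $M - 1$ satisfying $p^{(j)}(s_k) = c_{k,j}$ for every $1 \le k \le K$ and $0 \le j \le m_k - 1$; one may construct $p$ explicitly from the generalized Lagrange basis built out of the factors $\prod_{\ell \ne k}(z - s_\ell)^{m_\ell}$. Since polynomials lie in $A(\mathcal{R})$, this shows $\Phi$ is surjective. The first isomorphism theorem then yields
\[
A(\mathcal{R})/\mathcal{M}_{f,\gamma} \;\cong\; \mathbb{C}^{M},
\]
and the dimension count $\dim = m_1 + \cdots + m_K$ follows.

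I do not expect any serious obstacle: the previous lemma already supplies the non-trivial analytic content (via Mergelyan's theorem), so what remains is the purely algebraic identification of the quotient with the space of Hermite data. The only point that deserves a line of care is writing the Hermite interpolant as an element of $A(\mathcal{R})$ — but since polynomials are automatically in $A(\mathcal{R})$, this is immediate.
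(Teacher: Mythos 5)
Your proof is correct, but it follows a genuinely different route from the paper. The paper's argument is "internal": it first exhibits the explicit basis $\{\overline{1},\overline{(z-\omega)},\dots,\overline{(z-\omega)^{m-1}}\}$ of $A(\mathcal{R})/((z-\omega)^m)$ for a single zero, and then tries to reduce the general case to that one by arguing that the quotient by $(\prod_k(z-s_k)^{m_k})$ coincides with the quotient by $((z-s_k)^{\sum_j m_j})$ after rewriting each factor $(z-s_i)^{m_i}$ as $(z-s_k)^{m_i}+p_i(z)$. Your argument is "external": you map $A(\mathcal{R})$ onto $\mathbb{C}^M$ by the jet-evaluation map $\Phi$, identify $\ker\Phi$ with $\mathcal{M}_{f,\gamma}$ using exactly the characterization supplied by the previous lemma, obtain surjectivity from Hermite interpolation, and invoke the first isomorphism theorem. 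Your version is shorter, avoids the somewhat delicate manipulation of distinct ideals that the paper relies on (the two ideals compared there are not literally equal, only the resulting dimension counts agree), and has the added benefit of producing precisely the dual functionals $\psi\mapsto\psi^{(j)}(s_k)$ that the paper needs anyway in the subsequent completion of the proof of Proposition \ref{FDim}; what it does not give directly is an explicit basis of the quotient space itself, which the paper's construction provides. One shared implicit assumption worth a sentence in either treatment: the evaluations $\psi^{(j)}(s_k)$ only make sense when the $s_k$ lie in the interior of $\mathcal{R}$ (elements of $A(\mathcal{R})$ are merely continuous on the boundary), which is guaranteed by the way $\gamma$ is chosen in the application.
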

\begin{proof}
We can note that $M_{f,\gamma}=(\prod_{k=1}^K (z-s_k)^{m_k})$ is a closed ideal of $A(\mathcal{R})$.
%Now suppose that $K=1$, i.e. %The proof of Lemma will be given by induction in the number of zeros of $\phi$.\\
%if $\phi$ has only one zero, say $s_1$ of multiplicity $m_1$. Then
First of all, given a complex number $\omega \in \mathcal{R}$ and an integer number $m>0$ we claim that the quotient space
$A(\mathcal{R})/((z-\omega)^{m})$ has dimension $m$ and that a basis is given by the set
$\beta_1:=\{\overline{1},\overline{z-\omega},\overline{(z-\omega)^2},\cdots,\overline{(z-\omega)^{m-1}}\}$ (here an overline indicates
equivalence class). In fact, let $\alpha_0,\alpha_1,\cdots ,\alpha_{m-1}$ be complex numbers, then
$$\sum_{l=0}^{m-1}\alpha_l(z-\omega)^{l}=0\, \, \quad \mbox{ belongs to } \quad \, \, A(\mathcal{R})/((z-\omega)^{m})$$
if and only if $\sum_{l=0}^{m-1}\alpha_l(z-\omega)^{l} \in ((z-\omega)^{m})$. Thus, there exists a function $\psi \in ((z-\omega)^{m})$ such that
$\sum_{l=0}^{m-1}\alpha_l(z-\omega)^{l}=\psi(z)$, and therefore $\alpha_l=\dfrac{1}{l!}\dfrac{d^l}{dz^l}\psi(z)|_{z=\omega}=0,$ for
$\, \, 0\leq l \leq m-1$. It follows that $\beta_1$ is a linearly independent set. Now let $\overline{h}\in A(\mathcal{R})/((z-\omega)^{m})$
and consider the complex numbers $\alpha_l=\dfrac{1}{l!}\dfrac{d^l}{dz^l}h(z)|_{z=\omega}, \, \, l=0,1,\cdots, m-1$; then
$$h(z)=\sum_{l=0}^{m-1}\alpha_l(z-s_1)^{l}\, \, \quad \mbox{ belongs to } \quad \, \, A(\mathcal{R})/((z-\omega)^{m})\; .$$

\noindent As a second step, we show that for any $k\in \{1,2,\cdots,K\}$ the following equality holds
$$A(\mathcal{R})/M_{f,\gamma}=A(\mathcal{R})/((z-s_k)^{\sum_{j=1}^K m_j})\; .$$
In fact, fix any $k\in\{1,2,\cdots,K\}$ and set $h\in A(\mathcal{R})/M_{f,\gamma},$ then
$$h(z)=r(z)+\prod_{i=1}^K (z-s_i)^{m_i}\psi_0(z)\; .$$
Also, we have
$$(z-s_i)^{m_i}=((z-s_k)+(s_k-s_i))^{m_i}=\sum_{n=0}^{m_i} a_n(z-s_k)^{m_i-n}(s_k-s_i)^{n}=(z-s_k)^{m_i} + p_i(z)\; ,$$
where the polynomial $p_i$ has degree $m_i-1$; therefore we obtain  %lower order terms.
$$h(z)=r(z)+\prod_{i=1}^K (z-s_i)^{m_i}\psi_0(z)= r(z)+r_1(z)+(z-s_k)^{\sum_{j=1}^K m_j}\psi_0(z),$$
which implies that $h\in A(\mathcal{R})/((z-s_k)^{\sum_{j=1}^K m_j})$.\\

\noindent Conversely, let $h\in A(\mathcal{R})/((z-s_k)^{\sum_{j=1}^K m_j})$. Then
$$h(z)= r(z)+(z-s_k)^{\sum_{j=1}^K m_j}\psi_0(z)=r(z)+\prod_{i=1}^K (z-s_k)^{m_i}\psi_0(z)\; .$$
Now, as in the previous step, we have
$$(z-s_k)^{m_i}=((z-s_i)+(s_i-s_k))^{m_i}=\sum_{n=0}^{m_i} a_n(z-s_i)^{m_i-n}(s_i-s_k)^{n}=(z-s_i)^{m_i} + p_i(z),$$ where the polynomial $p_i$
has degree $m_i-1$. It follows that
$$h(z)=r(z)+r_1(z)+\prod_{i=1}^K (z-s_i)^{m_i}\psi_0(z)\; ,$$
which implies that $h\in A(\mathcal{R})/M_{f,\gamma}\,$. Therefore, using the first step, we conclude that $m_1+m_2+\cdots+m_K$ is
the dimension of the quotient space $A(\mathcal{R})/M_{f,\gamma}\,$, as claimed.
%
%\noindent Now let us suppose that the number of zeros of $\phi$ is two, say $s_1,s_2$ with multiplicities $m_1$  and $m_2$ respectively. Then,
% we can see that the complex space $K(s_1,s_2)$ generated by the Kronecker product $\{1,z-s_1,(z-s_1)^2,\cdots,(z-s_1)^{m_1-1}\}\otimes\{1,z-s_2,(z-s_2)^2,\cdots,(z-s_2)^{m_2-1}\}$ modulo $M_{\phi,\gamma}$, is a nontrivial subspace of $A(\Omega)/M_{\phi,\gamma}$; therefore $dim_{\mathbb{C}}(A(\Omega)/M_{\phi,\gamma})\geq m_1 m_2$.
%
%\noindent Let $K$ be the number of zeros of $\phi$ inside $\Omega$ and suppose that the statement is true for $K-1$ zeros. Let us prove
%
\end{proof}

\noindent As an immediate consequence of this lemma we have:

\begin{cor}
 Let $\Omega \subset \mathbb{C}$ be an unbounded domain and assume that $f\in Hol(\Omega)$ has an infinite number of zeros $\{s_k\}$
 with multiplicity  $m_k$  for $k \in \{1,2,3, \cdots\}$. Then the space $A(\Omega)/M_{f}$ has infinite dimension, where
%\begin{lem}
%Let $\{s_k\}_{k=1}^K$ be an enumeration of the zeros of $f$ in $\mathcal{R}$ and let $m_k$ denote their corresponding %miltiplicities. Then
\begin{equation*}
\mathcal{M}_{f}=
\{\psi\in Hol(\Omega): \psi \, \, is \, \, zero \, \, at\, \, s_k\, \, with \, \, multiplicity \, \, \geq m_k, \, \, 1\leq k < \infty\}\; .
\end{equation*}
%\end{lem}
%
\end{cor}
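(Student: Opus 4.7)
The plan is to reduce the unbounded case to finitely many applications of the previous lemma. Since the corollary asserts infinite dimensionality, it suffices to exhibit, for each positive integer $N$, a subspace of $Hol(\Omega)/\mathcal{M}_f$ of dimension at least $N$. So I fix $N$, select finitely many zeros $s_1,\dots,s_K$ with $M:=m_1+\cdots+m_K\geq N$, and construct $M$ holomorphic functions on $\Omega$ that are linearly independent modulo $\mathcal{M}_f$.

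First I would pick a bounded simply connected open set $U\subset \Omega$ with $\overline{U}\subset \Omega$, such that $\overline{U}$ has connected complement (for instance a disk, or a thickened tree of disks), chosen small enough so that $\{s_1,\dots,s_K\}\subset U$ but $U$ contains no other zero of $f$. This is possible because the zeros of $f$ are isolated in $\Omega$, so one can enclose any prescribed finite subset in such a region while avoiding the remaining (at most countable, discrete) set of zeros. Setting $\mathcal{R}=\overline{U}$ and $\gamma=\partial U$, Lemma~\ref{Dimension} (applied to $f$ restricted to $\mathcal{R}$) yields that $A(\mathcal{R})/\mathcal{M}_{f,\gamma}$ has dimension exactly $M$. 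By Mergelyan's theorem, every class in this quotient has a polynomial representative; hence I can choose polynomials $\phi_1,\dots,\phi_M$, viewed as elements of $Hol(\Omega)$, whose restrictions to $\mathcal{R}$ form a basis of $A(\mathcal{R})/\mathcal{M}_{f,\gamma}$.

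The remaining step is to show these $\phi_i$ are linearly independent modulo $\mathcal{M}_f$ in $Hol(\Omega)$. Suppose $\sum_{i=1}^M c_i\phi_i\in \mathcal{M}_f$. By definition of $\mathcal{M}_f$, the function $\sum c_i\phi_i$ vanishes at every zero $s_k$ of $f$ with order at least $m_k$. In particular, restricting to $\mathcal{R}$, it vanishes at $s_1,\dots,s_K$ with the required multiplicities, and therefore $\sum c_i \phi_i|_{\mathcal{R}} \in \mathcal{M}_{f,\gamma}$ by the description provided in the lemma preceding \ref{Dimension}. Linear independence of the $\phi_i|_{\mathcal{R}}$ in $A(\mathcal{R})/\mathcal{M}_{f,\gamma}$ forces $c_1=\cdots=c_M=0$. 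Since $N$ was arbitrary and $f$ has infinitely many zeros (so $M$ can be made as large as desired), $Hol(\Omega)/\mathcal{M}_f$ is infinite-dimensional.

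The one point needing a little care is the choice of the auxiliary region $\mathcal{R}$: one must simultaneously have $\mathcal{R}\subset\Omega$, $\mathcal{R}$ simply connected, $\mathcal{R}^c$ connected (to apply Mergelyan), and $\mathcal{R}$ meeting the zero set of $f$ exactly in $\{s_1,\dots,s_K\}$. This is the only geometric obstacle, and it is easily handled since $\Omega$ is open and the zeros of $f$ form a discrete subset of $\Omega$; I would construct $\mathcal{R}$ as a closed topological disk obtained by fattening a piecewise-linear arc through the chosen zeros and shrinking the fattening near the remaining ones.
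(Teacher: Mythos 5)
Your proof is correct, and it follows the route the paper itself intends: the paper states this corollary without proof as ``an immediate consequence'' of Lemma~\ref{Dimension}, and your argument (restrict to a bounded region $\mathcal{R}$ containing $N$ or more zeros counted with multiplicity, apply the finite-dimensional lemma there, and pull linear independence back to $Hol(\Omega)/\mathcal{M}_f$ via the restriction map and the characterization of $\mathcal{M}_{f,\gamma}$) is exactly the right way to make that immediacy precise. Your care with the choice of $\mathcal{R}$ (compact, contained in $\Omega$, connected complement, meeting the zero set only in $\{s_1,\dots,s_K\}$) addresses the only point that genuinely needs checking.
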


\noindent {\sl Now we proceed to finish the proof of Proposition \ref{FDim}.}% \noindent  Proposition \ref{FDim}.
%\begin{proof}{\sl Proposition \ref{FDim}}.
\begin{proof}
Let $\phi \in Exp(\Omega)$ be given. From \cite[Theorem 5.3.12]{BoasBook} we have that its type is
$\tau_{\phi}=\max_{\omega \in s(\mathcal{B}(\phi))}|\omega|$. Since $s(\mathcal{B}(\phi)) \subset \Omega$ and is a discrete set, we can find a
curve $\gamma$ in $\Omega$ whose enclosed region $\mathcal{R}$ contains the set $s(\mathcal{B}(\phi))$ (i.e $\gamma \in H_1(\phi)$) and
such that it also contains all zeros $s_i \in \Omega$ of the symbol $f$ with $ |s_i|<\tau_{\phi}\,$. Let $\{s_i\}_{i=1}^k$ be an enumeration of
the zeros of $f$ in $\mathcal{R}\cap B_{\tau_{\phi}}(0)$  and let $m_i$ denote their corresponding multiplicities. We note also that (using
Proposition \ref{ProChar}) we know that there exist a measure $\mu$ supported on $\gamma \in H_1(\phi)$ such that $\phi=\mathcal{P}(\mu)$.

% Since the set of the singularities $s(\mathcal{B}(\phi))$ is finite and
Now, we note that an element $h\in A(\mathcal{R})/M_{f,\gamma}$ is completely determined by the following set
\begin{equation}\label{Elem}
\left\{\dfrac{d^j}{dz^j}h(z)|_{z=s_i}: 0\leq j\leq m_i-1;\, \, 1\leq i\leq k\right\}\; .
\end{equation}
From Lemma \ref{Dimension}, we have that $A(\mathcal{R})/M_{f,\gamma}$ has dimension $m_1+m_2+\cdots +m_k$; therefore its dual space has the
same dimension. Moreover, it is not difficult to see that the following collection of linear functionals
$$\left\{d_{i,j}=\dfrac{d^j}{dz^j}|_{z=s_i}: 0\leq j\leq m_k-1;\, \, 1\leq i\leq k \right\}\; ,$$
are $m_1+m_2+\cdots m_k$-elements in the space $ (A(\mathcal{R}))^*$ which annihilate $M_{f,\gamma}$; therefore they induces the following
$m_1+m_2+\cdots m_k$-linearly independent elements in the dual space of $ A(\mathcal{R})/M_{f,\gamma}$
$$  \{\widetilde{d_{i,j}}: 0\leq j\leq m_k-1;\, \, 1\leq i\leq k\}\; ;$$
where $\widetilde{d_{i,j}}(\overline{\phi})=d_{i,j}(\phi)$ for $\overline{\phi}\in A(\mathcal{R})/M_{f,\gamma}\,$.
Consequently, every element $\varrho \in (A(\mathcal{R})/M_{f,\gamma})^*$ can be written in the form
$$\varrho =\sum_{i=1}^k \sum_{j=0}^{m_i-1}a_{i,j}\widetilde{d_{i,j}} $$
for some $a_{i,j}\in \mathbb{C}$. Now, given and element $\phi \in  A(\mathcal{R})$, there exist a unique $r\in A(\mathcal{R})$ such that
$\overline{\phi}= \overline{r}$ in $A(\mathcal{R})/M_{f,\gamma}$, and using the characterization given in (\ref{Elem}) we have
$$\varrho(\overline{\phi})=\varrho(\overline{r})=\sum_{i=1}^k \sum_{j=0}^{m_i-1}a_{i,j}d_{i,j}(r)=
\sum_{i=1}^k \sum_{j=0}^{m_i-1}a_{i,j}\dfrac{d^j}{dz^j}(\phi)|_{z=s_i}\; .$$
On the other hand, from the equation $\mathcal{P}(f\cdot\mu)=0$ we have that the measure $\mu$ defines a functional on $A(\mathcal{R})$ which
annihilates $M_{f,\gamma}$ and it induces a functional $\widetilde{\mu}$ on $A(\mathcal{R})/M_{f,\gamma}\,$. Therefore, there exist complex
numbers $b_{i,j}$ such that
$$\widetilde{\mu}=\sum_{i=1}^k \sum_{j=0}^{m_i-1}b_{i,j}\widetilde{d_{i,j}}\; .$$
 Then, we have
\begin{eqnarray*}
\phi(t)&=&\mathcal{P}(\mu)(t)=
\int_{\gamma}e^{tz}d\mu(z)=\widetilde{\mu}(\overline{e^{tz}})=\sum_{i=1}^k \sum_{j=0}^{m_i-1}a_{i,j}\dfrac{d^j}{dz^j}(e^{tz})|_{z=s_i}\\
&=&\sum_{i=1}^k \left( \sum_{j=0}^{m_i-1}a_{i,j}t^j \right) e^{ts_i}\\
&=&\sum_{i=1}^k p_i(t)e^{ts_i}\; .
\end{eqnarray*}
The proof of Proposition \ref{FDim} is finished.
\end{proof}
%\hspace{14.25cm} $\qed $

%\end{proof}

\begin{cor}\label{corcor}
Let $R>0$ and $f\in Hol(B_R(0))$. Then, a function $\phi \in Exp(B_R(0))$ of exponential type $\tau_{\phi}$ is a solution of the homogeneous equation  $f(\partial_t)\phi=0$, if and only if there exist polynomials $p_k$ of degree less than the multiplicity of the root $s_k \in \mathcal{Z}(f)\cap B_{\tau_{\phi}}(0)$, such that
$$\phi(t)=\sum_{\substack{s_k\in \mathcal{Z}(f) \\ |s_k|<\tau_{\phi}}} p_k(t)e^{ts_k}\; .$$
\end{cor}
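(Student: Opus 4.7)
The plan is to observe that Corollary \ref{corcor} is essentially a direct specialization of Proposition \ref{FDim} to the concrete Runge domain $\Omega = B_R(0)$, so almost nothing new has to be proved. First I would point out that $B_R(0)$ is open, convex and in particular simply connected, hence a Runge domain in the sense used in Subsection 3.2; therefore every $f\in Hol(B_R(0))$ and every $\phi\in Exp(B_R(0))$ fall within the scope of Proposition \ref{FDim} with $\Omega=B_R(0)$.

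Next I would invoke Proposition \ref{FDim} verbatim. For the \emph{sufficiency} direction, given any choice of polynomials $p_k$ with $\deg p_k$ strictly less than the multiplicity of $s_k\in\mathcal{Z}(f)\cap B_{\tau_\phi}(0)$, Proposition \ref{FDim} tells us that
\[
\phi(t)=\sum_{\substack{s_k\in\mathcal{Z}(f)\\ |s_k|<\tau_\phi}} p_k(t)\,e^{t s_k}
\]
lies in $Exp(\Omega)$ and solves $f(\partial_t)\phi=0$. For the \emph{necessity} direction, given $\phi\in Exp(B_R(0))$ with $f(\partial_t)\phi=0$, the same proposition produces the required polynomials $p_k$ indexed by $\mathcal{Z}(f)\cap B_{\tau_\phi}(0)$.

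The only mild compatibility check, which I would include for the reader's comfort, is that the zero set indexing is identical in both statements: since $s(\mathcal{B}(\phi))\subset B_R(0)$, by \cite[Theorem 5.3.12]{BoasBook} we have $\tau_\phi=\max_{\omega\in s(\mathcal{B}(\phi))}|\omega|\le R$, so $B_{\tau_\phi}(0)\subseteq B_R(0)$ and the condition $s_k\in\mathcal{Z}(f)\cap B_{\tau_\phi}(0)$ automatically entails $s_k\in B_R(0)$, which is exactly the domain of holomorphy of $f$. There is no genuine obstacle in this argument; if anything, the only subtle point is to emphasize that the relevant zeros are those inside $B_{\tau_\phi}(0)$ rather than all of $B_R(0)$, a restriction which is inherited directly from Proposition \ref{FDim} and reflects the fact that the curve $\gamma\in H_1(\phi)$ used in Definition \ref{Def_Oper} can be chosen arbitrarily close to $\partial B_{\tau_\phi}(0)$, thus only seeing zeros of $f$ of modulus strictly less than $\tau_\phi$.
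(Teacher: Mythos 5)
Your proposal is correct and is exactly how the paper treats this statement: the corollary is presented there with no separate proof, as an immediate specialization of Proposition \ref{FDim} to the simply connected (Runge) domain $\Omega=B_R(0)$. Your compatibility check that $\tau_\phi\le R$ (so that $\mathcal{Z}(f)\cap B_{\tau_\phi}(0)\subseteq B_R(0)$) is a sensible addition but introduces nothing beyond what Proposition \ref{FDim} already delivers.
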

%\noindent
In particular, if the symbol $f$ is an entire function, we deduce Theorem  5.1 in \cite{CPR} from Corollary \ref{corcor}. The following theorem
is an easy application of the previous results; it generalizes Proposition \ref{FDim}.
\begin{theorem}\label{TheSol}
Let $f \in Hol(\Omega)$ and
$g\in Exp(\Omega)$. Then a function $\phi \in Exp(\Omega)$ of exponential type $\tau_{\phi}$ is solution for the non-homogeneous equation $f(\partial_t)\phi=g$ if and only if there exist polynomials $p_k$ of degree less than the multiplicity of the root $s_k \in \mathcal{Z}(f)\cap B_{\tau_{\phi}}(0)$, such that
$$\phi(t)=\mathcal{P}\left( \dfrac{\mu_{g}}{f} \right)(t)+\sum_{\substack{s_k\in \mathcal{Z}(f) \\ |s_k|<\tau_{\phi}}} p_k(t)e^{ts_k}\; .$$
%where $\mathcal{P}$ is the Polya transform and $\mu_{g,R}$ is a complex measure with support on the circle $C_R:=\{s\in \mathbb{C}: |s|=R\}$.
\end{theorem}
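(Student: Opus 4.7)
The plan is to decompose the problem into the classical template: the general solution of the inhomogeneous equation equals any particular solution plus the general homogeneous solution. The particular solution is supplied by Lemma \ref{SurjLem}, which produces $\phi_p := \mathcal{P}(\mu_g/f)$ satisfying $f(\partial_t)\phi_p = g$; the homogeneous solutions are described by Proposition \ref{FDim}. Linearity of $f(\partial_t)$ (established in the preceding lemma) is the glue that allows both directions.

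For sufficiency, I would take a $\phi$ of the stated form and verify $f(\partial_t)\phi = g$ by linearity. The first summand contributes $g$ because $f(\partial_t)\mathcal{P}(\mu_g/f) = \mathcal{P}\bigl(f\cdot \mu_g/f\bigr) = \mathcal{P}(\mu_g) = g$, using that $\gamma$ is chosen disjoint from $\mathcal{Z}(f)$ as in Lemma \ref{SurjLem}. Each exponential-polynomial summand is annihilated by $f(\partial_t)$ by the sufficiency direction of Proposition \ref{FDim}. Summing via linearity gives $f(\partial_t)\phi = g$. One must also check that $\phi \in Exp(\Omega)$ with exponential type $\tau_{\phi}$; the exponential-polynomial part has type $\max_k |s_k| < \tau_\phi$ by construction, and $\phi_p \in Exp(\Omega)$ by Lemma \ref{SurjLem}, so $\phi$ is entire of exponential type with Borel singularities inside $\Omega$.

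For necessity, I would take any solution $\phi \in Exp(\Omega)$ of type $\tau_\phi$ and consider $\psi := \phi - \mathcal{P}(\mu_g/f)$. By linearity, $f(\partial_t)\psi = g - g = 0$, so $\psi$ is a homogeneous solution. Invoking the necessity part of Proposition \ref{FDim} for $\psi$ produces polynomials $p_k$ of degree below the multiplicity of each $s_k \in \mathcal{Z}(f) \cap B_{\tau_{\psi}}(0)$ such that $\psi(t) = \sum p_k(t)\, e^{ts_k}$, and rearranging yields the claimed representation for $\phi$.

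The main subtlety, and the step I expect to be the true obstacle, is bookkeeping of exponential types. The particular solution $\phi_p = \mathcal{P}(\mu_g/f)$ obtained from Lemma \ref{SurjLem} depends on the choice of the curve $\gamma \in H_1(g)$, and its Borel singularities include both $s(\mathcal{B}(g))$ and the zeros of $f$ enclosed by $\gamma$; consequently the type of $\phi_p$ and of $\psi$ need not be related to $\tau_\phi$ a priori. To make Proposition \ref{FDim} produce exactly the set $\{s_k \in \mathcal{Z}(f): |s_k|< \tau_\phi\}$ and no spurious modes, I would choose $\gamma$ to enclose $s(\mathcal{B}(g)) \cup s(\mathcal{B}(\phi))$ and exactly the zeros of $f$ inside $B_{\tau_\phi}(0)$, then argue via Theorem 5.3.12 of \cite{BoasBook} (already used in the proof of Proposition \ref{FDim}) that the resulting $\psi$ has exponential type at most $\tau_\phi$. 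With that choice, the representation of $\psi$ from Proposition \ref{FDim} ranges precisely over $\{s_k : |s_k|<\tau_\phi\}$, as required.
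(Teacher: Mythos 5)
Your proposal is correct and follows exactly the route the paper intends: Theorem \ref{TheSol} is stated there as ``an easy application of the previous results,'' namely the particular solution $\mathcal{P}(\mu_g/f)$ from Lemma \ref{SurjLem} plus the homogeneous solutions classified by Proposition \ref{FDim}, combined through linearity of $f(\partial_t)$. Your extra care about the choice of the curve $\gamma$ and the exponential-type bookkeeping for $\psi=\phi-\mathcal{P}(\mu_g/f)$ is a sensible tightening of a point the paper leaves implicit, not a deviation from its argument.
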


%\begin{rem} We must alert that changing our region $\Omega$, we  can obtain another solution for the general equation studied here.
%\end{rem}

\section{Linear zeta-nonlocal field equations}

Now we apply the theory developed in the previous section to find explicit solutions of the following linear
zeta-nonlocal field equation:

\begin{equation}\label{EquEnt}
\zeta(\partial^2_t+h)\phi=g\;  ,
\end{equation}
in which $h$ is a real parameter. Our solution depends crucially on the properties of $g$. We show that if $g$ is of exponential type, then so is
$\phi$ and solving (\ref{EquEnt}) explicitly is rather straightforward. However, if the data $g$ is not of exponential type, analysis become
very delicated. We consider this problem in 4.2, in which we assume that the Laplace transform $\mathcal{L}(g)$ exists and it has an analytic
extension to an appropriated angular contour. In section we use
notation introduced in Section 2.

\subsection{Zeta-nonlocal field equation with source function in $Exp(\Omega)$}

Equation (\ref{EquEnt}) can be solved completely in the space of entire functions of exponential type. Since Equation (\ref{EquEnt}) depends on
the values of $h$, we study it in three different cases:

\subsubsection{ {\bf Case $h>1$.}}
In this case the symbol $\zeta_h\circ p(s)=\zeta(s^2+h)$ has poles $i\sqrt{h-1}$ and $-i\sqrt{h-1}$. As we have already pointed out,
the behavior of $\zeta_h\circ p(s)$ can be represented in the following picture:
\begin{center}
\includegraphics[width=4cm, height=4cm]{111}
\end{center}
%\end{figure}
{\footnotesize
The poles of $\zeta(s^2+h)$ are the vertices of dark hyperbola, indicated by two thick dots. The trivial zeros of
$\zeta(s^2+h)$ are indicated by thin dots on the imaginary axis; and the non-trivial zeros are located on the darker
painted region (critical region).}
\\

Now, let us consider the simply connected domain
$$\Omega := \mathbb{C} \setminus \left\{ s\in \mathbb{C}: Re(s)\geq 0, \; \; |Im(s)|=\sqrt{h-1} \right\}\; .$$
%$\mathbb{C}\smallsetminus \{i\sqrt{h-1},-i\sqrt{h-1}\}$
%(UPS! CON CUIDADO ELEGIR $\Omega$ COMO UN DOMINIO RUNGE, VER Dubinskii NUEVAMENTE)
We see that the symbol $\zeta_h(s)$ is holomorphic in $\Omega$, and therefore for a source function $g \in Exp(\Omega)$, equation (\ref{EquEnt})
is the following integral equation for the measure $\mu_{\phi}$:
%In a more general form, it is possible to solve the integral equation (\ref{EqInt}) in $Exp(\Omega)$:
\begin{equation}\label{ZIntEq}
\mathcal{P}((\zeta_h\circ p)\cdot \mu_{\phi})(t)=g(t)\; .
\end{equation}
%\int_{\gamma}e^{ts}\zeta(s^2+h)d\mu_{\phi}(s)=g(t), where $\gamma \in H_1(\phi)$.
\begin{theorem}\label{TeoIEZ}
Let $g\in Exp(\Omega)$. %, then the integral equation (\ref{EquEnt}) has a solution $f \in Exp(\Omega)$.\\
Then a function $\phi \in Exp(\Omega)$ of exponential type $\tau_{\phi}$ is solution for the integral equation $(\ref{ZIntEq})$
%$f(\partial_t)\phi=g$
if and only if there exist polynomials $p_k$ of degree less than the multiplicity of the root $s_k \in \mathcal{Z}(\zeta_h\circ p)\cap B_{\tau_{\phi}}(0)$, such that
% Then a solution $\phi \in Exp(\Omega)$, for the integral equation (\ref{ZIODEq}) is given by
$$\phi(t)=\int_{\gamma}\dfrac{e^{ts}}{\zeta(s^2+h)} d\mu_g(s)+ \sum_{\substack{s_k\in \mathcal{Z}(\zeta_h\circ p) \\ |s_k|<\tau_{\phi}}} p_k(t)e^{ts_k}.$$
Where $\gamma \in H_1(g)$ and enclose the root $s_k \in \mathcal{Z}(\zeta_h\circ p)\cap B_{\tau_{\phi}}(0)$.
\end{theorem}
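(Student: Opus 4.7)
The plan is to reduce this theorem directly to the general representation result Theorem \ref{TheSol} with the specific choice of symbol $f = \zeta_h \circ p$, where $(\zeta_h \circ p)(s) = \zeta(s^2+h)$. Before doing so, I need to verify the hypotheses on the domain and the symbol. The domain $\Omega = \mathbb{C} \setminus \{s : \mathrm{Re}(s) \geq 0,\; |\mathrm{Im}(s)| = \sqrt{h-1}\}$ is the plane slit along two horizontal rays, hence simply connected (equivalently, a Runge domain). The only singularities of $\zeta(s^2+h)$ come from the unique pole $s=1$ of $\zeta$, giving $s^2+h=1$, i.e., $s = \pm i\sqrt{h-1}$; these are the vertices of the removed rays and hence excluded from $\Omega$, so $\zeta_h \circ p \in Hol(\Omega)$.

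Next, I would translate Equation (\ref{ZIntEq}) into the form covered by the general theory. By Definition \ref{Def_Oper} and the integral representation (\ref{EqInt}), the equation $\mathcal{P}((\zeta_h \circ p)\mu_\phi)(t) = g(t)$ is precisely the nonlocal equation $(\zeta_h \circ p)(\partial_t)\phi = g$ in $Exp(\Omega)$. Applying Proposition \ref{ProChar} to $g$ yields a complex measure $\mu_g$ supported on a curve $\gamma \in H_1(g)$ with $g = \mathcal{P}(\mu_g)$; arguing as in Lemma \ref{SurjLem}, since the zeros of $\zeta_h \circ p$ form a discrete set, $\gamma$ may be chosen to avoid $\mathcal{Z}(\zeta_h \circ p)$ while still enclosing the relevant zeros $s_k \in \mathcal{Z}(\zeta_h \circ p) \cap B_{\tau_\phi}(0)$, so that $\mu_g/(\zeta_h \circ p)$ is a well-defined measure on $\gamma$.

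Theorem \ref{TheSol} then immediately supplies the representation
\[
\phi(t) = \mathcal{P}\!\left(\frac{\mu_g}{\zeta_h \circ p}\right)(t) + \sum_{\substack{s_k \in \mathcal{Z}(\zeta_h \circ p) \\ |s_k| < \tau_\phi}} p_k(t)\, e^{ts_k},
\]
with the $p_k$ polynomials of degree strictly less than the multiplicity of $s_k$. Unfolding the $\mathcal{P}$-transform of $\mu_g/(\zeta_h \circ p)$ using the definition (\ref{defMea}) of $\mu_g$ in terms of the Borel transform gives
\[
\mathcal{P}\!\left(\frac{\mu_g}{\zeta_h \circ p}\right)(t) = \int_\gamma \frac{e^{ts}}{\zeta(s^2+h)}\, d\mu_g(s),
\]
which is the first term in the claimed formula. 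The converse direction (sufficiency) is equally immediate: Proposition \ref{FDim} guarantees that the sum $\sum p_k(t)e^{ts_k}$ lies in the kernel of $(\zeta_h \circ p)(\partial_t)$, and a direct computation (as in the proof of Lemma \ref{SurjLem}) shows that the particular integral satisfies the inhomogeneous equation.

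The main obstacle, which is really bookkeeping rather than conceptual, is ensuring that $\gamma$ can simultaneously (i) lie in $H_1(g)$, (ii) avoid the discrete set $\mathcal{Z}(\zeta_h \circ p)$, and (iii) enclose all zeros $s_k$ with $|s_k| < \tau_\phi$; this is where the geometry of $\Omega$ and the pole locations $\pm i\sqrt{h-1}$ enter. Otherwise, the entire argument is a direct specialization of the general theory of Section 3.
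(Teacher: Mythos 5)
Your proposal is correct and follows essentially the same route as the paper: Theorem \ref{TeoIEZ} is stated there as a direct specialization of the general theory of Section 3 (Proposition \ref{ProChar}, Lemma \ref{SurjLem}, Proposition \ref{FDim} and Theorem \ref{TheSol}) to the symbol $f=\zeta_h\circ p$ on the slit domain $\Omega$, with the same verification that the poles $\pm i\sqrt{h-1}$ lie outside $\Omega$ and the same care in choosing $\gamma\in H_1(g)$ avoiding $\mathcal{Z}(\zeta_h\circ p)$ while enclosing the zeros $s_k$ with $|s_k|<\tau_{\phi}$.
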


\begin{rem}
In this theorem (and also in the results that follow) we find that
the solution $\varphi(t)$ depends on polynomials $p_k(t)$. These
polynomials are calculated using the zeroes (and their orders) of the function $\zeta_h\circ p$, see Proposition 3.6 and Theorem 3.11. We comment further on this in Subsection 4.2.
\end{rem}

On the other hand, we can note that for given  $R<\sqrt{h-1}$, the domain $\Omega$ contains the ball $B_R(0)$, and since the symbol $\zeta_h\circ p(s)$ is analytic in this ball, it can be expressed there in its Taylor series representation, say
$$\zeta(s^2+h)= \sum_{k=0}^{\infty}a_k(h)s^k\; , |s|<R\; .$$
Therefore, using proposition \ref{IODEq}, in the space $Exp(B_R(0))$ we have that equation (\ref{EquEnt}) can be viewed as the following infinite order differential equation
\begin{equation}\label{ZIODEq}
\sum_{k=0}^{\infty}a_k(h)\dfrac{d^k}{dt^k}\phi(t)=g(t)\; .
\end{equation}
In this situation, we have the following result:
%in which
%The following theorem is applied in the situation of the infinite order differential equation (\ref{EqIOD}).
\begin{theorem}\label{TeoIOZ}
Let $R<\sqrt{h-1}$ and $g\in Exp(B_R(0))$.% of exponential type $\tau_g<R$.
 Then, a function $\phi \in Exp(B_R(0))$ of exponential type $\tau_{\phi}$ is solution of the infinite order zeta-nonlocal field equation $(\ref{ZIODEq})$
%integral equation (\ref{ZIODEq}) %$f(\partial_t)\phi=g$
if and only if there exist polynomials $p_k$ of degree less than the multiplicity of the root $s_k \in \mathcal{Z}(\zeta_h\circ p) \cap B_R(0)$, such that %  then the  has a solution $\phi \in Exp(\Omega)$ of exponential type $<R$ given by:
 $$\phi(t)=\int_{|s|=R}\dfrac{e^{ts}}{\zeta(s^2+h)} d\mu_g(s) +\sum_{\substack{s_k\in \mathcal{Z}(\zeta_h\circ p) \\ |s_k|<\tau_{\phi}}} p_k(t)e^{ts_k}\; . $$
\end{theorem}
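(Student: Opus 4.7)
The plan is to deduce this theorem directly from Theorem \ref{TheSol} by specializing to the Runge domain $B_R(0)$, after converting the nonlocal form into its infinite-order differential form via Proposition \ref{IODEq}. First I would observe that the hypothesis $R<\sqrt{h-1}$ guarantees that $B_R(0)$ contains none of the poles $\pm i\sqrt{h-1}$ of $\zeta(s^2+h)$. Hence $\zeta_h\circ p$ is holomorphic on $B_R(0)$ and admits the Taylor expansion $\sum_{k\ge 0} a_k(h)\,s^k$ there, whose coefficients are precisely the ones appearing in (\ref{ZIODEq}).

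Next, for any $\phi\in Exp(B_R(0))$ the singularities of $\mathcal{B}(\phi)$ lie in the open ball $B_R(0)$, so the hypotheses of Proposition \ref{IODEq} are met and it yields
$$(\zeta_h\circ p)(\partial_t)\phi(t)=\sum_{k=0}^{\infty}a_k(h)\,\partial_t^k\phi(t),$$
uniformly on compact subsets of $\mathbb{C}$. Consequently, solving the infinite-order equation (\ref{ZIODEq}) in $Exp(B_R(0))$ is \emph{equivalent} to solving the nonlocal equation $(\zeta_h\circ p)(\partial_t)\phi=g$ in the same space, for which the general theory of Section 3 applies verbatim.

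Then I would apply Theorem \ref{TheSol} with $\Omega=B_R(0)$ and $f=\zeta_h\circ p\in Hol(B_R(0))$. It gives that $\phi\in Exp(B_R(0))$ of exponential type $\tau_\phi$ is a solution if and only if
$$\phi(t)=\mathcal{P}\!\left(\frac{\mu_g}{\zeta_h\circ p}\right)(t)+\sum_{\substack{s_k\in\mathcal{Z}(\zeta_h\circ p)\\ |s_k|<\tau_\phi}} p_k(t)\,e^{ts_k},$$
with each polynomial $p_k$ of degree less than the multiplicity of $s_k$. To reach the stated integral form, I would pick the specific curve $\gamma=\{|s|=R\}$ as support for the Polya measure of $g$. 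This is an admissible element of $H_1(g)$ because $g\in Exp(B_R(0))$ forces $s(\mathcal{B}(g))\subset B_{R-\epsilon}(0)$ for some $\epsilon>0$, and the associated measure is $d\mu_g(s)=\mathcal{B}(g)(s)\,ds/(2\pi i)$. Substituting this measure into $\mathcal{P}(\mu_g/(\zeta_h\circ p))$ produces exactly the integral on $|s|=R$ displayed in the theorem.

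The only mild obstacle is this final choice of contour, namely checking that the circle $|s|=R$ may legitimately carry $\mu_g$ while simultaneously avoiding the (isolated) zeros of $\zeta_h\circ p$; this is handled by a harmless perturbation of the radius using discreteness of $\mathcal{Z}(\zeta_h\circ p)$ together with the openness of $B_R(0)$. After that adjustment the argument is a direct substitution into the conclusion of Theorem \ref{TheSol}, so no new analytical work is needed beyond invoking Propositions and Theorems already established.
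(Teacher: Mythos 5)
Your proposal is correct and is essentially the paper's own route: the paper presents this theorem as a direct consequence of Proposition \ref{IODEq} (which identifies the infinite-order equation (\ref{ZIODEq}) with the nonlocal equation $\zeta(\partial_t^2+h)\phi=g$ on $Exp(B_R(0))$, since $\zeta_h\circ p$ is analytic on $B_R(0)$ for $R<\sqrt{h-1}$) combined with the representation Theorem \ref{TheSol} specialized to $\Omega=B_R(0)$. Your choice of the circle $|s|=R$ as the support of $\mu_g$ matches the paper's formula and is legitimate because $s(\mathcal{B}(g))$ is compact in $B_R(0)$ (so $\tau_g<R$) and $\zeta(s^2+h)$ has no zeros on $\overline{B_R(0)}$ in this range of $h$ and $R$, so nothing beyond the results you cite is needed.
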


\subsubsection{ {\bf  Case $h<1$.}}
In this case we have that $\sqrt{1-h}$ and $-\sqrt{1-h}$ are the poles of the symbol $\zeta_h\circ p$. The behavior of $\zeta_h\circ p$ is represented in the following picture
\begin{center}
\includegraphics[width=4cm, height=4cm]{222}
\end{center}
{\footnotesize
The poles of $\zeta(s^2+h)$ are the vertices of dark hyperbola, indicated by two thick dots. The trivial zeros of
$\zeta(s^2+h)$ are indicated by thin dots on the real axis; the non-trivial zeros are located on the darker painted
region (critical region).}

Therefore choosing as our basic region $\Omega$ the following domain:
$$\Omega := \mathbb{C} \setminus \left\{ s\in \mathbb{C}: Im(s)\geq 0, \; \; |Re(s)|=\sqrt{1-h} \right\}\; ,$$
we can obtain theorems for the equation $\zeta(\partial_t^2+h)\phi=g$ which are similar to Theorem \ref{TeoIEZ} and Theorem \ref{TeoIOZ}.

\subsubsection{ {\bf Case $h=1$.}}

In this case there is a pole at $s=0$. We saw that the behavior of $\zeta_1\circ p$ is represented in the following picture
\begin{center}
\includegraphics[width=4cm, height=4cm]{333}
\end{center}
{\footnotesize The pole of $\zeta(s^2+1)$ is the origin (vertex of
dark curves $y=|x|, y=-|x|$). The trivial zeros of $\zeta(s^2+h)$
are indicated by thin dots on the imaginary axis; the non-trivial
zeros are located on the darker painted region (critical
region).}
\\

Let $\Omega$, be the region
$$\Omega := \mathbb{C} \setminus \{s\in \mathbb{C}: Re(s)\geq 0, \; \; |Im(s)|=0 \} \; .$$
Since we cannot construct a ball around the origen in which $\zeta_1 \circ p$ is analytic, we cannot obtain a result
analogous to Theorem \ref{TeoIOZ} for the equation
$$\zeta(\partial_t^2)\phi=g\; , $$
this is, we {\em do not have} an ``infinite order equation" but a genuine nonlocal equation.
On the other hand, it is possible to state a result analogous to Theorem \ref{TeoIEZ}. We omit details.

\subsection{Zeta-nonlocal field equation with source function in $\mathcal{L}_{>}(\mathbb{R}_+)$}

Now we consider the case in which the source function $g(t), t\geq 0$ is an analytic function not necessarily of exponential
type. We assume that it possesses Laplace transform, and therefore there exists a real number $a$ such that the following
integral
$$\mathcal{L}(g)(z)=\int_0^{\infty}e^{-tz}g(t)dt \; ,$$
converges absolutely and uniformly on the half-plane $\{z\in \mathbb{C} : |z|>a\}$, and for which the function
$z\to \mathcal{L}(g)(z)$ is analytic. We also assume that $\mathcal{L}(g)$ has an analytic extension to the left of
$Re(s)=a$ until a singularity $a_0$, and that this new region of
analyticity has an angular contour $\kappa_{\infty}$ as its boundary. %See figure [PONER]

Hereafter we denote by $\mathcal{L}_{>}(\mathbb{R}_+)$ the space of analytic functions that possess the properties
described above.

The problem of interest in this situation is to solve the following equation
\begin{equation}\label{ExEq}
\zeta(\partial_t^2+h)f=g,
\end{equation}
for a given $g\in \mathcal{L}_{>}(\mathbb{R}^+)$, where the operator $\zeta(\partial_t^2+h)$ needs to be properly defined
in order to have a correctly posed problem. The solution of Equation (\ref{ExEq}) if it exists, will not necessarily be an
entire function of exponential type.

Let $g\in \mathcal{L}_{>}(\mathbb{R}_+)$ and let the first singularity of the analytic extension of  $\mathcal{L}(g)$ up to
an angular contour $\kappa_{\infty}$ be $a_0=0$. Now consider an angle $\frac{\pi}{2} < \psi \leq \pi$, a positive real number
$r>0$ and let $\kappa_r$ be a finite angular contour contained in $\kappa_{\infty}$. Concretely, $\kappa_r$ is composed by a
circular sector of radius $\delta$ centered at the origen and the respective rays of opening $\pm \psi$ as in the following
picture:
\begin{center}
\includegraphics[width=6cm, height=5cm]{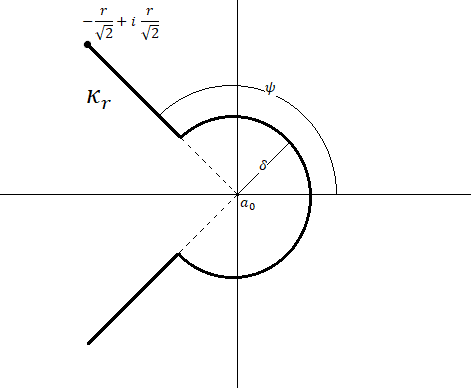}
\end{center}

\noindent Now, let us pick the complex measure
$$d\mu_r(s):=\mathcal{X}_{\kappa_r}(s)\mathcal{L}(g)(s)\frac{ds}{2\pi i}\; ,$$
where $\mathcal{X}_{\kappa_r}$ denotes the characteristic function of the contour $\kappa_r$. This measure allows us to
define the following function $g_r:\mathbb{C}\to \mathbb{C}$ using $\mathcal{P}$-transform:
$$g_r(z):=\mathcal{P}(\mu_r)(z)=\int_{\kappa_r}e^{zs}\mathcal{L}(g)(s)\frac{ds}{2\pi i}\; .$$
\begin{lemma}\label{LemFin1} We have:
\begin{enumerate}
 \item The function $g_r$ is an entire function of order $1$ and exponential type $r$. %i.e. $g_r \in E_{r}(\mathbb{C})$.
 \item For each $r>0$, the analytic continuation of the Borel Transform of $g_r$ is $\mathcal{B}(g_r)(z)=K*\mu_r(z)$,
 where $K(z)= 1/z$, and its conjugate diagram is the convex hull of the contour $\kappa_r$. In particular, if we consider
 the measure
 $$d\mu_{g_r}(s)=K*\mu_r(s)\frac{ds}{2\pi i}\; ,$$ then $g_r=\mathcal{P}(\mu_r)=\mathcal{P}(\mu_{g_r}).$
\end{enumerate}
\end{lemma}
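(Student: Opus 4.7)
\smallskip

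\noindent\textbf{Proof proposal.} The plan is to handle the two claims separately but with a shared mechanism: a direct estimate on the contour integral representation of $g_r$ for part (1), and the alternative definition of the Borel transform via the real Laplace transform (formula (\ref{EqRBT})) for part (2), followed by Polya's representation theorem for the final equality.

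For (1), since $\kappa_r$ is a finite angular contour, every point $s\in\kappa_r$ satisfies $|s|\le r$ (the rays of $\kappa_r$ have length at most $r$ and the arc has radius at most $r$). The Laplace transform $\mathcal{L}(g)$ is analytic (hence continuous) on a neighbourhood of the compact set $\kappa_r$, so $M:=\sup_{s\in\kappa_r}|\mathcal{L}(g)(s)|<\infty$. Then, for every $z\in\mathbb{C}$,
\[
|g_r(z)|\le \frac{1}{2\pi}\int_{\kappa_r}|e^{zs}|\,|\mathcal{L}(g)(s)|\,|ds|\le \frac{M\,\ell(\kappa_r)}{2\pi}\,e^{r|z|},
\]
where $\ell(\kappa_r)$ denotes the arclength of $\kappa_r$. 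Differentiation under the integral sign shows that $g_r$ is entire, so the estimate gives order $\le 1$ and exponential type $\le r$. To promote this bound to equality I would invoke the identity $\tau_{g_r}=\sup\{|\omega|:\omega\in s(\mathcal{B}(g_r))\}$ (\cite[Thm.~5.3.12]{BoasBook}) together with the explicit formula for $\mathcal{B}(g_r)$ derived in the next step: since the rays of $\kappa_r$ reach modulus $r$ and $\mathcal{L}(g)$ is not identically zero on $\kappa_r$, the Cauchy-type integral in (2) exhibits a singularity at a point of modulus $r$, forcing $\tau_{g_r}=r$.

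For (2), I compute the real Laplace transform of $g_r$ for $z\in\mathbb{R}$ with $z>r$. Using Fubini's theorem (justified because on the bounded contour $\kappa_r$ and for $z>r$ the integrand $e^{-zt}e^{st}\mathcal{L}(g)(s)$ is absolutely integrable over $(0,\infty)\times\kappa_r$), I obtain
\[
\int_0^\infty e^{-zt}g_r(t)\,dt=\int_{\kappa_r}\frac{\mathcal{L}(g)(s)}{z-s}\frac{ds}{2\pi i}=(K*\mu_r)(z).
\]
The right-hand side is analytic on $\mathbb{C}\setminus\kappa_r$ (Cauchy-type integrals are analytic off their contour), so by the uniqueness of analytic continuation it coincides with $\mathcal{B}(g_r)$ there. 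Hence $s(\mathcal{B}(g_r))\subset\kappa_r$, and by the Sokhotski--Plemelj jump relation the singularities of $K*\mu_r$ exhaust $\kappa_r$ wherever $\mathcal{L}(g)$ is non-zero; taking the closed convex hull then identifies the conjugate diagram $S(g_r)$ with $\operatorname{conv}(\kappa_r)$. Finally, choosing any curve $\gamma\in H_1(g_r)$ enclosing $\operatorname{conv}(\kappa_r)$, Polya's representation theorem (Theorem \ref{Pol_Rep}) gives
\[
g_r(z)=\frac{1}{2\pi i}\int_\gamma e^{zs}\mathcal{B}(g_r)(s)\,ds=\frac{1}{2\pi i}\int_\gamma e^{zs}(K*\mu_r)(s)\,ds=\mathcal{P}(\mu_{g_r})(z),
\]
while the identity $g_r=\mathcal{P}(\mu_r)$ is the very definition of $g_r$.

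The routine obstacles are two: a careful verification of Fubini (clean, but must be written), and the identification of $S(g_r)$ with $\operatorname{conv}(\kappa_r)$ rather than merely a subset thereof. The latter is the true technical point, because one must rule out accidental cancellations of the jump of $K*\mu_r$ along $\kappa_r$; this is where the hypothesis that $\mathcal{L}(g)$ extends analytically up to $\kappa_\infty$ (so the extension is not identically zero on any sub-arc of $\kappa_r$) does the work. Once this is established, the formula for the type in part (1) follows automatically, closing the argument.
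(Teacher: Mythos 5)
Your argument is correct and arrives at all the stated conclusions, but it takes a genuinely different computational route from the paper's, and in one place it is actually more complete. For item (1) the paper never bounds $|g_r(z)|$ directly: it estimates the derivatives $g_r^{(n)}(0)=\int_{\kappa_r}s^n\mathcal{L}(g)(s)\frac{ds}{2\pi i}$ by $r^nM_r$ and reads off the order and the type from the coefficient formulas (\ref{Order}) and (\ref{Type}), whereas you use the sup-norm bound $|g_r(z)|\le Ce^{r|z|}$; both give order $\le 1$ and type $\le r$, and both defer the lower bound on the type to item (2). For item (2) the paper expands $e^{sz}$ in powers of $z$, identifies the Taylor coefficients of $g_r$, and sums the series defining $B(g_r)$ term by term to obtain the Cauchy-type integral $K*\mu_r$; you instead compute the real Laplace transform of $g_r$ for real $z>r$ and invoke (\ref{EqRBT}) --- an equally legitimate route that the paper itself sanctions in the remark following Definition 3.2, and your Fubini justification is sound since $\operatorname{Re}(s)\le r<z$ on the compact contour. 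The place where you go beyond the paper is the identification of the conjugate diagram with $\operatorname{conv}(\kappa_r)$, equivalently the lower bound $\tau_{g_r}\ge r$: the paper asserts this ``as a by-product'' of the analyticity of $K*\mu_r$ off $\kappa_r$, which by itself only yields $s(\mathcal{B}(g_r))\subseteq\kappa_r$ and hence $\tau_{g_r}\le r$. One really does need the non-vanishing of the jump of the Cauchy integral across $\kappa_r$, which you supply via Sokhotski--Plemelj together with the identity theorem: the analytic function $\mathcal{L}(g)$, not being identically zero, cannot vanish on a subarc, so the (closed) singular set is dense in, hence equal to, $\kappa_r$, and in particular contains the ray tips of modulus $r$, forcing $\tau_{g_r}=r$ by Boas's Theorem 5.3.12 and then order exactly $1$. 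The closing application of Polya's theorem (Theorem \ref{Pol_Rep}) to get $g_r=\mathcal{P}(\mu_{g_r})$ is the same in both arguments.
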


\begin{proof}
\noindent We prove item {\sl 1}. Let $r>0$ fixed; %
%When we know the representation of a entire function $f$ as a power series, then a usual way of calculate its order ($\rho$) and its type ($\sigma$) are [See Boas, Theorem 2.2.2] (after the Stirling aproximation $ln(n!) \thicksim nln(n)$) and formula 2.2.12 in page 11 of the same Book:
%$$\rho = (1-\lim_{n\to \infty}\sup\dfrac{\ln{|f^{(n)}(0)|}}{n\ln(n)})^{-1},$$
%$$\sigma=\lim_{n\to \infty}\sup |f^{(n)}(0)|^{1/n}, \; \; {\rm for } \; \; \rho=1. $$
%to find the order of $g_r$, we need to use formula (\ref{Order}). F
first, we note that for $n\geq 0$
$$g^{(n)}_r(0)=\int_{\kappa_r}s^n\mathcal{L}(g)(s)\dfrac{ds}{2\pi i}\; .$$
Now, defining %Therefore, defining
$$M_r:=\dfrac{1}{2\pi}\int_{\kappa_r}|\mathcal{L}(g)(s)ds|\; ,$$
we obtain
$$\dfrac{\ln{|g^{(n)}_r(0)|}}{n\ln n}\leq \dfrac{\ln{r^nM_r}}{n\ln n} = \dfrac{n\ln{r}+\ln{M_r}}{n\ln n}\; ,$$
which approaches zero as $n\to \infty$. Therefore using formula (\ref{Order}) we obtain the order of $g_r$ as
$$\rho = \left(1-\lim_{n\to \infty}\sup\dfrac{\ln{|g_{r}^{(n)}(0)|}}{n\ln(n)}\right)^{-1}=1$$
%is means that $\rho =1$.
With this information, we compute the type of $g_r$ using formula (\ref{Type}), %as [See Boas, Theorem 2.2.10]
$$\sigma=\lim_{n\to \infty}\sup |g^{(n)}_r(0)|^{1/n}.$$
It is not difficult to see that $\sigma \leq r$; we will conclude that $\sigma =r$ by considering the region of analyticity
of the Borel transform of $g_r$ using item {\sl 2}.\\

\noindent {\sl 2}. Since $\kappa_r$ is compact we have
$$g_r(z)=\int_{\kappa_r}\sum_{n=0}^{\infty}\frac{(sz)^n}{n!}\mathcal{L}(g)(s)\dfrac{ds}{2\pi i}=
\sum_{n=0}^{\infty}\frac{z^n}{n!}\int_{\kappa_r}s^n\mathcal{L}(g)(s)\dfrac{ds}{2\pi i}=
                                                   \sum_{n=0}^{\infty}\frac{a_n}{n!}z^n\; ,$$
where
$$a_n:=\int_{\kappa_r}s^n\mathcal{L}(g)(s)\dfrac{ds}{2\pi i}$$
and we have used uniform convergence. Now, for $|z|>r$ we have,
$$B(g_r)(z)=
\sum_{n=0}^{\infty}\frac{a_n}{z^{n+1}}=
\sum_{n=0}^{\infty}\int_{\kappa_r}\dfrac{1}{z}\left( \dfrac{s}{z}\right)^n\mathcal{L}(g)(s)\dfrac{ds}{2\pi i} =
                                                      \int_{\kappa_r}\dfrac{1}{z-s}\mathcal{L}(g)(s)\dfrac{ds}{2\pi i}\; .$$
This calculation means that the analytic continuation of the Borel transform for $g_r$ is
$$\mathcal{B}(g_r)(z)= \int_{\kappa_r}\dfrac{1}{z-s}\mathcal{L}(g)(s)\dfrac{ds}{2\pi i}=
                                                    \int_{\mathbb{C}}K(z-s)d\mu_r(s)=K*\mu_r(z)\; ,$$
which is an analytic function for every
%$|z|>r$ and has analytic extension to
$z\in \mathbb{C} - \kappa_r$. As a by product we have that the conjugate diagram of $\mathcal{B}(g_r)$ is the convex hull
of the contour $\kappa_r$. Moreover, this means that the type of the function $g_r$ must be $\tau_{g_r}\geq r$, so that
by using the calculus in Item {\sl 1} we conclude that
$\tau_{g_r} = r$. This completes the proof of Item {\sl 1}. Finally we note that the description of the Borel transform
of $g_r$ implies that $g_r$ is recovered via $\mathcal{P}$-Transform from the measure
$$d\mu_{g_r}(s)=K*\mu_r(s)\frac{ds}{2\pi i} \; .$$
%Given $t>0$, let $B_t$ denotes the ball of center zero and radius $t$. By the maximun modules principle, we know that
%$$\sup_{z\in \overline{B_t}}|g_r(z)|=\max_{z\in \partial\overline{B_t}}|g_r(z)|=\max_{|z|=t}|g_r(z)|\leq e^{tr}M_r,$$
%where
%$$M_r:=\dfrac{1}{2\pi}\int_{\kappa_r}|\mathcal{L}(g)(s)ds|.$$
%Therefore
%$$\sup_{z\in B_t}(|g_r(z)|e^{-|z|r})\leq \max_{|z|=t}(|g_r(z)|e^{-|z|r})\leq M_r.$$
%AND THE MEASURE??? $R>r$ uuummmmmhhhhhhh
\end{proof}

\subsubsection{The truncated equation.}
In this subsection we consider the following "truncated" equation
\begin{equation} \label{Treq}
\zeta(\partial^2_t+h)f_r=g_r\; , \quad \; \; h>1\; ,
\end{equation}
for each $r>0$. We note that in the case $h>1$, the poles of the function are $i\sqrt{h-1}$ and $-i\sqrt{h-1}$, and
therefore we analyse Equation (\ref{Treq}), in the domain
$$\Omega := \mathbb{C} \setminus \{s\in \mathbb{C}: Re(s)\geq 0, \; \; |Im(s)|=\sqrt{h-1} \}\; ,$$
which was used in subsection 4.1.

The following theorem shows that Equation (\ref{Treq}) is well posed in the space $Exp(\Omega)$.

\begin{theorem}\label{TeoSolZ}
 A general solution to Equation $(\ref{Treq})$ in the space $Exp(\Omega)$, is provided by the function
\begin{equation}\label{SolZeta}
\phi_r(z):=\int_{\gamma'}e^{sz}\dfrac{K*\mu_r(s)}{\zeta(s^2+h)}\frac{ds}{2\pi i}=
\int_{\kappa_r}e^{sz}\dfrac{\mathcal{L}(g)(s)}{\zeta(s^2+h)}\dfrac{ds}{2\pi i} +\sum_{j=1}^{N_r}p_j(z)e^{\tau_jz}\; ,
\end{equation}
% $$\phi_r(z):=\int_{\gamma:=|s|'=r}e^{sz}\dfrac{K*\mu_r(s)}{\zeta(s^2+h)}\frac{ds}{2\pi i}=\int_{\kappa_r}e^{sz}\dfrac{\mathcal{L}(g)(s)}{\zeta(s^2+h)}\dfrac{ds}{2\pi i}
% +\sum_{j=1}^{N_r}p_j(z)e^{\tau_jz},$$%
 where $\gamma' \in H_{1}(g_r)$ is such that it encloses the zeros  $\{\tau_j, j=1,2, \cdots,N_r\}$ %are the $2N_r$-zeros
 of the function $\zeta(s^2+h)$ which lie in the closed ball $\overline{B}_r(0)$, and $p_j(z)$ are polynomials of degree $ord(\tau_j)-1$.
\end{theorem}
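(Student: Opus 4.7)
The plan is to obtain (\ref{SolZeta}) in two steps: first, use Theorem \ref{TheSol} together with Lemma \ref{LemFin1} to derive the leftmost integral expression as an explicit $Exp(\Omega)$-solution of (\ref{Treq}); then, rewrite that integral as the sum on the right by swapping the order of integration and applying the residue theorem.

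For the first step I would invoke Lemma \ref{LemFin1}: it places $g_r$ in $Exp(\Omega)$ and identifies its Borel transform as $\mathcal{B}(g_r)=K*\mu_r$, so $g_r=\mathcal{P}(\mu_{g_r})$ for the explicit measure $d\mu_{g_r}(s)=K*\mu_r(s)\,ds/(2\pi i)$ supported on any admissible $\gamma'$. Since $\zeta_h\circ p\in Hol(\Omega)$, Theorem \ref{TheSol} (together with Lemma \ref{SurjLem}) then guarantees that $\mathcal{P}(\mu_{g_r}/(\zeta_h\circ p))$ lies in $Exp(\Omega)$ and solves (\ref{Treq}), and moreover that the general solution differs from this particular one by a combination of polynomial-exponentials at the enclosed zeros of $\zeta(s^2+h)$, with degree bounds coming from Proposition \ref{FDim}. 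Unwinding the $\mathcal{P}$-transform produces the first integral in (\ref{SolZeta}); the choice of $\gamma'$ simply prescribes which homogeneous components are absorbed into it.

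For the second step I would substitute the Cauchy-type expression $K*\mu_r(s)=\int_{\kappa_r}\mathcal{L}(g)(w)/(s-w)\,dw/(2\pi i)$, which is valid on $\gamma'$ because $\gamma'$ is disjoint from $\kappa_r$, and apply Fubini (legitimate since $\gamma'$ and $\kappa_r$ are compact and the integrand is continuous and bounded on their product, uniformly in $z$ on compacta). The inner integral $\int_{\gamma'} e^{sz}/[\zeta(s^2+h)(s-w)]\,ds/(2\pi i)$ is computed by residues: inside $\gamma'$ the integrand has a simple pole at $s=w$ contributing $e^{wz}/\zeta(w^2+h)$, together with poles of order $m_j=\mathrm{ord}(\tau_j)$ at each $\tau_j\in\overline{B}_r(0)$. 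Reintegrating against $\mathcal{L}(g)(w)\,dw/(2\pi i)$ on $\kappa_r$, the first residue reconstructs $\int_{\kappa_r} e^{sz}\mathcal{L}(g)(s)/\zeta(s^2+h)\,ds/(2\pi i)$, while the residues at $\tau_j$, which have the shape (polynomial in $z$ depending on $w$)$\cdot e^{\tau_j z}$ of $z$-degree at most $m_j-1$, integrate over $w$ to produce $p_j(z)e^{\tau_j z}$ with $\deg p_j\le m_j-1$, as required.

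The main obstacle I anticipate is bookkeeping for the contour $\gamma'$: it must be drawn inside $\Omega$ (hence avoid the deleted rays and the genuine poles $\pm i\sqrt{h-1}$), belong to $H_1(g_r)$ so that it encloses the singular support $\kappa_r$ of $\mathcal{B}(g_r)$, and enclose precisely the zeros $\tau_j$ one wishes to see in the polynomial sum. This is possible because $\Omega$ is simply connected with only two deleted rays escaping to infinity in the right half plane, whereas $\kappa_r$ opens into the left half plane; once $\gamma'$ is pinned down, the Fubini-plus-residues calculation is mechanical, and generality of the family (\ref{SolZeta}) is immediate from Theorem \ref{TheSol}, since any other $Exp(\Omega)$-solution differs from $\phi_r$ by an element of $\ker\zeta(\partial_t^2+h)$, which Proposition \ref{FDim} describes as exactly such a polynomial-exponential sum.
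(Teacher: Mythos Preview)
Your argument is correct and follows the same strategy as the paper: invoke Theorem \ref{TheSol} with $\mathcal{B}(g_r)=K*\mu_r$ from Lemma \ref{LemFin1}, then swap the order of integration via Fubini and evaluate the inner $s$-integral by residues. The only cosmetic difference is that the paper first splits $\gamma'$ into three closed subcontours $\gamma_1,\gamma_2,\gamma_3$ (one enclosing $\kappa_r$ alone, the other two enclosing the zeros of $\zeta(s^2+h)$), applying Cauchy's formula on $\gamma_1$ and the residue theorem on $\gamma_2,\gamma_3$ separately, whereas you handle all poles at once on the undivided $\gamma'$; the two computations are equivalent.
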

\begin{proof}
By Theorem \ref{TheSol} we know that a solution for the Equation (\ref{Treq}) is
$$\int_{\gamma'}e^{sz}\dfrac{\mathcal{B}(g_r)(s)}{\zeta(s^2+h)}\frac{ds}{2\pi i}=
\int_{\gamma'}e^{sz}\dfrac{K*\mu_r(s)}{\zeta(s^2+h)}\frac{ds}{2\pi i} \; ,$$
where $\gamma'$ is the curve in the following picture

\begin{center}
\includegraphics[width=6cm, height=5cm]{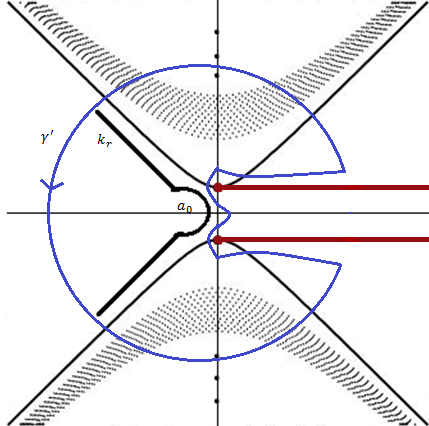}
\end{center}

Since the conjugate diagram $S$ for $\mathcal{B}(g_r)(z)$ is the convex hull of the contour $\kappa_r$, we can decompose the circle
$\{z:|z|'=r\}$ into three pieces $\gamma_1,\gamma_2,\gamma_3$ in which $\gamma_1$ is in the region of analyticity of $\zeta(s^2+h)$ and contains the set $S$ in its interior, while the other two closed paths contain the zeros of $\zeta(s^2+h)$ in its interior, as in the following picture
\begin{center}
\includegraphics[width=6cm, height=5cm]{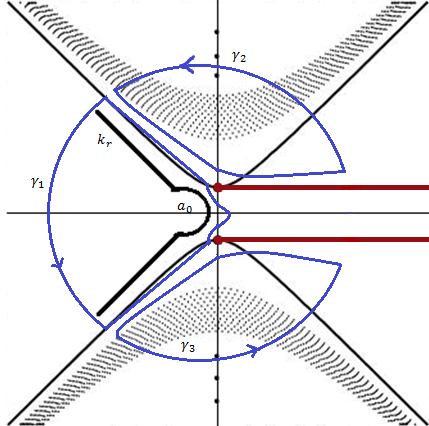}
\end{center}
Therefore,
\begin{eqnarray}\label{3Int}
\phi_r(z)=\int_{|s|'=r}e^{sz}\dfrac{K*\mu_r(s)}{\zeta(s^2+h)}\frac{ds}{2\pi i}&=&\int_{\gamma_1}e^{sz}\dfrac{K*\mu_r(s)}{\zeta(s^2+h)}\frac{ds}{2\pi i}+\int_{\gamma_2}e^{sz}\dfrac{K*\mu_r(s)}{\zeta(s^2+h)}\frac{ds}{2\pi i}+\nonumber \\
&+&\int_{\gamma_3}e^{sz}\dfrac{K*\mu_r(s)}{\zeta(s^2+h)}\frac{ds}{2\pi i}\; .
\end{eqnarray}

\noindent We compute the first integral. Using Fubini's Theorem and the Cauchy integral formula, we obtain
\begin{eqnarray*}
\int_{\gamma_1}e^{sz}\dfrac{K*\mu_r(s)}{\zeta(s^2+h)}\frac{ds}{2\pi i}&=&\int_{\gamma_1}\dfrac{e^{sz}}{\zeta(s^2+h)}\int_{\kappa_r}\dfrac{1}{s-\omega}\mathcal{L}(g)(\omega)\dfrac{d\omega}{2\pi i}\frac{ds}{2\pi i}\\
&=&\int_{\kappa_r}\mathcal{L}(g)(\omega)\int_{\gamma_1}\dfrac{e^{sz}}{\zeta(s^2+h)}\dfrac{1}{s-\omega}\frac{ds}{2\pi i}\dfrac{d\omega}{2\pi i}\\
&=&\int_{\kappa_r}\dfrac{e^{\omega z}}{\zeta(\omega ^2+h)}\mathcal{L}(g)(\omega)\dfrac{d\omega}{2\pi i}\; .
\end{eqnarray*}
\noindent Now the second integral. Using Fubini's theorem again we have
\begin{eqnarray*}
\int_{\gamma_2}e^{sz}\dfrac{K*\mu_r(s)}{\zeta(s^2+h)}\frac{ds}{2\pi i}&=&\int_{\kappa_r}\mathcal{L}(g)(\omega)\int_{\gamma_2}\dfrac{e^{sz}}{\zeta(s^2+h)}\dfrac{1}{(s-\omega)}\frac{ds}{2\pi i}\dfrac{d\omega}{2\pi i}\; ,
%&=&\int_{\kappa_r}\dfrac{e^{\omega z}}{\zeta(\omega ^2+h)}\mathcal{L}(g)(\omega)\dfrac{d\omega}{2\pi i}.
\end{eqnarray*}
but now we cannot apply Cauchy's integral formula as before, since the zeros of $\zeta(s^2+h)$ are now poles of the function
\begin{equation}\label{PolFunct}
F(s)=\dfrac{e^{sz}}{\zeta(s^2+h)}\; \dfrac{1}{(s-\omega)}\; ;
\end{equation}
but, we can use the Residue Theorem. Let $\tau_j$ be a zero of the function $\zeta(s^2+h)$ lying inside the region enclosed
by the curve $\gamma_2$. We have,
$$Res (F,\tau_j)=\sum_{l=0}^{ord(\tau_j)-1}h_l(\omega,\tau_j)z^le^{\tau_j z}\; ,$$
%in which $A_j(z)$ is a polynomial of degree $ord(\tau_j)-1$ that comes from the computation of the residue.
for some functions $h_l$. Now we let $N_{2,r}$ be the number of zeros of the function $\zeta(s^2+h)$ inside the region enclosed by the curve $\gamma_2$. We conclude that the second integral becomes

\begin{eqnarray*}
\int_{\gamma_2}e^{sz}\dfrac{K*\mu_r(s)}{\zeta(s^2+h)}\frac{ds}{2\pi i}&=&\int_{\kappa_r}\mathcal{L}(g)(\omega)\sum_{j=1}^{N_{2,r}} \left( \sum_{l=0}^{ord(\tau_j)-1}h_l(\omega,\tau_j)z^le^{\tau_j z}\right) \dfrac{d\omega}{2\pi i}\\
&=&\sum_{j=1}^{N_{2,r}} \sum_{l=0}^{ord(\tau_j)-1} z^le^{\tau_j z} \int_{\kappa_r}\mathcal{L}(g)(\omega)h_l(\omega,\tau_j)\dfrac{d\omega}{2\pi i}\\
&=& \sum_{j=1}^{N_{2,r}} \sum_{l=0}^{ord(\tau_j)-1} A_l(\tau_j) z^le^{\tau_j z}\\
&=&\sum_{j=1}^{N_{2,r}}a_j(z)e^{\tau_j z}\; ,
\end{eqnarray*}
where we have defined the polynomials
$$a_j(z):=\sum_{l=0}^{ord(\tau_j)-1} A_l(\tau_j) z^l\; .$$

\noindent Finally, let $N_{3,r}$ be the number of zeros
of the function $\zeta(s^2+h)$ inside the region enclosed by the curve $\gamma_3$. We use the same strategy as above for the third integral in (\ref{3Int}) and we obtain
\begin{eqnarray*}
\int_{\gamma_3}e^{sz}\dfrac{K*\mu_r(s)}{\zeta(s^2+h)}\frac{ds}{2\pi i}=\sum_{j=1}^{N_{3,r}}b_j(z)e^{\tau_j z},
\end{eqnarray*}
%where $N_3$ is the number of zeros of the function $\zeta(s^2+h)$ inside the region enclosed by the curve $\gamma_3$.
Putting
$N_r=N_{2,r}+N_{3,r}$ as the number of zeros inside of the closed ball $\overline{B}_r(0)$, and setting $p_j= a_j$ for $j=1,2, \cdots , N_{2,r}$ and $p_j= b_j$ for $j=1,2, \cdots , N_{3,r}$ we obtain equality (\ref{SolZeta}) and the theorem is proved.
\end{proof}

In what follows we consider only the particular solution
\begin{equation}\label{ForParSol}
\phi_r(z)=\int_{\kappa_r}e^{sz}\dfrac{\mathcal{L}(g)(s)}{\zeta(s^2+h)}\dfrac{ds}{2\pi i}\;
\end{equation}
\noindent
to Equation (\ref{Treq}). This solution is obtained from Theorem \ref{TeoSolZ} by using a curve $\gamma_1$ as in the following picture
\begin{center}
\includegraphics[width=6cm, height=5cm]{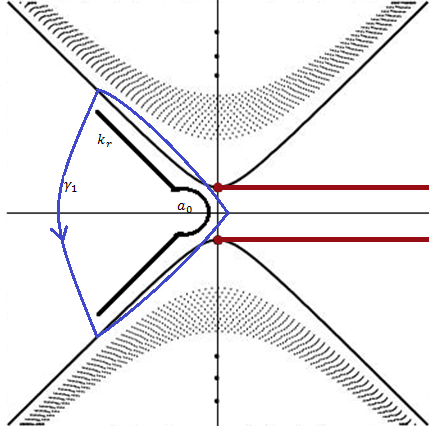}
\end{center}

One reason for considering only this expression is that the contribution of the second summand in (\ref{SolZeta})
``can be ignored'', since it corresponds to a solution of the homogeneous equation $\zeta(\partial_t^2+h)f_r=0$. Also, we
note that it is still an open problem whether the zeros of the Riemann Zeta function are simple or not (see for example
\cite{Anderson83,Bui13,Cheer93}); consequently, we do not even know a precise upper bound for the degree of the polynomials
$p_j$ appearing in Theorem (\ref{TeoSolZ})! Such an information could be used, for example,
for the study of the uniform convergence of the sequence of partial sums determined for the second summand in (\ref{SolZeta})
for each $r$.

\begin{rem}
On the other hand, from {\rm \cite{Brend79,Brend82,VdeLune83,VdeLune86}}, we know that the first zeros of the Riemann zeta
function are simple; therefore the first zeros of $\zeta(s^2+h)$ are also simple. Let $r>0$ and suppose that the curve
$\gamma' \in H_1(g_r)$ encloses the first known simple zeros of $\zeta(s^2+h)$; then, in this situation the full
representation formula for the solution given in Theorem \ref{TeoSolZ} is more concrete. This situation is treated in
the example that follows.
\end{rem}

\begin{example}
From the work \cite{VdeLune86} (and references therein) we know that at least the first $1.500.000.001$ zeros of the Riemann
Zeta function are simple and are located at the critical line; therefore the first zeros of $\zeta(s^2+h)$ are also simple.
This implies that the first terms of the sequence of sums in the representation formula (\ref{SolZeta}) are easy to calculate.

In fact, let $r>0$ be such that the curve $\gamma' \in H_1(g_r)$ encloses the first  $3.000.000.002$ simple zeros of the
shifted  Riemann Zeta function $\zeta(s^2+h)$. \\
Let $j\in \{1,2,3,\cdots ,3.000.000.002\}$ and let $\tau_j$ be the corresponding simple zero. If we define
$$\zeta_j=\lim_{s\to \tau_j}\dfrac{\zeta(s^2+h)}{s-\tau_j}\; ,$$
then, applying the residue theorem to the function $F$ defined in Equation (\ref{PolFunct}) we obtain
$$Res(F,\tau_j)=\dfrac{e^{\tau_jz}}{\zeta_j (\tau_j-\omega)}\; .$$
Therefore, from the proof of Theorem \ref{TeoSolZ} we have that the representation formula of the solution is reduced to
$$\phi_r(z)=\int_{\kappa_r}e^{sz}\dfrac{\mathcal{L}(g)(s)}{\zeta(s^2+h)}\dfrac{ds}{2\pi i}
 +\sum_{j=1}^{N_r}c_je^{\tau_jz}\; ,$$
where $c_j$ are the following complex numbers:
$$c_j:=\dfrac{1}{\zeta_j}\int_{\kappa_r}\dfrac{\mathcal{L}(g)(\omega)}{\zeta_j(\tau_j-\omega)}\dfrac{d\omega}{2\pi i}\; .$$
%
% [ Anderson: "Simple zeros of the Riemann zeta-function"-{\sl J. N. Theory, Volume 17, Issue 2, October 1983, Pages 176-182}])
%
%that at least 0.3474 of the zeros of the Riemann zeta-function are simples. This give us a good expression for the series appearing in previous result and a concrete formula of the solution in this situation. GOOD. More details and concrete formula please.
\end{example}

\subsubsection{A particular solution}
The proof of the following lemma can be found in \cite[Theorem 36.1]{Doetsch}
\begin{lemma}\label{Lemmm}
Let $g \in \mathcal{L}_{>}(\mathbb{R}_+)$ and let $\kappa$ be the angular contour of the domain of the analytic extension of $\mathcal{L}(g)$ with centre $a_0=0$ and half-angle of opening $\psi$, where $\frac{\pi}{2}<\psi\leq \pi$. Then, the function
$$g_{\infty}(z):=\int_{\kappa_{\infty}}e^{zs}\mathcal{L}(g)(s)\frac{ds}{2\pi i},$$
is analytic in an angular region with horizontal bisector and half-angle of opening $\psi-\frac{\pi}{2}$.
\end{lemma}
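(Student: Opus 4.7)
The plan is to establish absolute and locally uniform convergence of the integral defining $g_{\infty}$ on the angular region $\Sigma_{\psi}:=\{z\in \mathbb{C}:z\neq 0,\; |\arg z|<\psi-\pi/2\}$, and then to infer analyticity by a standard differentiation-under-the-integral (or Morera) argument.

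First I would parametrize $\kappa_{\infty}$ as the union of a small circular arc of radius $\delta$ centred at the origin, whose contribution to $g_{\infty}(z)$ is an entire function of $z$, together with the two rays $\sigma_{\pm}(t)=te^{\pm i\psi}$, $t\geq \delta$. On the upper ray, writing $z=|z|e^{i\theta}$, we have $|e^{zs}|=\exp(t|z|\cos(\theta+\psi))$; on the lower ray, $|e^{zs}|=\exp(t|z|\cos(\theta-\psi))$. For $|\theta|<\psi-\pi/2$ we have $\pi/2<\theta\pm\psi<3\pi/2$, so both cosines are strictly negative, and the decay rate is uniform on compact subsets of $\Sigma_{\psi}$. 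Provided $\mathcal{L}(g)(s)$ admits a sub-exponential bound along the rays of $\kappa_{\infty}$, the exponential factor dominates and the integrand has an integrable majorant on each compact $K\subset \Sigma_{\psi}$.

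Analyticity of $g_{\infty}$ on $\Sigma_{\psi}$ then follows by Morera's theorem: for any triangle $T\subset \Sigma_{\psi}$, Fubini combined with $\oint_{\partial T}e^{zs}\,dz=0$ gives $\oint_{\partial T}g_{\infty}(z)\,dz=0$. Equivalently, one could differentiate under the integral sign using the uniform majorant.

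The main obstacle is securing the growth bound on $|\mathcal{L}(g)(s)|$ along the rays of $\kappa_{\infty}$, since mere existence of the analytic continuation does not preclude exponential growth that could compete with the decay of $|e^{zs}|$. The standard remedy, used by Doetsch, is to combine the a priori decay of $\mathcal{L}(g)$ on the half-plane $\{Re(s)>a\}$, where it is a classical Laplace transform, with a Phragm\'en--Lindel\"of argument on the angular region of continuation, yielding a sub-exponential bound along $\kappa_{\infty}$. Once such a bound is in hand, the exponential decay of $|e^{zs}|$ on $\Sigma_{\psi}$ dominates and the preceding argument yields both absolute convergence and analyticity, with the optimal half-angle of opening precisely $\psi-\pi/2$ as determined by the cosine conditions above.
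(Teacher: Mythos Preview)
The paper does not supply its own proof of this lemma; it simply refers the reader to Theorem~36.1 in Doetsch's book. Your sketch is essentially the standard argument found there: split $\kappa_\infty$ into a compact arc (whose contribution is entire) and two rays, verify that $\mathrm{Re}(zs)<0$ on each ray precisely when $|\arg z|<\psi-\pi/2$, and conclude absolute, locally uniform convergence and then analyticity via Morera. One small slip: for the lower ray the argument $\theta-\psi$ lies in $(-3\pi/2,-\pi/2)$ rather than in $(\pi/2,3\pi/2)$, but the cosine is still negative there, so your conclusion stands.

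You are also right to single out the growth of $\mathcal{L}(g)$ along the rays as the only substantive issue. In Doetsch's theorem this is handled by hypothesis (uniform decay of the transform to zero as $|s|\to\infty$ in the sector, together with integrability along the rays), so no Phragm\'en--Lindel\"of step is actually carried out there. The paper's definition of $\mathcal{L}_{>}(\mathbb{R}_+)$ is informal on this point; in the proof of the next proposition the authors simply invoke the Riemann--Lebesgue lemma to assert boundedness along $\kappa_\infty$. Your Phragm\'en--Lindel\"of suggestion is the correct device if one wants to \emph{derive} such a bound from analyticity plus decay on the original half-plane rather than assume it, so in that respect your outline is a step more careful than what either the paper or its cited reference actually do.
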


\noindent Let us denote by $D_{\psi}$ the angular region with horizontal bisector and half-angle of opening $\psi-\frac{\pi}{2}$  arising in the previous lemma, see  \cite[figure 32, p. 243]{Doetsch}. We note that $g_{\infty}$ is analytic in $D_{\psi}$. We can estimate $\psi$.
%Also, denote by

%Now, let us imppose the following exponential decay for the laplace transform of $g$

%\begin{equation}\label{ExpDe}
% {\rm There\, is\, a\, constant\, C>0:} |\mathcal{L}(g)(s)|\leq C e^{-|s|^p},\, {\rm for\, some\, p>1.}
%\end{equation}
We can see that the real functions $y=|x|$ and $y=-|x|$ are asymptotes to the region which contain the zeros of $\zeta(s^2+h)$. Therefore, the angle $\psi$ satisfies  $\frac{3\pi}{4}< \psi \leq \pi$. This gives us a natural fixed angular region $D_{\frac{3\pi}{4}}$ on which the function $g_{\infty}$ is analytic, since $D_{\frac{3\pi}{4}}\subset D_{\psi}$ for all $\frac{3\pi}{4}< \psi \leq \pi$.
%Also, Let us denote by $\Omega_{\frac{3\pi}{4}}$ the open region with boundary given by the angular contour $\kappa$ which contains the , of half angle of opening $\frac{3\pi}{4}$.

\begin{proposition} \label{lim} %Under condition (\ref{ExpDe})
Let $a_0=0$  be the first singularity of the analytic extension of $\mathcal{L}(g)$ and also let $\frac{3\pi}{4}< \psi \leq \pi$ be the angle
 described in lemma {\rm \ref{Lemmm}}. Then, on compact subsets of $D_{\psi} \subset \mathbb{C}$ we have:
 \begin{enumerate}
  \item The sequence $\{g_r\}_{r>0}$ converge uniformly to %$g_{\infty}$
  $$g_{\infty}(z)=\int_{\kappa_{\infty}}e^{zs}\mathcal{L}(g)(s)\frac{ds}{2\pi i}\; .$$

  \item The sequence $\{f_r\}_{r>0}$ converge uniformly to
$$f_{\infty}(z):=\int_{\kappa_{\infty}}e^{sz}\dfrac{\mathcal{L}(g)(s)}{\zeta(s^2+h)}\dfrac{ds}{2\pi i}\; .$$
 \end{enumerate}
 In particular both conclusions hold on $D_{\frac{3\pi}{4}}$.
%In both cases $\kappa$ is the infinite contour.
\end{proposition}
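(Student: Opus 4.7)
The strategy for both parts is uniform: for a compact $K \subset D_\psi$, we bound the tail $g_\infty - g_r$ (respectively $f_\infty - f_r$) by an integrable, $z$-independent majorant on $\kappa_\infty$, and invoke the vanishing of tails of convergent integrals. The key estimate is uniform exponential decay of $e^{zs}$ along the rays of $\kappa_\infty$: writing $z = |z|e^{i\theta}$ with $|\theta| < \psi - \pi/2$ and $s = |s|e^{\pm i\psi}$ on the upper or lower infinite ray, one has $\mathrm{Re}(zs) = |z||s|\cos(\theta \pm \psi)$, and the angle $\theta \pm \psi$ lies strictly inside $(\pi/2, 3\pi/2)$. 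Compactness of $K$ then yields $\delta = \delta(K) > 0$ and $c_K > 0$ such that $\cos(\theta \pm \psi) \leq -\sin\delta$ and hence $|e^{zs}| \leq e^{-c_K |s|}$, uniformly for $z \in K$ and $s$ on the tails of $\kappa_\infty$.

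For part (1), Lemma 4.7 guarantees that $\int_{\kappa_\infty} e^{zs}\mathcal{L}(g)(s)\,ds$ converges at every $z \in D_\psi$, so in particular $e^{-c_K|s|}|\mathcal{L}(g)(s)|$ is integrable along each ray. Then
\[
|g_\infty(z) - g_r(z)| \;\leq\; \int_{\kappa_\infty \setminus \kappa_r} e^{-c_K |s|}\,|\mathcal{L}(g)(s)|\,\frac{|ds|}{2\pi}
\]
is bounded by the tail of a convergent integral, which vanishes as $r \to \infty$ independently of $z \in K$. This gives uniform convergence $g_r \to g_\infty$ on $K$.

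For part (2), the identical argument applies provided $1/\zeta(s^2+h)$ is bounded on $\kappa_\infty \cap \{|s| \geq R_0\}$ for some $R_0 > 0$. The geometric analysis of Section 2 shows that the zeros of $\zeta(s^2+h)$ lie in the hyperbolic critical region whose asymptotes are the lines $y = \pm|x|$; since $\psi > 3\pi/4$, the rays of $\kappa_\infty$ escape this region for $|s|$ large, and classical lower bounds for $|\zeta|$ away from the critical strip provide a uniform positive lower bound on the tails. Absorbing $1/|\zeta(s^2+h)|$ into the majorant completes the argument, and the inclusion $D_{3\pi/4} \subset D_\psi$ for every $\psi > 3\pi/4$ delivers the last sentence of the proposition. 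The principal technical obstacle is precisely this uniform lower bound on $|\zeta(s^2+h)|$ along the tails of $\kappa_\infty$: the absence of zeros is guaranteed geometrically, but obtaining a clean, uniform positive lower bound requires invoking classical estimates for $\zeta$ in regions where $\mathrm{Re}(s^2+h)$ need not be large, which is the delicate analytic input.
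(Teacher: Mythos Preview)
Your argument is correct and follows essentially the same route as the paper: both proofs reduce the tail $g_\infty-g_r$ (resp.\ $f_\infty-f_r$) to integrals along the two infinite rays of $\kappa_\infty\setminus\kappa_r$, and both exploit that for $z\in K\subset D_\psi$ the angle $\theta_z\pm\psi$ lies strictly in $(\pi/2,3\pi/2)$, so $\mathrm{Re}(zs)\le -c_K|s|$ uniformly in $z$.

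The only noteworthy difference is in how the factor $\mathcal{L}(g)(s)$ is controlled. The paper invokes the Riemann--Lebesgue lemma to bound $|\mathcal{L}(g)|$ by a constant $M_{\mathcal{L}(g)}$ on the rays and then obtains an \emph{explicit} bound $|g_r-g_\infty|\le \tfrac{1}{AB}e^{-rAB}+\tfrac{1}{AC}e^{-rAC}$, where $A=\min_{z\in K}|z|>0$ (this lower bound on $|z|$, coming from $0\notin D_\psi$, is used silently in both proofs and you should state it). You instead argue that convergence of $g_\infty$ at some $z_0\in K$ already forces $e^{-c_K|s|}|\mathcal{L}(g)(s)|$ to be integrable, and then cite tail-vanishing. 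Both are valid; the paper's version gives a rate, yours is slightly cleaner conceptually. For part~(2) the paper says in one line exactly what you say: the result follows because $1/\zeta(s^2+h)$ is bounded on $\kappa_\infty$ for $|s|\to\infty$; your discussion of why this lower bound holds (rays at angle $\psi>3\pi/4$ escape the critical region asymptotic to $y=\pm x$) is more detailed than the paper's, but the content is the same.
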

\begin{proof}
 We prove Item {\sl 1}. Let $K$ be a compact subset of $D_{\psi}$; since it is closed and bounded, there is a positive number
 $\delta$ such that the distance between $D_{\psi}$ and $\partial K$ (the topological boundary of $K$) is at least $\delta$.
Also, there exist a positive number $A$ and angles $\theta_1,\; \theta_2$ satisfying
 $\frac{\pi}{2}-\psi<\theta_1<\theta_2<\psi-\frac{\pi}{2}$, such that for all $z\in K$
\begin{enumerate}
\item[a.] $|z|\geq A$, and
\item[b.] $\theta_1<\theta_z<\theta_2$, where $\theta_z$ denotes the angle of $z$ with the real line, $z = |z| \exp(i \theta_z)$.
\end{enumerate}
Therefore, for $z\in K$ we have
% $\frac{\pi}{2}-\psi<\theta_1<\theta_2<\psi-\frac{\pi}{2}$, and moreover for all $z\in K$ with . For $z\in K$ we have:

\begin{eqnarray*}
 |g_r(z)-g_{\infty}(z)|&=&\left|\int_{\kappa_{\infty}-\kappa_r}e^{sz}\mathcal{L}(g)(s)\dfrac{ds}{2\pi i}\right|\\
&\leq &\left|\int_r^{\infty}e^{te^{i\psi}z}\mathcal{L}(g)(te^{i\psi})e^{i\psi}\dfrac{dt}{2\pi}\right|+\left|\int_r^{\infty}e^{te^{-i\psi}z}\mathcal{L}(g)(te^{-i\psi})e^{-i\psi}\dfrac{dt}{2\pi}\right|.
\end{eqnarray*}
Let $l_z=|z|$; for the first integral, we have

\begin{eqnarray*}
\left|\int_r^{\infty}e^{te^{i\psi}z}\mathcal{L}(g)(te^{i\psi})e^{i\psi}\dfrac{dt}{2\pi}\right|&=&\left|\int_r^{\infty}e^{tl_ze^{i(\psi+\theta_z)}}\mathcal{L}(g)(te^{i\psi})e^{i\psi}\dfrac{dt}{2\pi}\right|\\
&\leq&\int_r^{\infty}e^{tl_z\cos(\psi+\theta_z)}\left|\mathcal{L}(g)(te^{i\psi})\right|\dfrac{dt}{2\pi}\; .
\end{eqnarray*}
By the Riemann-Lebesgue Lemma we have that $\mathcal{L}(g)$ is bounded, and therefore
$|\mathcal{L}(g)(te^{i\psi})|\leq M_{\mathcal{L}(g)}$ for $t\geq r$. Also, $\frac{\pi}{2}<\theta_1+\psi<\psi+\theta_z<\theta_2+\psi<\frac{3\pi}{2}$, which implies that $\cos(\psi+\theta_z)<-B<0$,
for some $B>0$. Therefore, we have

\begin{eqnarray*}
\int_r^{\infty}e^{tl_z\cos(\psi+\theta_z)}\left|\mathcal{L}(g)(te^{i\psi})\right|\dfrac{dt}{2\pi}
&\leq& M_{\mathcal{L}(g)}\int_r^{\infty}e^{tl_z\cos(\psi+\theta_z)}\dfrac{dt}{2\pi}\\
&\leq&\int_r^{\infty}e^{-tl_zB}\dfrac{dt}{2\pi}=\dfrac{1}{l_zB}e^{-rl_zB}\\
&\leq&\dfrac{1}{AB}e^{-rAB}\; .
\end{eqnarray*}

For the second integral, we have $-\frac{3\pi}{2}<\theta_1-\psi<\theta_z-\psi<\theta_2-\psi<-\frac{\pi}{2}$, and therefore there is a constant $C>0$
such that $\cos(\theta_z-\psi)<-C<0$. Then
\begin{eqnarray*}
\left|\int_r^{\infty}e^{te^{-i\psi}z}\mathcal{L}(g)(te^{-i\psi})e^{-i\psi}\dfrac{dt}{2\pi} \right|
& = & \int_r^{\infty}e^{tl_z\cos(\theta_z-\psi)}\left|\mathcal{L}(g)(te^{-i\psi})\right|\dfrac{dt}{2\pi}\\
&\leq& M_{\mathcal{L}(g)}\int_r^{\infty}e^{tl_z\cos(\theta_z-\psi)}\dfrac{dt}{2\pi}\\
&\leq&\int_r^{\infty}e^{-tl_zC}\dfrac{dt}{2\pi}=\dfrac{1}{l_zC}e^{-rl_zC}\\
&\leq&\dfrac{1}{AC}e^{-rAC}\; .
\end{eqnarray*}
These computations allow us to conclude that given $\epsilon>0$ there is $r_0>0$ such that for every $r>r_0$ and for every  $z\in K$ we have the estimate:
\begin{eqnarray*}
|g_r(z)-g_{\infty}(z)|&\leq& \dfrac{1}{AB}e^{-rAB}+\dfrac{1}{AC}e^{-rAC}<\epsilon\; .
\end{eqnarray*}

Item 2 follows from the fact that the function $\dfrac{1}{\zeta(s^2+h)}$ is bounded on the angular contour $\kappa_{\infty}$ for $|s|\to \infty$.
\end{proof}

%\subsubsection{The equation and its solution}
Let us now denote the angle $\psi$ described in Lemma \ref{Lemmm} by $\psi(g)$. %Note that $\frac{3\pi}{4}< \psi(g) \leq \pi$.
From the result in Proposition \ref{lim} we have the following remark
%note that $g_{\infty}$ is an analytic extension of $g$; that is $g_{\infty}(t)=g(t), \; \forall t \in \mathbb{R}_+$. Also, we %have the following remark
%\begin{rem}
%We note that the equality of functions $g_{\infty}=g$ can be done in the positive real line. That is, $g_{\infty}$ is an analytic extension of $g$.
%\end{rem}
\begin{rem} \label{EndRem} We have
\begin{enumerate}
\item Proposition {\rm \ref{lim}} implies that the function $g_{\infty}$ is an anlytic function which extends $g$; that is
$g_{\infty}(t)=g(t) \; \forall t \in \mathbb{R}_+$.
\item The sequences $\{f_r\}_{r>0}$ and $\{g_r\}_{r>0}$ are sequences of entire functions of increasing exponential type $r$. On the other hand,
 functions $f_{\infty} $ and $g_{\infty}$ from Proposition {\rm \ref{lim}} are, generally speaking, neither entire nor of finite exponential type.
% and therefore they does not converge in the topology of LN*-space (see next lemma).
\item In principle the functions $g_{\infty}$ and $f_{\infty}$ depend on $\psi$, with $\frac{3\pi}{4}<\psi \leq \psi(g)$: for each angle $\psi$ in
$ ]\frac{3\pi}{4},\psi(g)]$ and each $r>0$, we obtain the finite angular contour $\kappa^{\psi}_r$ (which is part of the infinite angular contour
$\kappa^{\psi}$), the sequence of functions $\{f^{\psi}_r\}_{r>0}$ and $\{g^{\psi}_r\}_{r>0}$, and the limit functions $ g^{\psi}_{\infty}$ and
$ f^{\psi}_{\infty}$. We also note that for $\psi_1\leq\psi_2$ in $]\frac{3\pi}{4},\psi(g)]$ the functions $g^{\psi_2}_{\infty}$ and
$f^{\psi_2}_{\infty}$ are analytic extensions of $g^{\psi_1}_{\infty}$ and $f^{\psi_1}_{\infty}$ respectively.
\end{enumerate}
\end{rem}

Motivated by this remark and Proposition \ref{lim}, we define the following nonempty set:

\begin{eqnarray*}
\mathcal{W}(g):=\bigg \{ f^{\psi}_{\infty} \; : \; \psi  \in \; ] \frac{3\pi}{4},\psi(g)]\bigg \} \;  .
\end{eqnarray*}
Also, we denote by $\Omega_{\frac{3\pi}{4}}$ the reflexion of $D_{\frac{3\pi}{4}}$ with respect to the imaginary axis. We define the operator
$\widetilde{\zeta}(\partial_t^2+h)$ on $\mathcal{W}(g)$ as follows:

\begin{defi}\label{DefFin}
Let $f^{\psi}_{\infty} \in \mathcal{W}(g)$
and let $f^{\psi}_r \in Exp(\Omega_{\frac{3\pi}{4}})$ be a family which satisfies Equation $(\ref{Treq})$ and such that
$f^{\psi}_r \to f^{\psi}_{\infty}$ in the the topology of uniform convergence on compact subsets of $Dom(f^{\psi}_{\infty})$. Then,
\begin{equation}\label{EqEq}
\widetilde{\zeta}(\partial_t^2+h)f^{\psi}_{\infty} := \lim_{r\to\infty}\zeta(\partial_t^2+h)f^{\psi}_r \; ,
\end{equation}
where the limit is also taken in the topology of uniform convergence on compact subsets of $Dom(f^{\psi}_{\infty})$.
\end{defi}
Because of \cite[Theorem 25.1]{Doetsch} the limit appearing in the right hand side of Equation (\ref{EqEq}) {\em does not depend} on the choice
of the angle $\psi$. By the same reason the function $f^{\psi}_{\infty}$ {\em does not depend} on $\psi$. Thus we can use
%
%From Remark \ref{EndRem} and \cite[Theorem 25.1]{Doetsch}, we have that the definition of the operator $\widetilde{\zeta}
%(\partial_t^2+h)$ on $\phi \in \mathcal{W}(g)$ does not depends on the choice of the sequence $\phi_n$; that is:
%\begin{prop}\label{ProAL}
%The operator $\widetilde{\zeta}(\partial_t^2+h)$ is well defined on the space $\mathcal{W}(g)$.
%\end{prop}
% we replace Equation (\ref{ExEq})
Definition \ref{DefFin} to interpret Equation (\ref{ExEq}) in the case in which the data $g\in \mathcal{L}_{>}(\mathbb{R}_+)$: we look, for a fixed $\psi$, a solution $f^{\psi}_{\infty}$ in the set $ \mathcal{W}(g)$ to the following equation:
\begin{equation}\label{FiEqn}
\widetilde{\zeta}(\partial_t^2+h)f^{\psi}_{\infty}=g\; ,
\end{equation}
%Let us fix the angle $\psi(g)$.
%After \cite[Theorem 25.1]{Doetsch}, we can choose the sequence $f_r^{\psi}$ which approximate $f_{\infty}$ (in definition \ref{DefFin}) as independent of the angle $\psi$. This implies that the operator $\widetilde{\zeta}(\partial_t^2+h)$ on $f_{\infty}$ is well defined.
and we understand Equation (\ref{FiEqn}) in the following limit sense:
%
%\begin{equation}\label{FiEqn}
%\widetilde{\zeta}(\partial_t^2+h)f_{\infty}=g\; ,
%\end{equation}
%in the limit sense
$$\lim_{r\to\infty}\zeta(\partial_t^2+h)f^{\psi}_r =\lim_{r\to\infty}g^{\psi}_r = g^{\psi}_{\infty}\; ,$$
where the sequences $\{f^{\psi}_r\}_{r>0}$ and $\{g^{\psi}_r\}_{r>0}$ are in $Exp(\Omega_{\frac{3\pi}{4}})$ and they are related as in Proposition
\ref{lim}. We recall once more that limit is taken under the topology of uniform convergence on compact subsets of $Dom(f^{\psi}_{\infty})$, and
that $g^{\psi}_{\infty}$ do not depend on the angle $\psi$ (again because of \cite[Theorem 25.1]{Doetsch},).

\begin{proposition}
Let us consider the particular angle $\psi=\psi(g)$ defined after Proposition {\rm \ref{lim}}. The solution to Equation {\rm (\ref{FiEqn})} is
the function $f^{\psi(g)}_{\infty}\in  \mathcal{W}(g)$ given in Proposition {\rm \ref{lim}}.
\end{proposition}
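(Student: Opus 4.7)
The plan is to unwind the definition of $\widetilde{\zeta}(\partial_t^2+h)$ and then invoke Proposition \ref{lim} to identify the limit. Fix $\psi=\psi(g)$. By Theorem \ref{TeoSolZ}, the particular solution chosen in equation (\ref{ForParSol}) gives a family $\{f^{\psi(g)}_r\}_{r>0}$ in $Exp(\Omega_{3\pi/4})$ that satisfies the truncated equation $\zeta(\partial_t^2+h)f^{\psi(g)}_r = g^{\psi(g)}_r$ for every $r>0$. This is the starting point: applying the operator $\zeta(\partial_t^2+h)$ to each member of the candidate approximating family returns exactly the approximating source $g^{\psi(g)}_r$.

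Next, I would verify that the candidate limit $f^{\psi(g)}_\infty$ is indeed admissible in Definition \ref{DefFin}. Proposition \ref{lim}, item 2, guarantees that $f^{\psi(g)}_r \to f^{\psi(g)}_\infty$ uniformly on compact subsets of $D_{\psi(g)}\supset D_{3\pi/4}$, so the family $\{f^{\psi(g)}_r\}$ is precisely of the form required by Definition \ref{DefFin}. Hence by that definition,
\begin{equation*}
\widetilde{\zeta}(\partial_t^2+h)\,f^{\psi(g)}_\infty \;=\; \lim_{r\to\infty}\zeta(\partial_t^2+h)\,f^{\psi(g)}_r \;=\; \lim_{r\to\infty} g^{\psi(g)}_r,
\end{equation*}
the second equality being just the truncated equation.

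It then remains to identify $\lim_{r\to\infty} g^{\psi(g)}_r$ with $g$. Item 1 of Proposition \ref{lim} yields $g^{\psi(g)}_r \to g^{\psi(g)}_\infty$ uniformly on compact subsets of $D_{\psi(g)}$, and Remark \ref{EndRem}(1) asserts that $g^{\psi(g)}_\infty$ is an analytic extension of $g$, with $g^{\psi(g)}_\infty(t)=g(t)$ for every $t\in\mathbb{R}_+$. Restricted to the domain where the original equation is posed, this gives $\widetilde{\zeta}(\partial_t^2+h)f^{\psi(g)}_\infty = g$, which is (\ref{FiEqn}).

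The only delicate step is the consistency of the construction: one must be sure that the limit in the definition of $\widetilde{\zeta}(\partial_t^2+h)f^{\psi(g)}_\infty$ really is independent of the approximating family and of the angle, so that the identification with $g$ on $\mathbb{R}_+$ is unambiguous. This independence is precisely what is recorded (via \cite[Theorem 25.1]{Doetsch}) in the paragraph following Definition \ref{DefFin}; once invoked, the argument above closes with no further calculation. I expect no other obstacles, since every step amounts to citing Theorem \ref{TeoSolZ}, Proposition \ref{lim}, and Remark \ref{EndRem}(1).
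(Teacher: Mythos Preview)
Your proposal is correct and follows essentially the same route as the paper's own proof: invoke the truncated equation $\zeta(\partial_t^2+h)f^{\psi(g)}_r=g^{\psi(g)}_r$, pass to the limit using both items of Proposition~\ref{lim}, and then identify $g^{\psi(g)}_\infty$ with $g$ on $\mathbb{R}_+$ via Remark~\ref{EndRem}(1). The paper's argument is organized slightly differently (it recalls the explicit integral formulas for $f^{\psi(g)}_\infty$, $g^{\psi(g)}_\infty$, $f^{\psi(g)}_r$, $g^{\psi(g)}_r$ before taking limits), but the logical content is the same.
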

\begin{proof} From Proposition \ref{lim}, we recall that
$$f^{\psi(g)}_{\infty}(z)= \int_{\kappa^{\psi(g)}}e^{sz}\dfrac{\mathcal{L}(g)(s)}{\zeta(s^2+h)}\dfrac{ds}{2\pi i}\; ,$$
and that there exist a function $g^{\psi(g)}_{\infty}$ given by
%we have
%the functions $f^{\psi(g)}_{\infty}$ and $g^{\psi(g)}_{\infty}$ are given by:
%
%$$f^{\psi(g)}_{\infty}(z)= \int_{\kappa^{\psi(g)}}e^{sz}\dfrac{\mathcal{L}(g)(s)}{\zeta(s^2+h)}\dfrac{ds}{2\pi i}\; ,$$
%and
$$g^{\psi(g)}_{\infty}(z)=\int_{\kappa^{\psi(g)}}e^{zs}\mathcal{L}(g)(s)\frac{ds}{2\pi i}\; .$$
on the domain $ Dom(f^{\psi(g)}_{\infty})$. The analytic function $g^{\psi(g)}_{\infty}$ extends the function $g$ defined in
principle on $\mathbb{R}_+$.

Furthermore, there exist explicit sequences $\{f^{\psi(g)}_r\}_{r>0}$ and $\{g^{\psi(g)}_r\}_{r>0}$ in $Exp(\Omega_{\frac{3\pi}{4}})$ given by:
$$f_{r}^{\psi(g)}(z)=\int_{\kappa^{\psi(g)}_r}e^{sz}\dfrac{\mathcal{L}(g)(s)}{\zeta(s^2+h)}\dfrac{ds}{2\pi i}\; ,$$
and
$$g_r^{\psi(g)}(z)=\int_{\kappa^{\psi(g)}_r}e^{sz}\mathcal{L}(g)(s)\dfrac{ds}{2\pi i}\; .$$
%and
%
%
These sequences, for each $r>0$, satisfy the following truncated equations on $Exp(\Omega_{\frac{3\pi}{4}})$

%Moreover, for each $r>0$ and on $Exp(\Omega_{\frac{3\pi}{4}})$, these sequences satisfy the following truncated equation
%\begin{equation*}
%\zeta(\partial_t^2+h)f_r=g_r\; .
%\end{equation*}
%That is, we have
\begin{equation}\label{FInFIn}
\zeta(\partial_t^2+h)f^{\psi(g)}_r=g^{\psi(g)}_r\; .
\end{equation}
Furthermore, in Proposition \ref{lim} we proved that on compact subsets of $ Dom(f^{\psi(g)}_{\infty})$, the following two uniform limits holds
%\begin{eqnarray*}
%\lim_{r\to \infty}f^{\psi(g)}_r=f^{\psi(g)}_{\infty}=
%\end{eqnarray*}
\begin{enumerate}
 \item [a).] $$\lim_{r\to \infty}g^{\psi(g)}_r(z)=g^{\psi(g)}_{\infty}(z)\; ,$$ %converge uniformly to $g_{\infty}$
%  $$=.$$
%
%and
 \item [b).] $$\lim_{r\to \infty}f^{\psi(g)}_r(z)= f^{\psi(g)}_{\infty}(z)\; .$$
%  converge uniformly to $$f^{\psi(g)}_{\infty}(z):=\int_{\kappa}e^{sz}\dfrac{\mathcal{L}(g)(s)}{\zeta(s^2+h)}\dfrac{ds}{2\pi i}.$$
\end{enumerate}
Therefore, taking limits in Equation (\ref{FInFIn}) and using items a) and b), the following equality hold (on $Dom(f^{\psi(g)}_{\infty})$)
%In fact, it is not difficult to see that $f_{\infty}$ satisfies
$$\lim_{r\to \infty} \zeta(\partial_t^2+h)f^{\psi(g)}_r(z)=\lim_{r\to \infty}g^{\psi(g)}_r(z)=g^{\psi(g)}_{\infty}(z)\; .$$
That is, on $Dom(f^{\psi(g)}_{\infty})$ we have
$$\widetilde{\zeta}(\partial_t^2+h)f^{\psi(g)}_{\infty}=g^{\psi(g)}\; . $$
In particular
$$\widetilde{\zeta}(\partial_t^2+h)f^{\psi(g)}_{\infty}(t)=g(t) \; \; in \; \; \mathbb{R}_+ \; .$$
\end{proof}

\section*{Appendix: Some Zeta-nonlocal scalar fields}

\subsection{Equations of motion}
Following Dragovich's work \cite{D}, we show how to deduce several mathematical interesting nonlocal scalar field
equations whose dynamics depends on the Riemann zeta function, Hurwitz-zeta function and also on a Dirichlet-Taylor series.
%nonlocal scalar fields.

%\subsection{Equations with the shifted Riemann zeta function as symbol}
%\subsection{Zeta nonlocal scalar fields}

Recall that, given a prime number $p$, the Lagrangian formulation of the open $p-$adic
string tachyon is
\begin{equation}\label{Zeq_03}
L_p=\dfrac{m_p^D}{g_p^2}\dfrac{p^2}{p-1}\big{(-}\dfrac{1}{2}\phi p^{-\square/(2m_p^2)}\phi+\dfrac{1}{p+1}\phi^{p+1}\big{)}\; ,
\end{equation}
where $\square$ is the D'Alembert operator defined by $\, \square:=-\partial_t^2+ \triangle_x$, in which $\triangle_x$ is
the Laplace operator and we are using metric signature $(-,+,\cdots,+)$, following \cite{D}. This Lagrangian is defined only
formally; as we have shown here, the terms appearing therein are well-defined in the
$1+0$ case, see also \cite{CPR_Laplace,CPR,GPR_CQG}.
The equation of motion for (\ref{Zeq_03}) is
$$p^{-\square/(2m_p^2)}\phi=\phi^p.$$
Dragovich has considered the model
$$L=\sum_{n=1}^{\infty}C_nL_n=
\sum_{n=1}^{\infty}C_n\dfrac{m_n^D}{g_n^2}\dfrac{n^2}{n-1}\big{(-}\dfrac{1}{2}\phi n^{-\square/(2m_n^2)}\phi +
\dfrac{1}{n+1}\phi^{n+1}\big{)}\; ,$$
in which all lagrangians $L_n$ given by (\ref{Zeq_03}) are taken into account. Explicit lagrangians $L$ depend on the choices of the coefficients $C_n$. Some particular cases are considered bellow.

\subsubsection{The Riemann zeta function as a symbol}
This is the case in \cite{D} and one of our main motivations. We recall once again that the Riemann zeta function is
defined by (see for instant \cite{KaVo})
$$\zeta(s):=\sum_{n=1}^{\infty}\dfrac{1}{n^s}\; , \; \; \quad Re(s)>1\; .$$
It is analytic on its domain of definition and it has an analytic extension to the whole complex plane with the exception of
the point $s=1$, at which it has a simple pole with residue $1$.

If we consider the explicit coefficient
$$C_n= \dfrac{n-1}{n^{2+h}}\, ,$$
in which $h$ is a real number, Dragovich's Lagrangian becomes
$$ L_h=\dfrac{m^D}{g^2}\left(-\dfrac{1}{2}\phi \sum_{n=1}^{\infty}n^{-\square/(2m_n^2)-h}\phi + \sum_{n=1}^{\infty}\dfrac{n^{-h}}{n+1}\phi^{n+1}\right)\, .$$
We write $L_h$ in terms of the zeta function and, in order to avoid convergence issues, we replace the nonlinear term
for an adequate analytic function $G(\phi)$. The Lagrangian $L_h$ becomes:
$$L_h= \dfrac{m^D}{g^2}\left( -\dfrac{1}{2}\phi \zeta(\dfrac{\square}{2m^2}+h)\phi + G(\phi) \right)\; .$$
The equation of motion is
$$\zeta(\dfrac{\square}{2m^2}+h)\phi=g(\phi)\; ,$$
in which $g = G'$.

\subsubsection{Dirichlet zeta function as symbol}

Let us consider $\chi$ a Dirichlet character modulo $m$ and let us define
$$C_n =\dfrac{\chi(n)(n-1)}{n^{2+h}}\; .$$
We recall that a $L$-Dirichlet series is of the following form:
$$L(s,\chi)=\sum_{n=1}^{\infty}\dfrac{\chi(n)}{n^s}$$
%
%Let us call $F$ the analytic continuation of $\sum_{n=1}^{\infty}\chi(n)n^{-h}\phi^{n}$.
following Dragovich's approach, we can consider the Lagrangian
$$L_h= \dfrac{m^D}{g^2}\left( -\dfrac{1}{2}\phi L(\dfrac{\square}{2m^2}+h,\chi)\phi + F(\varphi) \right) $$
and the corresponding equation of motion
\begin{equation}
L(\dfrac{\square}{2m^2}+h,\chi)\phi=f(\phi)\; ,
\end{equation}
in which $f = F'$.

\subsubsection{Almost periodic Dirichlet series as symbol}

Let $\{a_n\}$ be a sequence of complex numbers. A Dirichlet series is a series of the form
$$F(s):=\sum_{n=1}^{\infty}\dfrac{a_n}{n^s}\; .$$
Then, for a given sequence $\{a_n\}$, if we consider the coefficients
$$C_n=\dfrac{a_n(n-1)}{n^{2+h}},$$
we arrive at the following Lagrangian and equation of motion:
$$L_h= \dfrac{m^D}{g^2} \left( -\dfrac{1}{2}\phi F(\dfrac{\square}{2m^2}+h)\phi + D(\phi) \right)\; ,$$
\begin{equation}\label{eqDiSe}
F(\dfrac{\square}{2m^2}+h)\phi=d(\phi)\; ,
\end{equation}
in which $d = D'$.

A particular case of this equation is the equation with dynamics depending on Dirichlet series with {\sl almost periodic
coefficients}: following \cite{OKnill}, we consider a piecewise continuous, $1$-periodic and  $L^2$-function
$f: \mathbb{R}\to \mathbb{C}$ with Fourier expansion $f(x)= \sum_{k=-\infty}^{\infty}b_ke^{2\pi i kx}$; the particular
symbol of interest for equation (\ref{eqDiSe}) is the following {\sl almost periodic Dirichlet series}:

$$F_{\alpha}(s):= \sum_{n=1}^{\infty}\dfrac{f(n\alpha)}{n^s}\; .$$

\paragraph{Acknowledgments}

A.C. has been supported by PRONABEC (Ministerio de Educaci\'on, Per\'u) and FONDECYT
through grant \# 1161691; H.P. and E.G.R. have been
partially supported by the FONDECYT operating grants \# 1170571 and
\# 1161691 respectively.

\end{document}